\setlist[enumerate,1]{label=(\roman*)}
\newcommand{\R}{\mathbb{R}} % real numbers
\newcommand{\E}{\operatorname{\mathbb{E}}}
\newcommand{\Var}{\operatorname{Var}}
\newcommand{\by}{\mathbf{y}}
\newcommand{\bV}{\mathbf{V}}
\newcommand{\bY}{\mathbf{Y}}
\newcommand{\iid}{\textsc{iid}\xspace} % we can change iid symbol to whatever
\newcommand{\abs}[1]{\left\lvert #1 \right\rvert}
\renewcommand{\hat}{\widehat} % \widehat is nicer
\newcommand{\set}[1]{\left\{ #1 \right\}}
\newcommand{\diff}{\mathop{}\!\mathrm{d}}
\newcommand{\e}{\mathrm{e}}
\newcommand{\cI}{\mathcal{I}}
\newcommand{\cN}{\mathcal{N}}
\newcommand{\cT}{\mathcal{T}}
\newcommand{\cW}{\mathcal{W}}
\newcommand{\diag}{\operatorname{diag}} % diagonal matrix
\newcommand{\pto}{\xrightarrow{p}}
\newtheorem{theorem}{Theorem}
\newtheorem{algorithm}{Algorithm}
\newtheorem{condition}{Condition}
\newtheorem{corollary}{Corollary}
\newtheorem{definition}{Definition}
\newtheorem{lemma}{Lemma}
\newtheorem{proposition}{Proposition}
\newtheorem{asmp}{Assumption}
\begin{document}

\title{\Large Fixed-$k$ Tail Regression: New Evidence on\\ Tax and Wealth Inequality from Forbes 400}
%\title{\Large The Top Marginal Tax Rate and the Wealth Inequality: \\Evidence from Forbes 400} How about this title?
%\title{\Large Quantifying the Effect of Top Income Tax Rate on Wealth Inequality: A New Method with Evidence from Forbes 400} %Fine for econometrics papers, but I prefer the following title for AER. 
%\title{\Large Tax Progressivity and Wealth Inequality: \\Evidence from Forbes 400}
\author{\normalsize
Ji Hyung Lee\thanks{\footnotesize\setlength{\baselineskip}{4.4mm} 
Ji Hyung Lee: \href{mailto:jihyung@illinois.edu}{jihyung@illinois.edu}. Department of Economics, University of Illinois, 214 David Kinley Hall, 1407 West Gregory Dr, Urbana, IL 61801, USA}
\and\normalsize
Yuya Sasaki\thanks{\footnotesize\setlength{\baselineskip}{4.4mm} Yuya Sasaki: \href{mailto:yuya.sasaki@vanderbilt.edu}{yuya.sasaki@vanderbilt.edu}. Department of Economics, Vanderbilt University, VU Station B \#351819, 2301 Vanderbilt Pl, Nashville, TN 37235-1819, USA\smallskip}
\and\normalsize
Alexis Akira Toda\thanks{\footnotesize\setlength{\baselineskip}{4.4mm} Alexis Akira Toda: \href{mailto:atoda@ucsd.edu}{atoda@ucsd.edu}. Department of Economics, University of California San Diego, 9500 Gilman Dr, \#0508, La Jolla, CA 92093-0508, USA\smallskip}
\and\normalsize
Yulong Wang\thanks{\footnotesize\setlength{\baselineskip}{4.4mm} Yulong Wang: \href{mailto:ywang402@syr.edu}{ywang402@syr.edu}. Department of Economics, Syracuse University, 110 Eggers Hall, Syracuse, NY 13244-1020, USA}
}
\date{}
\maketitle

\begin{abstract}
%Using Forbes 400 data together with historical data on tax rates and macroeconomic indicators, we study the effects of the maximum marginal income tax rate on wealth inequality.
%To this end, we develop a novel fixed-$k$ tail regression method that accommodates the unique feature that the data are truncated from below at the 400th largest order statistic.
We develop a novel fixed-$k$ tail regression method that accommodates the unique feature in the Forbes 400 data that observations are truncated from below at the 400th largest order statistic.
Applying this method, we find that higher maximum marginal income tax rates induce higher wealth Pareto exponents. 
Setting the maximum tax rate to 30--40\% (as in U.S. currently) leads to a Pareto exponent of 1.5--1.8, while counterfactually setting it to 80\% (as suggested by \citealp{piketty2013capital}) would lead to a Pareto exponent of 2.6. 
We present a simple economic model that explains these findings and discuss the welfare implications of taxation. 
%Data used to study wealth inequality are often bottom-truncated.
%\textcolor{red}{Abstract needs to be updated}

%This paper proposes a novel tail regression method to estimate conditional tail index models with bottom-truncated data. 
%Unlike existing methods, our proposed method enjoys
%\begin{enumerate*}
%\item no parametric assumption on the underlying distribution,
%\item robustness against truncation and dependence among order statistics, and 
%\item validity under time-series dependence of macroeconomic control variables.
%\end{enumerate*}
%Applying it to Forbes 400 data, which is bottom-truncated at the 400th order statistic, we find that the maximum marginal income tax rates are significantly associated with wealth inequality. 

%The method is based on maximizing the asymptotic likelihood function and allows for data truncation from below. 
%Motivated by the wealth inequality literature in macroeconomics, 
%Applying the proposed method to the Forbes 400 data set, we find that higher maximum marginal income tax rates induce (\edits{``are associated with'', to avoid causal statements?}) heavier tails of the wealth distribution. 
%Counterfactural analysis, a structural macroeconomic model, and a simulation study are also provided.

\medskip
\noindent
\textbf{Keywords:} %Forbes 400, progressive tax, 
fixed-$k$, Pareto exponent, truncated tail regression, wealth inequality.

\medskip 
\noindent
\textbf{JEL Codes:} C13, D31, H24.

%\vfill 

\end{abstract}

\newpage
%%%%%%%%%%%%%%%%%%%%%%%%%%%%%%%%%%%%%%%%%%%%%%%%%%%%%%%%%%%%%%%%%%%%%%
\section{Introduction}\label{sec:introduction}
Wealth inequality and its relationship with income tax rates are central issues in public policy debates.
%The existing literature studies these issues with mainly quantitative models.
%This paper aims to provide a new econometric tool to empirically estimate the relationship between the tax rate and wealth inequality.
%\edits{Need to update this paragraph.}
Since the distribution of wealth, denoted by $Y$, exhibits a Pareto tail in data,\footnote{See, for instance, \cite{Pareto1896LaCourbe,Pareto1897Cours}, \cite{KlassBihamLevyMalcaiSolomon2006},  \cite{nirei-souma2007}, and \cite{Vermeulen2018}, among others. See \cite{gabaix2009}, \cite{IbragimovIbragimovWalden2015}, and \cite{Gabaix2016JEP} for comprehensive reviews of this literature.} 
the Pareto exponent has been the parameter of interest as a measure of inequality. A large theoretical literature has developed quantitative macroeconomic models that generate and explain Pareto tails in wealth.\footnote{See \citet{BenhabibBisinZhu2011}, \cite{Toda2014JET,Toda2019JME}, \cite{TodaWalsh2015JPE}, \citet{GabaixLasryLionsMoll2016}, \citet{nirei2016pareto}, \citet{AokiNirei2017}, \citet{CaoLuo2017}, \citet{JonesKim2018}, \cite{BenhabibBisinLuo2019}, \cite{MaStachurskiToda2020JET}, and \cite{deVriesTodaLIS}, among many others.} This literature shows that the Pareto exponent can be expressed as a function of macroeconomic fundamentals, denoted by $X$. 
However, a formal econometric method of estimation and inference for the effects of the fundamentals $X$ on the Pareto exponent, or more generally, the tail index of the distribution of $Y$, is arguably missing. 
The main contribution of this paper is to develop such an econometric method.
%Our paper contributes to this literature by developing an econometric method of estimation and inference for the effects of the fundamentals $X$ on the Pareto exponent, and more generally, the tail index of the distribution of $Y$.

Consider a repeated cross section $\set{ Y_{it} : i=1,\dots ,n, t=1,\dots ,T}$ of the values of wealth, where $i$ indexes individuals and $t$ indexes calendar years, along with a time series $\set{X_t}_{t=1}^T$ of macroeconomic fundamentals. 
%If this data set were fully observed, then the estimation problem would be relatively straightforward. 
%However, such a data set is rarely available due to confidentiality concerns. 
%Some researchers employ the Forbes 400 data that list the values of the net worth of the 400 wealthiest individuals in each year, i.e., the 400 largest order statistics of $\set{ Y_{it} : i=1,\dots ,n}$ at each $t$. 
Our paper is motivated by the Forbes 400 data set, which lists the values of the net worth for the 400 wealthiest individuals in each year, i.e., the 400 largest order statistics of $\set{ Y_{it} : i=1,\dots ,n}$ in each $t$. 
%In comparison, other commonly used data sets typically do not provide tail observations due to confidentiality concerns. 
In contrast to other commonly used data sets, which typically do not provide tail observations due to confidentiality concerns, the Forbes 400 data set is particularly suitable for studying tail inequality. 
Featured studies using this data set include \cite{KlassBihamLevyMalcaiSolomon2006}, \citet{KaplanRauh2013AER,KaplanRauh2013JEL}, \citet{Capehart2014}, \citet{GarleanuPanageas2017}, and \citet{Gomez2021}.
 
Although the Forbes data have been explored in the above literature, rigorous econometric model and method of estimation and inference for such data sets have been missing.
Researchers typically model the extreme values of wealth in the Forbes data as a random sample drawn from a Pareto distribution to conduct the maximum likelihood estimation (MLE; e.g., \citealp{Hill1975}) or through the log-log rank-size regression (e.g., \citealp{Gabaix2011rank}); see the literature review ahead for more discussions. 
However, these approaches are not suitable in our context for two reasons.
First, the existing approaches are typically based on the increasing-$k$ asymptotic framework, where $k$ defines the number of extreme values of wealth.    
Estimation of the tail index in this increasing-$k$ framework can asymptotically ignore the correlations among these tail observations. 
In contrast, $k$ is fixed at 400 in our context, and hence the correlation among the tail observations cannot be ignored even if $n$ is large. 
Second, the large sample theory typically assumes the data to be fully observed, whereas the Forbes data are truncated from below at the 400th order statistic.
To resolve these two issues, we adopt the recently developed fixed-$k$ asymptotic framework \citep[e.g.,][]{MuellerWang2017} and develop a new estimation method, which we refer to as the fixed-$k$ tail regression. 

For a fixed positive integer $k \ge 3$, our proposed method uses the largest $k$ order statistics $\set{ Y_{(i)t} : i=1,\dots ,k}$ from $\set{ Y_{it} : i=1,\dots ,n}$ along with macro explanatory variables $X_{t}$ at each $t$.
While the underlying sample size $n$ (reflecting the U.S. population) is assumed to diverge to infinity, we allow $k$ to be fixed, unlike the aforementioned existing approaches that require $k$ to diverge.
Thus, our approach models the data generating process where observations are bottom-truncated at a fixed number $k=400$ as in the Forbes data, resolving one of the two issues with existing approaches mentioned above.
We approximate the conditional joint distribution of the $k$ order statistics $\set{ Y_{(i)t} : i=1,\dots ,k}$ (after a location and scale normalization) given $X_{t}$ with a known distribution based on extreme value theory (EVT) in the limit as $n \to \infty$ for each $t$.
Our limiting conditional joint distribution accounts for statistical dependence among the $k$ order statistics, resolving the other issue with the existing MLE approaches mentioned above.
Even after allowing for such statistical dependence, our limiting distribution can still be characterized solely by the tail index $\xi$ that governs the tail heaviness of the distribution of $Y_{it}$. 
%\footnote{In the special case where $Y_{it}$ has a Pareto upper tail, $\xi$ is equal to the reciprocal of the Pareto exponent. That said, we allow the distribution of $Y_{it}$ to be nonparametric.}
Modeling $\xi$ as a function $\xi(X_{t})$ of macro control variables $X_t$, we conduct the MLE of $\xi(\cdot)$ and perform statistical inference and counterfactual analysis. 
To the best of our knowledge, this paper is the first to establish such a parametric conditional density approximation for a nonparametric class of conditional distributions -- we summarize this main auxiliary result as Lemma \ref{lemma:hellinger} in the appendix, which may be of independent interest to econometricians.

Using the proposed econometric method, we discover that the tail index or the Pareto exponent\footnote{Technically, the Pareto exponent is only defined when the underlying distribution is exactly Pareto. Given its common use in macroeconomics, we interchangeably refer to the Pareto exponent and the tail index in this paper when there is no confusion. } 
of the wealth distribution of the richest is significantly associated with the maximum marginal income tax rate. 
%significant effects (\edits{correlation}?) of the maximum marginal income tax rate on the tail index or the Pareto exponent of the wealth distribution of the richest.
Moreover, we further analyze counterfactual levels of economic (in)equality that would result under counterfactual policies that set various levels of the maximum marginal income tax rate.
Under low levels of the top tax rate, such as 30--40\% as is the case with the United States today, we find that the density of wealthy population entails Pareto exponents of 1.5--1.8 regardless of the current macroeconomic states.
On the other hand, under high levels of the top tax rate, such as 60\% as was the case with the United States before 1980, we find that the density of wealthy population would entail Pareto exponents of around 2.1 regardless of the current macroeconomic states. 
%In other words, the maximum marginal income tax rate needs to be raised to this level to ensure a finite second moment of the wealth distribution.
%(\edits{I suggest deleting this sentence--it's unclear what ``optimal'' means and we are just doing econometrics}) \citet[p.~660]{piketty2013capital} argues that ``the optimal top tax rate in the developed countries is probably above 80 percent.''
%Extrapolating our counterfactual analysis to this level of the maximum marginal income tax rate, we predict that our economy would achieve the Pareto exponent of 2.6 or above under the optimal policy suggested by \citeauthor{piketty2013capital}.

To explain these empirical findings, we present a simple dynamic general equilibrium model that features entrepreneurs that are subject to capital income risk and workers. We derive the equilibrium wealth Pareto exponent in closed-form and show that it depends only on idiosyncratic volatility, bankruptcy rate, and the tax rate, and that it is increasing in the tax rate, consistent with our empirical findings. We also derive the welfare of workers and entrepreneurs as a function of the tax rate and show that the worker welfare is decreasing in the tax rate (because high tax leads to low capital stock and wage), whereas the entrepreneur welfare is single-peaked (because high tax leads to low capital stock but provides insurance against capital income risk through loss deductions).

\paragraph{Literature}
Estimation and inference about tail index have spawned extensive literature in econometrics and statistics. 
We provide a limited discussion here. 
First, consider estimating the tail index of some underlying distribution $F_Y$ with a cross-sectional sample $\set{Y_1,Y_2,\dots,Y_n}$. 
The widely used estimator due to \citet{Hill1975} essentially fits the largest $k$ order statistics to the Pareto distribution and conducts the MLE. 
The asymptotic properties have been extensively explored by, for example, \citet{Hall1982}, \citet{Mason1982}, \citet{DavisResnick1984}, \citet{HaeuslerTeugels1985}, and \citet{HallWelsh1985}. 
Other estimators include, among many others, \citet{Pickands1975}, \citet{Smith1987MLE}, \citet{Gabaix2011rank}, and \citet{NicolauRodrigues2019}. 
See also \citet[Ch.~3]{deHaan2006book} for an overview. 
Second, Hill's estimator has been extended to numerous dependent and heterogeneous processes. 
Examples include \citet{Rootzen1990}, \citet{Hsing1991,Hsing1993}, and \citet{Resnick1995,Resnick1998}. 
More recently, \citet{Hill2010} introduces and studies the extremal versions of the classic mixingale and near epoch dependent conditions. 
These conditions cover a broad range of stochastic processes, including any geometrically mixing process. 
\citet{Hill2015} further studies filtered dependent data sets, which cover regression residuals, GARCH filters, and weighted sums based on an optimization problem such as optimal portfolio selection. 
Third, Hill's estimator has been extended to scenarios with data truncation. % from the right/top. 
The threshold for truncation can be a fixed constant or a latent random variable -- see, for example, \citet{Beirlant2001}, \citet{Aban2006}, \citet{Ndao2014}, \citet{Benchaira2016kernel,Benchaira2016tail}, \citet{Worms2016}, and \citet{WangXiao2021}. 
Unlike all these papers, our framework involves truncation based on the order $k$ (for order statistics) rather than a constant or random threshold.

Now, consider estimating the tail index of conditional distribution $F_{Y|X}$ given $X = x$ for some query point $x$. 
This problem is substantially more difficult than its unconditional counterpart since the tail index is in principle an unknown function of $X$. 
Given an independent sampling of $(Y,X)$, \citet{WangTsai2009} and \citet{WangLi2013} assume that the tail index conditional on $X=x$ takes the single index form $\exp(x^\intercal \theta_0)$ for some pseudo-true parameter $\theta_0$ and develop regression-type estimators for $\theta_0$. 
\citet{Gardes2008} and \citet{Gardes2012} develop nonparametric kernel-based estimators for the conditional tail index by using the observations with $X_i$ within a local neighborhood of the query point $x$. 
Recently, \citet{LiLengYou2022} develop a semiparametric estimator of the tail index, which unifies these two types of estimators. 

In summary, the existing estimation methods in the above-mentioned papers all rely on the so-called increasing-$k$ asymptotics, where $k \to \infty$ and $k/n \to 0$ as $n \rightarrow \infty$. 
These two conditions on $k$ indicate a delicate balance in controlling the bias and the variance. 
In practice, $k$ is considered as a tuning parameter, whose optimal choice is of the order $O(n^{-\kappa})$ for some $\kappa \in (0,1/2)$ \citep[e.g.,][]{Hall1982}. 
In contrast, our context involves a fixed $k$ at 400 while $n$ is gigantic (over a hundred million). 
This key difference motivates us to adopt the fixed-$k$ asymptotic framework, which is proposed by \citet{MuellerWang2017} and further studied by \citet{sasaki2020testing,SasakiWang2022}. 
In a broad sense, such asymptotic framework with a fixed tuning parameter has been explored in other fields of econometrics \citep[e.g.,][]{KieferVogelsang2005, CattaneoCrumpJansson2014, Bollerslev2021}.

Consistent estimation of the tail index is out of the question when $k$ is fixed, and hence all these papers on the fixed-$k$ approaches \citep{MuellerWang2017,sasaki2020testing,SasakiWang2022} focus on asymptotic inference without estimation. 
We sidestep this issue by exploiting the time-series variation in $t=1,\dots,T$ with $T \rightarrow \infty$ so that the tail index can still be consistently estimated.  
Of course, our method requires additional data set beyond a cross-sectional sample of $(Y,X)$. 
Specifically, we observe extremal order statistics $\set{Y_{(i)t}}_{i=1}^k$ for each time $t$, and $X_t$ is a time series that is common across all $i$ for each $t$. 
Our proposed method is motivated by this peculiar sampling process, but is not limited to it. 
We discuss potential extensions to study firm size, birthweight, medical insurance claim, and earnings risk in the concluding remarks. 

Finally, also related is the literature about tail features with panel or repeated cross-sectional data sets, which contains relatively few econometric methods. 
\citet{HillShneyerov2013} propose a tail index estimator for an unbalanced panel for first-price auction data. 
\citet{KellyJiang2014} apply Hill's estimator to study the tail heaviness of cross-sectional stock returns in each month. 
\citet{SasakiWang2022} propose confidence intervals for the conditional extreme quantiles of $Y_{it}$ given $X_{it}$ in panel or repeated cross-sectional data sets. 
\citet{Sarpietro2022} extend Hill's estimator and study conditional earnings risks. 
None of these papers consider modeling the tail index as a function of covariates and hence their proposed methods cannot be directly applied to our context.   

%Note that the existing nonparametric methods \citep[e.g.,][]{Martins-Filho2015nonpara, Martins-Filho2018nonpara} model tail index to be constant in controls.
%typically admit a location and scale representation, which implicitly implies a constant tail index. 
% Our MLE complements these methods by allowing the tail index, as well as the location and scale, to vary with $X_{t}$. 

\paragraph{Organization}
We present the econometric method and theory in Section \ref{sec:fixedk}.
This is followed by a data description and an empirical analysis in Sections \ref{sec:data} and \ref{sec:results}, respectively. 
Section \ref{sec:model} contains a macroeconomic model, and Section \ref{sec:summary} concludes.
%The mathematical proofs are found in the appendix.
%Additional theoretical and estimation results as well as simulations are presented in the supplementary material.
Mathematical proofs, additional theoretical and estimation results as well as simulations are presented in the Appendix.

%%%%%%%%%%%%%%%%%%%%%%%%%%%%%%%%%%%%%%%%%%%%%%%%%%%%%%%%%%%%%%%%%%%%%%
\section{The fixed-$k$ tail regression}\label{sec:fixedk}
%%%%%%%%%%%%%%%%%%%%%%%%%%%%%%%%%%%%%%%%%%%%%%%%%%%%%%%%%%%%%%%%%%%%%%

In this section, we propose a tail regression to model the dependence of the tail index (the reciprocal of the Pareto exponent) on controls, and develop a method for estimating its model parameters along with their standard errors.
Consequently, we may conduct inference on the partial effect of the top marginal tax rate on the tail index and inference on counterfactual Pareto exponents under alternative levels of the top marginal tax rate that are considered by policy makers.

\subsection{Basic setup}

Consider a repeated cross section $\set{ Y_{it} \in \R : i=1,\dots ,n, t=1,\dots ,T}$ of the values $Y_{it}$ of wealth, where $i$ indexes individuals and $t$ indexes calendar years.
For our data, the year 1982 is set as $t=1$ and there are $T=36$ periods until 2017. 
This repeated cross section may be considered to consist of the universe of all the individuals in the population, but our econometric method to be presented below requires us to observe only the top $k$ values of wealth in each year $t$ for some predetermined natural number $k$.
For our sample of Forbes 400, we can set $k$ to be any natural number that is smaller than or equal to $400$ (but at least 3 for methodological reasons).
As such, we only need to observe the top $k$ observations of $Y_{it}$, and do not need to observe $Y_{it}$ for the rest of the population.\footnote{For now we suppose that $Y_{it}$ are observed without errors. However, as we show in Appendix \ref{sec:robust}, our estimation and inference method is valid under measurement errors, which are likely to arise in the Forbes 400 data set.}

In addition to the repeated cross section of the values of wealth, we also observe a time series $\set{X_t \in \R^p : t = 1, \dots, T}$ of a vector consisting of policy factors and macroeconomic indices such that $X_t$ may affect the right tail of the cross sectional distribution of $\set{Y_{it} \in \R : i = 1,\dots,n}$.
For our empirical question, $X_t$ includes the averages of $\Delta t$-year lags of the maximum marginal income tax rate, idiosyncratic volatility, and bankruptcy rate for various values, where $\Delta t \in \set{3,5,7,9}$.\footnote{This choice of controls is motivated by the economic model to be presented in Section \ref{sec:model}.}
In this setup, we are interested in the right tail behavior of the conditional distribution $F_{Y_{it}|X_t}$ of $Y_{it}$ given $X_t$.
Specifically, when $F_{Y_{it}|X_t=x}$ is approximately Pareto in the right tail, the Pareto exponent should be considered as a function of $x$, that is, $\alpha = \alpha(x)$.  %we are interested in the tail index $\xi(x)$ of the conditional distribution $F_{Y_{it}|X_t=x}$ of $Y_{it}$ given $X_t=x$.
This feature tells us how the policy and macroeconomic predictors $X_t$ may (or may not) explain the heaviness of the right tail of the wealth distribution.

For notational simplicity, we introduce the reciprocal of the Pareto exponent $\xi(x) \coloneqq 1/\alpha(x)$, where $\xi(x)$ is referred to as the tail index (conditional on $X_t=x$).
To model the effect of $X_t$ on $\xi(\cdot)$, we assume that the tail index takes the following single-index form 
\begin{equation}\label{eq:tail_regression}
\xi(x)=\Lambda (x^\intercal \theta_0) \quad \text{for some true parameter $\theta_0\in \Theta \subset \R^p$,}
\end{equation}
where $\Lambda (\cdot )$ is some link function chosen by the econometrician.
We use the standard logistic link such that $\Lambda (s) = 1/(1+\exp(-s))$ in our analysis to guarantee that $\xi(x) \in (0,1)$ for all $x$. 
Other choices such as the Gaussian link can also be used and accordingly lead different pseudo-true values of $\theta_0$.     
By construction, we have $\alpha(X_t)=1/\xi(X_t) > 1$, which is consistent with the theoretical formula \eqref{eq:PE} to be presented in Section \ref{sec:model}, as well as our empirical findings.
Within this framework, our problem of learning about the tail index $\xi(x)$ as a function of $x$ boils down to learning about the parameter vector $\theta_0$.
To this goal, we propose a method to estimate $\theta_0$.
With an estimate $\hat\theta$ of $\theta_0$ along with its standard errors, we can conduct statistical inference about which of the policy and macroeconomic factors $X_t$ explain the heaviness of the right tail of the wealth distribution.
Furthermore, we can also make predictions about the right tail behavior $\hat\xi(x) = \Lambda(x^\intercal \hat\theta_0)$ of the wealth distribution under counterfactual values of $x$, such as counterfactual policies of maximum marginal income tax rates.

\subsection{Econometric method}\label{subsec:method}

Not surprisingly, a na\"ive regression analysis will not estimate $\theta_0$ for this non-standard formulation of the tail regression.
We therefore propose the following econometric method of estimation and inference about $\theta_0$. 
Let
\begin{align}
&f_{\bV^*|\Lambda (X_t^\intercal \theta)}(1,v_{2}^*,\dots, v_{k-1}^*,0) \notag \\
&=\Gamma( k) \int_0^{\infty }s^{k-2}\exp \left( -\left(1+\frac{1}{\Lambda (X_t^\intercal \theta)}\right)\sum_{j=1}^{k}\log \left(1+\Lambda (X_t^\intercal \theta)v_j^*s\right)\right) \diff s,
\label{fvstar}
\end{align}
where $\Gamma(\cdot)$ denotes the Gamma function.\footnote{This density can be numerically computed by standard methods such as Gaussian quadrature and the trapezoidal rule.} 
Let 
\begin{equation*}
\mathcal{I}(\theta_0) =\left. \mathbb{E}\left[ \frac{\partial
\log f_{\mathbf{V}^*|\Lambda \left( X_{t}^{\intercal }\theta \right)
}\left( \mathbf{V}^*\right) }{\partial \theta }\frac{\partial \log f_{%
\mathbf{V}^*|\Lambda \left( X_{t}^{\intercal }\theta \right) }\left( 
\mathbf{V}^*\right) }{\partial \theta }\right] \right\vert _{\theta
=\theta _{0}}
\end{equation*}%
denote the Fisher information matrix. 
We also write $s_{t}\left( \theta \right)=\log f_{\mathbf{V}^*|\Lambda \left( X_{t}^{\intercal }\theta \right)
}\left( \mathbf{Y}_{t}^*\right) $ and $\dot{s}_{t}\left( \theta
\right) =\partial s_{t}\left( \theta \right) /\partial \theta $.
Finally, let 
\begin{equation*}
\mathcal{W}(\theta_0) =\lim_{T\rightarrow \infty }\Var\left[
T^{1/2}\sum_{t=1}^{T}\dot{s}_{t}(\theta_0) \right]
\end{equation*}%
denote the long-run variance-covariance matrix.

\begin{algorithm}[Fixed-$k$ Tail Regression]\label{alg:regression}
Let $Y_{it}$ denote the level of wealth of individual $i$ in year $t$.
Let $X_t$ denote the vector of controls in year $t$.
\begin{enumerate}[1.]
	\item Sort $\set{Y_{1t},\dots ,Y_{nt}}$ in descending order as $Y_{(1)t}\ge Y_{(2)t}\ge \dots \ge Y_{(n)t}$ for each year $t$ and take the largest $k\ge 3$ order statistics $\bY_t=\left( Y_{(1)t},Y_{(2)t},\dots ,Y_{(k)t}\right)^\intercal$.
	\item Form the self-normalized statistics
\begin{equation}
\bY_t^* =\left(1, \frac{Y_{(2)t}-Y_{(k)t}}{Y_{(1)t}-Y_{(k)t}},\dots ,\frac{Y_{(k-1)t}-Y_{(k)t}}{Y_{(1)t}-Y_{(k)t}},0\right).\label{eq:self_normalized}
\end{equation}
	\item Estimate $\theta_0$ by the maximum likelihood estimator (MLE) $\hat\theta = \arg\min_{\theta \in \Theta} S_T(\theta)$, where
\begin{equation}\label{eq:S}
S_{T}(\theta)=-\frac{1}{T}\sum_{t=1}^{T}\log f_{\bV^*|\Lambda (X_t^\intercal \theta )}(\bY_t^*).
\end{equation}
	\item  Estimate the standard error of $\hat{\theta}$ by $\diag\left( T^{-1} \hat{\Sigma} (\theta_0)\right)^{1/2}$, where
\begin{equation*}
\hat{\Sigma} (\theta_0) = \mathcal{\hat{I}}(\theta _{0}) ^{-1}\mathcal{\hat{W}}(\theta_0) \mathcal{\hat{I}}(\theta_0) ^{-1}, 
\end{equation*}	
\begin{equation*}
\mathcal{\hat{I}}(\theta_0) =\frac{1}{T}\sum_{t=1}^{T}\left. 
\frac{\partial \log f_{\mathbf{V}^*|\Lambda( X_{t}^{\intercal
}\theta) }(\mathbf{Y}_{t}^*) }{\partial \theta }%
\frac{\partial \log f_{\mathbf{V}^*|\Lambda( X_{t}^{\intercal}\theta) }(\mathbf{Y}_{t}^*) }{\partial \theta }%
\right\vert _{\theta =\hat{\theta}},
\end{equation*}
and $\mathcal{\hat{W}}(\theta_0) $ denotes some heteroskedasticity and autocorrelation consistent (HAC) estimator of the long-run variance -- see a concrete suggestion below.
\end{enumerate}
\end{algorithm}

\noindent
Discussions of the density \eqref{fvstar} and formal asymptotic statistical theories that guarantee that this proposed method works are presented in Section \ref{subsec:theory} -- see Theorem \ref{thm:asym} in particular. 
In our empirical analysis, we use the popular estimator proposed by \citet{NeweyWest1987} for $\mathcal{\hat{W}}(\theta_0) $. 
Let 
\begin{equation*}
\hat{\Omega}_{j}=\frac{1}{T-p}\sum_{t=j+1}^{T}\dot{s}_{t}(\hat{\theta}) \dot{s}_{t-j}(\hat{\theta}) ^{\intercal }
\end{equation*}%
for $j=0,1,2,\dots,T-1$. Then, the Newey-West estimator is constructed as%
\begin{equation*}
\mathcal{\hat{W}}(\theta_0) =\hat{\Omega}_{0}+%
\sum_{j=1}^{L}w\left( j,L\right) \left[ \hat{\Omega}_{j}+\hat{\Omega}%
_{j}^{\intercal }\right] ,
\end{equation*}%
where $w\left( j,L\right) =1-\frac{j}{L+1}$. 
The truncation parameter $L$ is chosen as $3 \approx 0.75T^{1/3}$ following \citet{Andrews1991}.

Once the estimate $\hat\theta$ of the parameter vector $\theta_0$ is obtained, we can in turn estimate the marginal effect
$
\partial \xi(X_t) / \partial X_{jt}
=
\theta_{0j} \Lambda'(X_t^\intercal\theta_0)
$
of the $j$-th predictor $X_{jt}$ (e.g., the maximum marginal tax rate) on the tail index $\xi(X_t^\intercal\theta_0)$ in year $t$ by the plug-in counterpart
\begin{equation}\label{eq:marginal}
\widehat{\partial \xi(X_t) / \partial X_{jt}}
=
\hat\theta_j \Lambda'(X_t^\intercal\hat\theta).
\end{equation}
Its standard error can be obtained by the delta method provided that the link function $\Lambda$ is twice continuously differentiable in a neighborhood of $X_t^\intercal\theta_0$ -- see Corollary \ref{corollary:marginal} in Section \ref{subsec:theory}.
Similarly, letting $X_t^\mathrm{cf}$ denote a counterfactual of $X_t$ (e.g., by replacing the actual maximum marginal tax rate by a counterfactual rate under consideration by the policy maker), we can estimate the counterfactual value of the tail index $\xi(X_t^\mathrm{cf})$ by
\begin{equation}\label{eq:counterfactual}
\widehat{\xi(X_t^\mathrm{cf})}
=
\Lambda((X_t^\mathrm{cf})^\intercal\hat\theta).
\end{equation}
Its standard error can be obtained, again, by the delta method provided that the link function $\Lambda$ is continuously differentiable in a neighborhood of $(X_t^\mathrm{cf})^\intercal\theta_0$ -- see Corollary \ref{corollary:level} in Section \ref{subsec:theory}.
When one is interested in the Pareto exponent as opposed to the tail index, its counterfactual value $\alpha(X_t^\mathrm{cf}) = 1/\xi(X_t^\mathrm{cf})$ can be estimated by
\begin{equation}\label{eq:counterfactual_pareto}
\widehat{\alpha(X_t^\mathrm{cf})}
=
1/\Lambda((X_t^\mathrm{cf})^\intercal\hat\theta).
\end{equation}
See Corollary \ref{corollary:pareto} in Section \ref{subsec:theory} for its standard error.

%%%%%%%%%%%%%%%%%%%%%%%%%%%%%%%%%%%%%%%%%%%%%%%%%%%%%%%%%%%%%%%%%%%%%%
%\section{Econometric theory of the tail regression}\label{subsec:theory}
%%%%%%%%%%%%%%%%%%%%%%%%%%%%%%%%%%%%%%%%%%%%%%%%%%%%%%%%%%%%%%%%%%%%%%

\subsection{Econometric theory}\label{subsec:theory}

In this section, we present the asymptotic theory to guarantee that the method proposed in Section \ref{subsec:method} works in large sample.
We adopt the standard definition of $\alpha$-mixing as follows.
\begin{definition}
	A sequence $(X_{1}, X_2, X_3,\dotsc)$ is said to be $\alpha$-mixing if
	 \begin{equation*}
		\tilde{\alpha}(h)=\sup_{k\ge1}\sup_{\mathcal{A}\in \mathcal{F}_{1}^{k}, \mathcal{B}\in \mathcal{F}_{k+h}^{\infty}}\abs{\mathcal{P}(\mathcal{A}\cap \mathcal{B})-\mathcal{P}(\mathcal{A})\mathcal{P}(\mathcal{B}) } \to 0, \ as \ h \to \infty
	\end{equation*}
where $\mathcal{F}_{i}^{j}$ denotes the $\sigma$-field generated by $(X_{i}, X_{i+1},\dots, X_{j})$. 
\end{definition}
Letting $\alpha (x)=1/\xi (x)$, we assume that our data set conforms with the following four sets of conditions.
\begin{condition}\label{cond1}
 $X_{t}$ is strictly stationary and $\alpha $-mixing with the mixing coefficients satisfying that $\sum_{h=1}^{\infty }h^2 \tilde{\alpha} \left(h\right) ^{1/3}<\infty $. 
 Conditional on $X_{t}=x$, $\{Y_{it}\}_{i=1}^{n}$ are \iid draws from some distribution $F_{Y|X=x}(y)$. 
\end{condition}

\begin{condition}\label{cond2}
$F_{Y|X=x}(y)$ satisfies 
\begin{equation*}
1-F_{Y|X=x}(y)=c(x)y^{-\alpha (x)}(1+d(x)(y)^{-\beta(x)}+r(x,y)),
\end{equation*}
where $c(\cdot)>0$ and $d(\cdot)$ are uniformly bounded
between $0$ and $\infty$ and continuously differentiable with uniformly
bounded derivatives, $\alpha (\cdot)>1$ and $\beta (\cdot)>0$ are continuously differentiable functions, and $r(x,y)$
is continuously differentiable with bounded derivatives with respect to both $x$ and 
$y$ with
\begin{equation*}
\limsup_{y\to \infty}\sup_{x}\abs{r(x,y)y^{\beta
(x)}} \to 0.
\end{equation*} 

\end{condition}

\begin{condition}\label{cond3}
$\theta_0$ is in the interior of $\Theta$, a compact subset of $\R^p$ and $X_{t} \in \mathcal{X}$, a compact subset of $\R^p$. Moreover, $\Lambda$ is continuously differentiable with uniformly bounded first derivative and satisfies that $0<\inf_{(x,\theta) \in \mathcal{X}\times\Theta}\Lambda(x^{\intercal}\theta) \le \sup_{(x,\theta) \in \mathcal{X}\times\Theta} \Lambda(x^{\intercal}\theta)<1$, and the Fisher information matrix 
\begin{equation*}
\mathcal{I}(\theta_0) =\E\left[ \frac{\partial \log  f_{\bV^*|\Lambda(X_{t}^{\intercal }\theta_0) }(\bV^*) }{\partial \theta }\frac{\partial \log f_{\bV^*|\Lambda (X_{t}^{\intercal }\theta _{0}) }( \bV^*) }{\partial \theta ^{\intercal }}\right] 
\end{equation*}
is positive definite. 
\end{condition}

\begin{condition}\label{cond4}
$n \to \infty$ and $T\sup_{x}n^{-\beta(x) /\alpha (x) }\to 0$ as $T\to \infty$.
\end{condition}

%We provide some discussions about these conditions. 
Condition \ref{cond1} assumes that the vector of macroeconomic indicators $X_t=x$ is a stationary and mixing time series, and that individual wealth is randomly drawn from the population wealth distribution $F_{Y|X=x}(\cdot)$ conditionally on $X_t=x$. 
The $\alpha $-mixing condition can be replaced with many other weak dependence structures \citep[cf.][]{Bradley2007}. 
Condition \ref{cond2} assumes that such conditional wealth distribution has an \textit{approximate} Pareto tail with the approximation error characterized by the second-order parameter $\beta(x)$ and the remainder function $r(x,y)$.
The unconditional version of this condition has been commonly used in the statistics literature to study tail features: see, for example, \citet{Hall1982}.
This condition is mild and satisfied by many commonly used distributions, including, for example, joint Student-t distributions. 
See the discussion in \citet{SasakiWang2022}.  
The condition that $\alpha (\cdot)>1$ is due to our choice of the logistic link function and is coherent with our empirical findings since the mean of the wealth distribution is found to be finite. 
We can relax it to $\alpha (\cdot)>0$ by choosing other link functions such as $\Lambda(s) = \exp(s)$.  
Condition \ref{cond3} requires that the true parameter is not on the boundary of the parameter space and the covariates have a bounded support. 
Also, it requires that the space of the tail index $\xi$ is strictly within $(0,1)$ and the Fisher information matrix is invertible. 
These conditions are used to identify the pseudo-true parameter $\theta_0$, which maximizes the limit of the criteria function $S_T (\theta)$. 
%The compact support assumption is sufficient but not necessary.
%It is imposed to simplify the proof and plausibly satisfied by our data set, especially in our sampling periods. 
%Note that Conditions \ref{cond2} and \ref{cond3} are mild and satisfied by many commonly used distributions, including, for example, joint normal and joint Student-t distributions. 
%See the discussion in \citet{SasakiWang2022}. 
Condition \ref{cond4} assumes a large $n$ and large $T$ asymptotic framework and requires that the magnitude of $T$ is not too large relative to $n$. 
For technical simplicty, we consider $n=n_T$ as a function of $T$ so that the process $(\bY^{*}_t, X_t)$ is a mixing triangular array. 
%Then the standard law of large numbers and central limit theorys can be applied. 
Since $n$ is gigantic in practice, we find through Monte Carlo studies that $T=36$ as in our Forbes 400 data set is sufficient to guarantee the good performance of our estimator.
See Appendix \ref{sec:simulation} ahead for more details.   

Recall that in Algorithm \ref{alg:regression} we sort $\set{Y_{1t},\dots ,Y_{nt}}$ in descending order as $Y_{(1)t}\ge Y_{(2)t}\ge \dots \ge Y_{(n)t}$ and take the largest $k$ order statistics 
$\bY_t=( Y_{(1)t},Y_{(2)t},\dots ,Y_{(k)t})
^\intercal.$
By Conditions \ref{cond1} and \ref{cond2}, $\{Y_{it}\}_{i=1}^n$ are \iid across $i$ conditional on $X_t$ at each $t$. 
Then extreme value theory \citep[e.g.,][Theorem~2.1.1]{deHaan2006book} implies that there exist sequences of constants $a_n$ and $b_n$ such that conditional on $X_t$
for each $t$ and any \emph{fixed} $k,$
\begin{equation}\label{eq:limit_dist}
\frac{\bY_t-b_n}{a_n}\Rightarrow \bV_t
\end{equation}
as $n\to \infty$, where $\bV_t=\left( V_{1t},V_{2t},\dots ,V_{kt}\right) ^{\intercal }$ is jointly EV distributed with PDF 
\begin{equation}
f_{\bV|\xi }(v_{1},\dots ,v_k)=G_{\xi}(v_k)\prod_{j=1}^{k}g_{\xi }(v_j)/G_{\xi }(v_j) \label{fv}
\end{equation}
with $g_{\xi }(v)=\partial G_{\xi }(v)/\partial v$ and
\begin{equation*}
G_{\xi }(v)=\begin{cases*}
\exp (-(1+\xi v)^{-1/\xi }) & if $\xi \neq 0$ and $1+\xi v\ge 0$,\\
\exp (-\e^{-v}) & if $\xi = 0$ and $v\in \R$.
\end{cases*}
\end{equation*}
%The vector $\bV_t$ is strictly stationary and weakly dependent as implied by Condition \ref{cond1} \textcolor{red}{(we may need to show this)}. 
The tail index parameter $\xi$ characterizes the tail shape of $Y_{it}$, and it is a function $\xi(X_t)=\Lambda (X_t^\intercal \theta_0)$  of $X_t$.

The constants $a_n$ and $b_n$ depend on the unknown distribution of $Y_t$ and are difficult to estimate. This is why we consider the self-normalized statistics \eqref{eq:self_normalized} to eliminate them. By the variable transformation, we obtain
\begin{align*}
\bY_t^*|X_t =& \left. \left( 1,\frac{Y_{(2)t}-Y_{(k)t}}{Y_{(1)t}-Y_{(k)t}},\dots ,\frac{Y_{(k-1)t}-Y_{(k)t}}{Y_{(1)t}-Y_{(k)t}},0\right) \middle| X_t \right.\\
\Rightarrow \bV_t^* 
\coloneqq &\left( 1,\frac{V_{2t}-V_{kt}}{V_{1t}-V_{kt}},\dots ,\frac{V_{k-1,t}-V_{kt}}{V_{1t}-V_{kt}},0\right),
\end{align*}
where the PDF $f_{\bV_t^*|\xi(X_t)}$ of the non-degenerate coordinates of $\bV^*$ can be calculated as in \eqref{fvstar}.

Define the negative log-transformed likelihood function as in \eqref{eq:S}. 
Minimizing $S_{T}(\theta) $ over $\theta $ results in the
approximate MLE $\hat{\theta}=\arg \min_{\theta \in \Theta }S_{T}(\theta) $. The following theorem establishes the asymptotic normality
of $\hat{\theta}$.

\begin{theorem}\label{thm:asym}
Suppose that Conditions \ref{cond1}-\ref{cond4} hold. Then as $T\to \infty$ and $n\to \infty$, for any fixed $k$, 
\begin{equation*}
T^{1/2}(\hat{\theta}-\theta_0) \Rightarrow \cN\left(
0,\Sigma (\theta_0) \right) ,
\end{equation*}%
where 
\begin{equation*}
\Sigma \left( \theta_0\right) = \cI(\theta_0)^{-1}\mathcal{W}(\theta_0)\cI(\theta_0)^{-1}. 
\end{equation*}%

\end{theorem}

Theorem \ref{thm:asym} shows that the econometric method proposed in Section \ref{subsec:method} is guaranteed to work in large sample. Applying this theorem with the delta method to the marginal effect \eqref{eq:marginal}, we obtain the following corollary.

\begin{corollary}\label{corollary:marginal}
Suppose that Conditions \ref{cond1}-\ref{cond4} hold. 
If $\Lambda$ is twice continuously differentiable, then as $T\to
\infty$ and $n\to \infty$, for any fixed $k$,
\begin{equation*}
T^{1/2}\left(\widehat{\partial \xi(x) / \partial x_j} - \partial \xi(x) / \partial x_j\right)
\Rightarrow \cN\left(0,\Xi_j(x)^\intercal \Sigma(\theta_0) \Xi_j(x) \right) ,
\end{equation*}
where
%$
%\Xi_j(x) 
%= 
%\left(x_j \theta_{0j} \Lambda''(x^\intercal\theta_0) + \Lambda'(x^\intercal\theta_0)\right) e_j
%$
$
\Xi_j(x) 
= 
x \theta_{0j} \Lambda''(x^\intercal\theta_0) + e_j \Lambda'(x^\intercal\theta_0)
$
with $e_j$ denoting the $j$-th standard unit vector.
\end{corollary}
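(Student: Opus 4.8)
The plan is to derive Corollary \ref{corollary:marginal} as a direct application of the delta method to Theorem \ref{thm:asym}. The object of interest is $\widehat{\partial \xi(x)/\partial x_j} = \hat\theta_j \Lambda'(x^\intercal \hat\theta)$, which we view as a smooth transformation $\phi_j(\theta) \coloneqq \theta_j \Lambda'(x^\intercal \theta)$ of the parameter, evaluated at $\hat\theta$, with $x$ held fixed. First I would verify that $\phi_j$ is continuously differentiable in a neighborhood of $\theta_0$: this requires $\Lambda$ to be twice continuously differentiable, which is exactly the hypothesis added relative to Theorem \ref{thm:asym}, since $\Lambda''$ appears in the gradient.

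Next I would compute the gradient $\nabla_\theta \phi_j(\theta)$ explicitly. By the product rule, $\partial \phi_j / \partial \theta_\ell = \delta_{j\ell} \Lambda'(x^\intercal\theta) + \theta_j \Lambda''(x^\intercal\theta) x_\ell$, so in vector form $\nabla_\theta \phi_j(\theta) = e_j \Lambda'(x^\intercal\theta) + \theta_j \Lambda''(x^\intercal\theta)\, x$. Evaluating at $\theta_0$ gives precisely $\Xi_j(x) = x\,\theta_{0j}\Lambda''(x^\intercal\theta_0) + e_j \Lambda'(x^\intercal\theta_0)$, matching the statement. Then the delta method (e.g., van der Vaart, Theorem 3.1) applied to the conclusion $T^{1/2}(\hat\theta - \theta_0) \Rightarrow \cN(0, \cI(\theta_0)^{-1})$ of Theorem \ref{thm:asym} yields
\begin{equation*}
T^{1/2}\big(\phi_j(\hat\theta) - \phi_j(\theta_0)\big) \Rightarrow \cN\big(0,\ \Xi_j(x)^\intercal \cI(\theta_0)^{-1} \Xi_j(x)\big),
\end{equation*}
and since $\phi_j(\hat\theta) = \widehat{\partial\xi(x)/\partial x_j}$ and $\phi_j(\theta_0) = \theta_{0j}\Lambda'(x^\intercal\theta_0) = \partial\xi(x)/\partial x_j$ by \eqref{eq:tail_regression} and \eqref{eq:marginal}, the claim follows.

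There is no substantive obstacle here; the corollary is essentially a bookkeeping exercise once Theorem \ref{thm:asym} is in hand. The only points requiring a word of care are: (i) confirming that $\hat\theta$ lies in the neighborhood of $\theta_0$ where $\phi_j$ is differentiable with probability approaching one, which follows from the consistency of $\hat\theta$ implicit in Theorem \ref{thm:asym} together with $\theta_0$ being interior (Condition \ref{cond3}); and (ii) noting that $x$ is a fixed evaluation point rather than the random $X_t$, so no additional stochastic terms enter the asymptotic variance — the randomness is entirely inherited from $\hat\theta$. I would also remark that the case $\Xi_j(x)^\intercal \cI(\theta_0)^{-1}\Xi_j(x) = 0$ is non-generic (it requires $\Xi_j(x) = 0$ since $\cI(\theta_0)$ is positive definite by the identification implicit in the information matrix being invertible), so the limiting law is non-degenerate for generic $x$ and $\theta_0$.
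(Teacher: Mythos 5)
Your proposal is correct and coincides with the paper's own argument: the paper derives Corollary \ref{corollary:marginal} by applying the delta method to Theorem \ref{thm:asym} with exactly the map $\theta \mapsto \theta_j \Lambda'(x^\intercal\theta)$, whose gradient at $\theta_0$ is the stated $\Xi_j(x)$. Your additional remarks on consistency, the fixed evaluation point $x$, and non-degeneracy are sound but not needed beyond what the paper records.
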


\noindent
Similarly, applying the theorem with the delta method to the counterfactual level \eqref{eq:counterfactual}, we obtain the following corollary.

\begin{corollary}\label{corollary:level}
Suppose that Conditions \ref{cond1}-\ref{cond4} hold. 
As $T\to
\infty$ and $n\to \infty$, for any fixed $k$,
\begin{align*}
T^{1/2}\left(\widehat{\xi(x)} - {\xi(x)}\right)
\Rightarrow \cN\left(0,\Xi(x)^\intercal \Sigma(\theta_0) \Xi(x) \right) ,
\end{align*}
where
$\Xi(x) 
= x \Lambda'(x^\intercal\theta_0)$.
\end{corollary}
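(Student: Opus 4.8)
The plan is to derive this corollary as a direct consequence of Theorem \ref{thm:asym} via the multivariate delta method, so that essentially all of the substantive work has already been done in establishing the asymptotic normality of $\hat\theta$.

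First I would fix the evaluation point $x$ (either an observed $X_t$ or a counterfactual $X_t^{\mathrm{cf}}$) and define the scalar-valued map $g\colon \Theta \to \R$ by $g(\theta) = \Lambda(x^\intercal \theta)$, so that $\widehat{\xi(x)} = g(\hat\theta)$ and $\xi(x) = g(\theta_0)$. Since $\theta \mapsto x^\intercal\theta$ is linear (hence smooth) and $\Lambda$ is continuously differentiable by hypothesis, $g$ is continuously differentiable on all of $\R^p$, and in particular on a neighborhood of $\theta_0$, which lies in the interior of $\Theta$ by Condition \ref{cond3}. Its gradient is $\nabla_\theta g(\theta) = x\,\Lambda'(x^\intercal\theta)$ by the chain rule, so $\nabla_\theta g(\theta_0) = x\,\Lambda'(x^\intercal\theta_0) = \Xi(x)$.

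Next I would invoke Theorem \ref{thm:asym}, which gives $T^{1/2}(\hat\theta - \theta_0) \Rightarrow \cN(0,\cI(\theta_0)^{-1})$ as $n,T\to\infty$ for fixed $k$. Applying the delta method to $g$ at $\theta_0$ then yields
\[
T^{1/2}\left(g(\hat\theta) - g(\theta_0)\right) \Rightarrow \cN\left(0,\ \nabla_\theta g(\theta_0)^\intercal\, \cI(\theta_0)^{-1}\, \nabla_\theta g(\theta_0)\right),
\]
and substituting $\nabla_\theta g(\theta_0) = \Xi(x)$ together with $g(\hat\theta) = \widehat{\xi(x)}$ and $g(\theta_0) = \xi(x)$ gives exactly the claimed limit $\cN(0,\Xi(x)^\intercal\cI(\theta_0)^{-1}\Xi(x))$.

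There is no serious obstacle here: the only points to verify are that $g$ is continuously differentiable in a neighborhood of $\theta_0$ (immediate from continuous differentiability of $\Lambda$ and the interiority of $\theta_0$ via Condition \ref{cond3}) and that the delta method applies, which it does because $T^{1/2}(\hat\theta-\theta_0)$ converges in distribution. Corollary \ref{corollary:marginal} would be proved identically, taking instead $g(\theta) = \theta_j\,\Lambda'(x^\intercal\theta)$ and computing $\nabla_\theta g(\theta_0) = x\,\theta_{0j}\,\Lambda''(x^\intercal\theta_0) + e_j\,\Lambda'(x^\intercal\theta_0) = \Xi_j(x)$; this is why that corollary needs the stronger hypothesis that $\Lambda$ be twice continuously differentiable, so that $g$ is still $C^1$ near $\theta_0$.
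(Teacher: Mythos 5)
Your proposal is correct and matches the paper's own (implicit) argument exactly: the paper derives Corollary \ref{corollary:level} by applying the delta method to Theorem \ref{thm:asym} with the map $\theta \mapsto \Lambda(x^\intercal\theta)$, whose gradient at $\theta_0$ is $\Xi(x) = x\,\Lambda'(x^\intercal\theta_0)$. Your side remarks on the interiority of $\theta_0$ and on why Corollary \ref{corollary:marginal} needs $\Lambda$ to be twice continuously differentiable are also consistent with the paper.
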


\noindent
Likewise, for the counterfactual value \eqref{eq:counterfactual_pareto} of the Pareto exponent, we obtain the following corollary.

\begin{corollary}\label{corollary:pareto}
Suppose that Conditions \ref{cond1}-\ref{cond4} hold. 
As $T\to
\infty$ and $n\to \infty$, for any fixed $k$,
\begin{equation*}
T^{1/2}\left(\hat{\alpha(x)} - {\alpha(x)}\right)
\Rightarrow \cN\left(0,A(x)^\intercal \Sigma(\theta_0) A(x) \right) ,
\end{equation*}
where
$
A(x) 
= 
x \Lambda'(x^\intercal\theta_0) / \Lambda(x^\intercal\theta_0)^2.
$
\end{corollary}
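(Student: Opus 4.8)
The plan is to obtain Corollary \ref{corollary:pareto} as a direct consequence of Theorem \ref{thm:asym} combined with the delta method, exactly in the same spirit as Corollaries \ref{corollary:marginal} and \ref{corollary:level}. First I would define the composite map $\phi \colon \Theta \to \R$ by $\phi(\theta) = 1/\Lambda(x^\intercal \theta)$ for the fixed covariate value $x$ of interest; note that the estimator in \eqref{eq:counterfactual_pareto} is precisely $\widehat{\alpha(x)} = \phi(\hat\theta)$ and the target is $\alpha(x) = \phi(\theta_0)$. Since $\Lambda$ is continuously differentiable in a neighborhood of $x^\intercal\theta_0$ and, by the model restriction $\alpha(X_t) = 1/\xi(X_t) > 1$ together with continuity, $\Lambda(x^\intercal\theta_0) > 0$ is bounded away from zero, the map $\phi$ is continuously differentiable at $\theta_0$, so the delta method applies.

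The second step is to compute the gradient. By the chain rule, $\nabla_\theta \phi(\theta) = -\Lambda'(x^\intercal\theta)\,\Lambda(x^\intercal\theta)^{-2}\, x$, and since the sign does not affect the asymptotic variance (which is quadratic in the gradient), evaluating at $\theta_0$ gives the vector $A(x) = x\,\Lambda'(x^\intercal\theta_0)/\Lambda(x^\intercal\theta_0)^2$ as stated. Then, applying the delta method to the conclusion $T^{1/2}(\hat\theta - \theta_0) \Rightarrow \cN(0, \cI(\theta_0)^{-1})$ of Theorem \ref{thm:asym}, I would conclude
\begin{equation*}
T^{1/2}\left(\widehat{\alpha(x)} - \alpha(x)\right) = T^{1/2}\left(\phi(\hat\theta) - \phi(\theta_0)\right) \Rightarrow \cN\left(0,\, A(x)^\intercal \cI(\theta_0)^{-1} A(x)\right),
\end{equation*}
which is the claim. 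The estimated standard error referenced in Section \ref{sec:method} then follows by plugging $\hat\theta$ for $\theta_0$ in $A(\cdot)$ and $\hat{\cI}(\cdot)$ for $\cI(\theta_0)$, invoking the consistency of $\hat\theta$ and the continuity of $A$ and of the information matrix estimator.

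There is essentially no hard part here: the corollary is a routine delta-method application once Theorem \ref{thm:asym} is in hand. The only point that requires a word of care is verifying that $\Lambda(x^\intercal\theta_0) \neq 0$ so that $\phi$ is differentiable — but this is guaranteed by the maintained modeling assumption that the Pareto exponent $\alpha(X_t) = 1/\Lambda(X_t^\intercal\theta_0)$ is finite and in fact exceeds one, which forces $\Lambda$ to be bounded away from zero on the (compact) support of $X_t$; for a generic counterfactual value $x$ one assumes the same, as is implicit in the statement. I would therefore keep the proof to a couple of lines, simply citing Theorem \ref{thm:asym}, the delta method, and the gradient computation above.
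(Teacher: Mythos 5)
Your proposal is correct and follows exactly the route the paper takes: the corollary is obtained by applying the delta method to the map $\theta \mapsto 1/\Lambda(x^\intercal\theta)$ in combination with Theorem \ref{thm:asym}, with the gradient $-x\,\Lambda'(x^\intercal\theta_0)/\Lambda(x^\intercal\theta_0)^2$ whose sign is immaterial in the quadratic form, yielding $A(x)$ as stated. Your additional remark that $\Lambda(x^\intercal\theta_0)$ is bounded away from zero (so that the reciprocal map is differentiable) is a sensible point of care that the paper leaves implicit.
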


%%%%%%%%%%%%%%%%%%%%%%%%%%%%%%%%%%%%%%%%%%%%%%%%%%%%%%%%%%%%%%%%%%%%%%
\section{Data and preliminary analysis}\label{sec:data}
%%%%%%%%%%%%%%%%%%%%%%%%%%%%%%%%%%%%%%%%%%%%%%%%%%%%%%%%%%%%%%%%%%%%%%

%%%%%%%%%%%%%%%%%%%%%%%%%%%%%%%%%%%%%%%%%%%%%%%%%%%%%%%%%%%%%%%%%%%%%%%
%\subsection{Wealth and tax data}
%%%%%%%%%%%%%%%%%%%%%%%%%%%%%%%%%%%%%%%%%%%%%%%%%%%%%%%%%%%%%%%%%%%%%%%

In this section, we introduce the data set and provide some preliminary analysis.
First, Forbes 400 provides the list of individuals ranked in the top 400 in the United States in terms of the wealth, along with the value of the wealth for each individual in the list.
These values are calculated by the staff in Forbes Magazine who ``[\ldots] pored over hundreds of Securities Exchange Commission documents, court records, probate records, federal financial disclosures and Web and print stories.''\footnote{We thank Matthieu Gomez for sharing this data set.}
We use the same sample of Forbes 400 as the one that has also been used in the literature mentioned in the literature review in Section \ref{sec:introduction}. 
This sample consists of repeated cross sections of 400 individuals for 36 years from 1982 to 2017.
We use the raw values provided in data set without inflating or deflating them, because this data set will be used only for the purpose of estimating tail indices and Pareto exponents, which are invariant from scales.
Table \ref{tab:forbes400} presents a partial list of the raw values in millions of U.S. dollars.

\begin{table}[!htb]
	\centering
	\renewcommand{\arraystretch}{1}
	\resizebox{\linewidth}{!}{%
		\begin{tabular}{lccccccccccccc}
		\toprule
		&& \multicolumn{12}{c}{Rank}\\
		\cmidrule{3-14}
			Year && 1    & 2    & 3    & 4    & 5    & 10   & 20   & 50   & 100 & 200  & 300  & 400\\
		\midrule
			1985 && 2800 & 1800 & 1500 & 1400 & 1300 & 1000 & 875  & 600  & 360  & 233  & 183  & 150\\
			1990 && 5600 & 3343 & 2870 & 2650 & 2600 & 2500 & 2000 & 1300 & 730 & 450 & 340 & 260\\
			1995 && 14800 & 11800 & 6700 & 6100 & 4800 & 4300 & 3000 & 1800 & 900 & 600 & 435 & 340\\
			2000 && 63000 & 58000 & 36000 & 28000 & 26000 & 17000 & 10000 & 4700 & 2600 & 1500 & 980 & 725\\
			2005 && 51000 & 40000 & 22500 & 18000 & 17000 & 15400 & 10000 & 4200 & 2500 & 1600 & 1200 & 900\\
			2010 && 54000 & 45000 & 27000 & 24000 & 21500 & 18000 & 12400 & 5300 & 3200 & 2000 & 1400 & 1000\\
			2015 && 76000 & 62000 & 47500 & 47000 & 41000 & 33300 & 23400 & 9000 & 5000 & 3300 & 2300 & 1700\\
		\bottomrule
		\end{tabular}
	}
	\caption{The values of the wealth in millions of U.S. dollars at the ranks 1, 2, 3, 4, 5, 10, 20, 50, 100, 200, 300 and 400 in Forbes 400 in 1985, 1990, 1995, 2000, 2005, 2010, and 2015.}${}$
	\label{tab:forbes400}
\end{table}

Second, we use the data about the top marginal income tax rates available from Tax Policy Center, Urban Institute \& Brookings Institution.\footnote{\url{https://www.taxpolicycenter.org/statistics/historical-highest-marginal-income-tax-rates}}
This tax data is available from 1913 to 2020, thus sufficiently covering the period 1982--2017 of our Forbes 400 sample.
Figure \ref{fig:series_tax} shows the time series of the top marginal income tax rate in the United States.
The rates were high at 70\% or above until 1980, followed by a rapid decline during the 1980s down to 28\% in 1990.
%As ``the magnitude and character of the tax change was largely unexpected before 1986'' \citep{Feldstein1995} it provides an ``important natural experiment'' \citep{Feldstein1995} to study effects of tax rates on wealth inequality.
The rates have eventually fluctuated within a narrow range of 30--40\% after 1990.

\begin{figure}[!htb]
\centering
\includegraphics[width=0.5\linewidth]{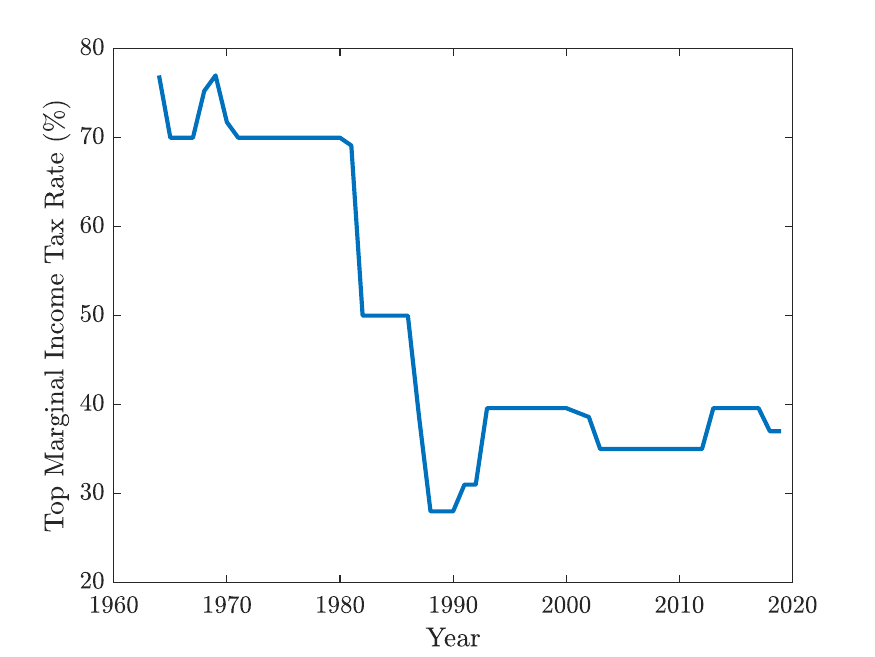}
\caption{Time series of the top marginal income tax rate in the United States.}
\label{fig:series_tax}
\end{figure}

%%%%%%%%%%%%%%%%%%%%%%%%%%%%%%%%%%%%%%%%%%%%%%%%%%%%%%%%%%%%%%%%%%%%%%%
%\subsection{Preliminary analysis}\label{sec:preliminary}
%%%%%%%%%%%%%%%%%%%%%%%%%%%%%%%%%%%%%%%%%%%%%%%%%%%%%%%%%%%%%%%%%%%%%%%

Combining these two data sets in the aforementioned event study of the 1980s, we investigate the association between the top marginal income tax rate and the right tail of the distribution of wealth.
Since wealth is cumulative, a relevant measure of tax rate is the average of lagged tax rates as opposed to the contemporaneous rate.
Figure \ref{fig:series} illustrates the time series of the 10 year average of the top marginal income tax rate in dashed lines.
For preliminary data analysis in the current section, we characterize the right tail of the distribution of wealth by estimating the Pareto exponents.
Figure \ref{fig:series} illustrates the time series of the estimated Pareto exponents in solid lines, where the left panel uses \citeauthor{Hill1975}'s (\citeyear{Hill1975}) estimator and the right panel uses \citeauthor{Gabaix2011rank}'s (\citeyear{Gabaix2011rank}) estimator.

\begin{figure}[!htb]
\centering
\begin{subfigure}{0.48\linewidth}
\includegraphics[width=\linewidth]{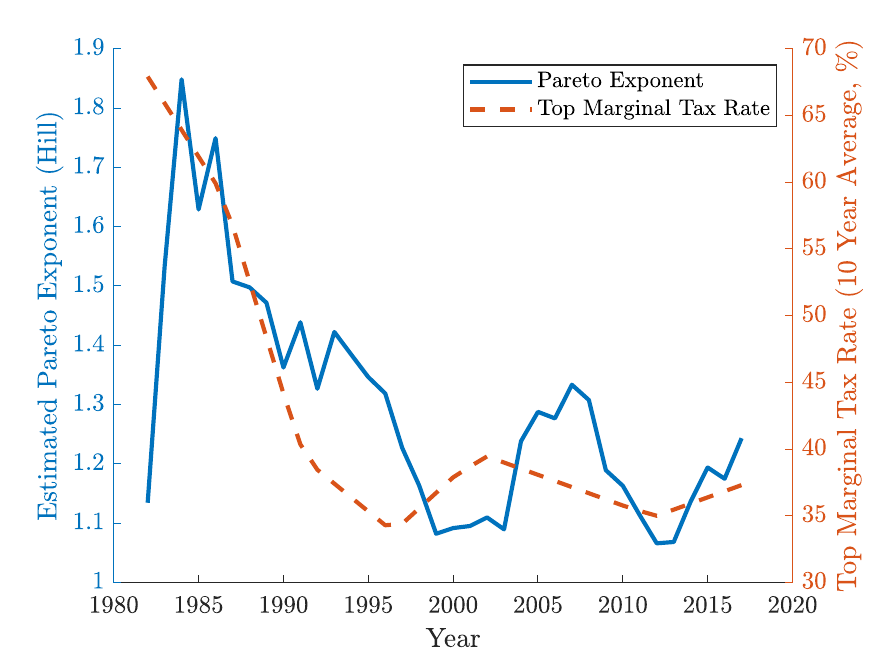}
\caption{\cite{Hill1975} estimator.}
\end{subfigure}
\begin{subfigure}{0.48\linewidth}
\includegraphics[width=\linewidth]{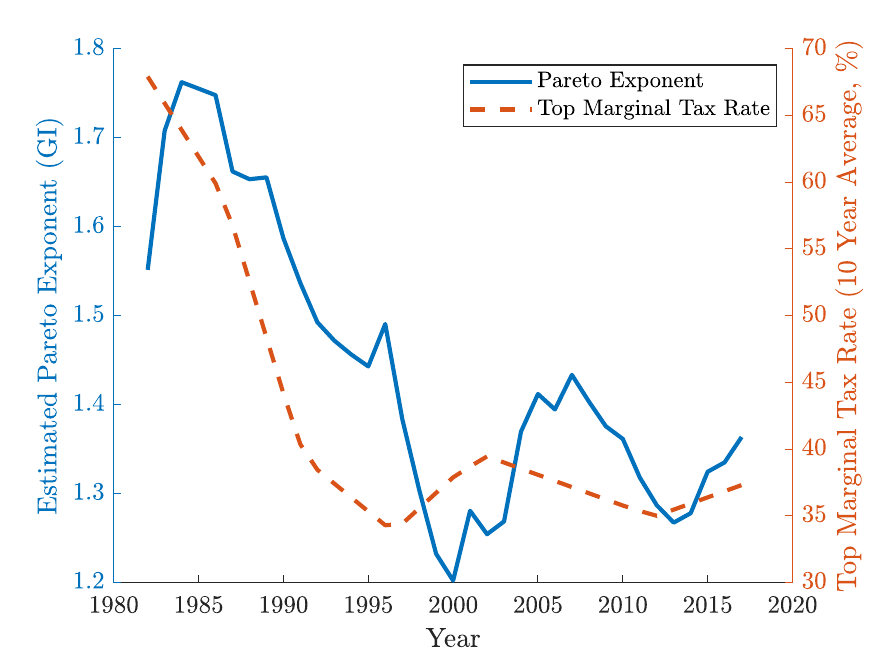}
\caption{\cite{Gabaix2011rank} estimator.}
\end{subfigure}
\caption{Time series of the top marginal income tax rate (10 year average) and the estimated Pareto exponents for the period between 1982 and 2017.
}\label{fig:series}
\end{figure}

Observe that the series of the estimated Pareto exponents follow similar trajectories to that of the top marginal income tax rate.
Namely, both the Pareto exponents and the tax rates were at high levels in the early 1980s, followed by a decade of rapid declines.
These series suggest strong positive correlations between the top marginal income tax rate and the wealth Pareto exponent (negative correlations with the heaviness of the right tail of the wealth distribution). %, and these correlations entail causal interpretations by virtue of the event of the rapid decline of the tax rate in the 1980s as a ``natural experiment'' \citep{Feldstein1995}.
The scatter plots along with fitted lines in Figure \ref{fig:scatter} highlight this strong relationship.
The data points in the left (respectively, right) panel of Figure \ref{fig:scatter} correspond to the series in the left (respectively, right) panel of Figure \ref{fig:series}.
The qualitative patterns characterizing the positive correlations are the same between the two alternative estimators of the Pareto exponents.

\begin{figure}[!htb]
\centering
\begin{subfigure}{0.48\linewidth}
\includegraphics[width=\linewidth]{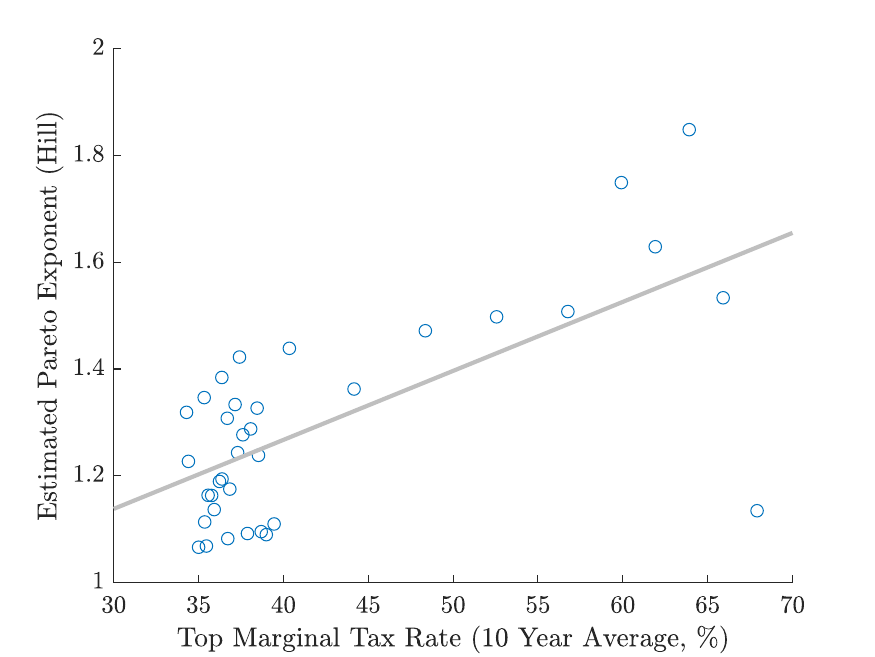}
\caption{\cite{Hill1975} estimator.}
\end{subfigure}
\begin{subfigure}{0.48\linewidth}
\includegraphics[width=\linewidth]{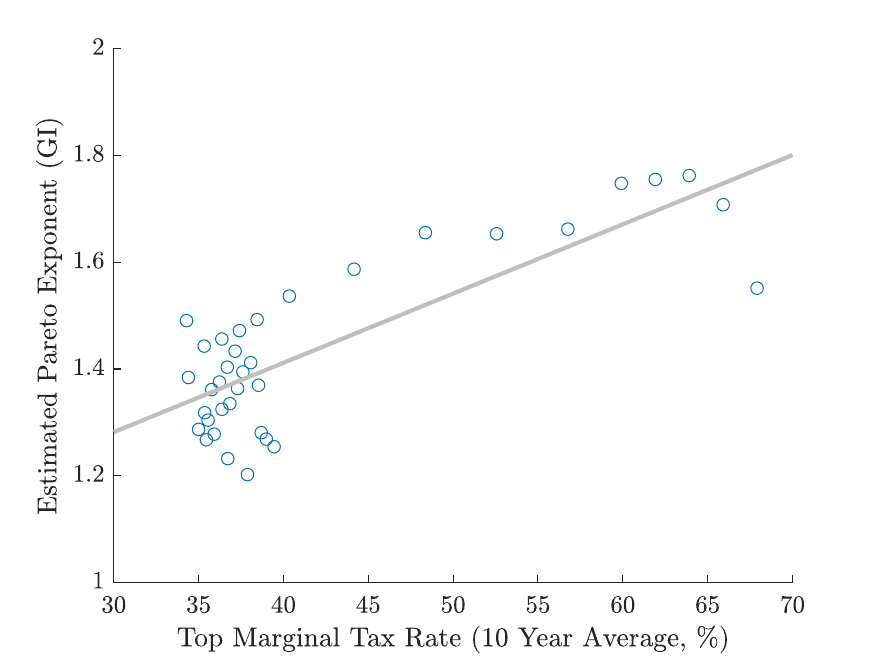}
\caption{\cite{Gabaix2011rank} estimator.}
\end{subfigure}
\caption{Scatter plots of the estimated Pareto exponents against the top marginal income tax rate (10 year average) for the period between 1982 and 2017.}
\label{fig:scatter}
\end{figure}

\section{Estimation results}\label{sec:results}
%%%%%%%%%%%%%%%%%%%%%%%%%%%%%%%%%%%%%%%%%%%%%%%%%%%%%%%%%%%%%%%%%%%%%%

In this section, we present and discuss the results of estimating the tail regression model \eqref{eq:tail_regression}. % with the data described in Section \ref{sec:data}.
We begin with formally confirming the heuristic findings from the informal bivariate analysis from Section \ref{sec:data} by first running the simple tail regression \eqref{eq:tail_regression} on only the top marginal income tax rate (and the constant) as $x$.
Since wealth is cumulative, we define $x$ as the average of the $\Delta t$ lags of the top marginal income tax rates for each of $\Delta t \in \set{3,5,7,9}$.
Columns (I)--(IV) in Table \ref{tab:estimation1} show estimates of $\theta_0$ under this tail regression model specification for various $\Delta t$.
Observe that the estimates of the coefficient of the top marginal income tax rate range quite narrowly from $-1.84$ to $-1.54$ and are significantly different from zero robustly across different time horizons $\Delta t \in \set{5,7,9}$. These results imply that the top marginal income tax rate $x$ has significantly negative effects on the tail index $\xi(x)$ of the top wealth distribution.
Recalling that the tail index is the reciprocal of the Pareto exponent, we formally confirm that an increase in the top marginal income tax rate contributes to reducing the density of wealthy population.

%%%%%%%%%%%%%%%%%%%%%%%%%%%%%%%%%%%%%%%%%%%%%%%%%%%%%%%%%%%%%%%%%%%%%%
\begin{table}[!htb]
	\centering
	\renewcommand{\arraystretch}{.75}
	\resizebox{\linewidth}{!}{%
		\begin{tabular}{lcccccccc}
		\toprule
		                     & (I) & (II) & (III) & (IV) & (V) & (VI) & (VII) & (VIII)\\
		\midrule
			Top Tax Rate       &-1.71 &-1.84$^{*}$&-1.76$^{**}$&-1.74$^{**}$&-1.75 &-1.83$^{*}$&-1.54&-1.58$^{*}$\\
			                          &(1.67)&(1.01)         &(0.89)            &(0.83)            &(1.74)&(1.04)           &(0.95)         &(0.85)\\
			\\
			Volatility         &      &      &      &      & 0.60 & 0.73 & 1.70$^{*}$ & 1.60\\
			                    &      &      &      &      &(0.97)&(1.04)&(0.99)&(1.10)\\
			Bankruptcy Rate    &      &      &      &      & 0.07 & 0.09 &8.14 &0.03\\
			                   &      &      &      &      &(8.45)&(5.62)&(29.25)&(3.35)\\
			\\
			Constant           & 1.01 & 1.07 & 1.05 & 1.06 & 0.86 & 0.86 & 0.42 & 0.54\\
			                       &(0.66)&(0.40)&(0.34)&(0.32)&(0.75)&(0.46)&(0.49)&(0.41)\\
			\\
			Years Averaged ($\Delta t$)     &    3 &    5 &    7 &    9 &    3 &    5 &    7 &    9\\
		\bottomrule
		\end{tabular}
	}
	\caption{Estimation of $\theta_0$ based on the MLE with linear specifications. ***p$<$0.01, **p$<$0.05, *p$<$0.10 (except for the constant). Standard errors are reported in parentheses. }${}$
	\label{tab:estimation1}
\end{table}
%%%%%%%%%%%%%%%%%%%%%%%%%%%%%%%%%%%%%%%%%%%%%%%%%%%%%%%%%%%%%%%%%%%%%%

We next evaluate the robustness of this observation with richer multivariate specifications of the tail regression model \eqref{eq:tail_regression} controlling for macroeconomic indicators.
Specifically, motivated by the economic model presented in Section \ref{sec:model}, we include the following additional controls to $x$:
the averages of the idiosyncratic volatility and bankruptcy rate for the $\Delta t$-year horizon for each of $\Delta t \in \set{3,5,7,9}$. To construct the idiosyncratic volatility series, we follow the method of
\cite{BekaertHodrickZhang2012}. Using NYSE stock return data from CRSP (Center
for Research in Security Prices), we calculate the daily individual excess
returns for 1,837 companies from NYSE composite. The daily individual
residual series is then constructed from Fama-French 3-factor (MKT-RF, SMB,
HML)\footnote{\url{https://mba.tuck.dartmouth.edu/pages/faculty/ken.french/data\_library.html}}
regressions. The sample variance of the residuals of $j$-th company, $\sigma^2(u_{j,y})$, for the corresponding year provides annualized
idiosyncratic variance of each company. The weighted average $\sigma_y^2=\sum_{j=1}^{N}w_{j,y}\sigma^2(u_{j,y})$ provides the
aggregate idiosyncratic variance, where the weight $w_{j,y}$ is the
fraction of the company's yearly average equity over yearly total market
equity. We then use the aggregate idiosyncratic volatility $\sigma_y$ as a regressor. We compute the annual bankruptcy rate as the yield spread of long-term AAA corporate bonds. Specifically, we obtain the monthly Moody's Seasoned AAA Corporate Bond Yield\footnote{\url{https://fred.stlouisfed.org/series/AAA}} and 20-Year Treasury Constant Maturity Rate,\footnote{\url{https://fred.stlouisfed.org/series/GS20}} compute the yield spread as their difference, and convert to annual frequency by taking the average over each year. The $\Delta t$-year averages for $\Delta t\in \set{3,5,7,9} $ are used for the empirical
analysis. Figure \ref{fig:series_covar} shows the time series of these additional covariates.

\begin{figure}[!htb]
\centering
\begin{subfigure}{0.48\linewidth}
\includegraphics[width=\linewidth]{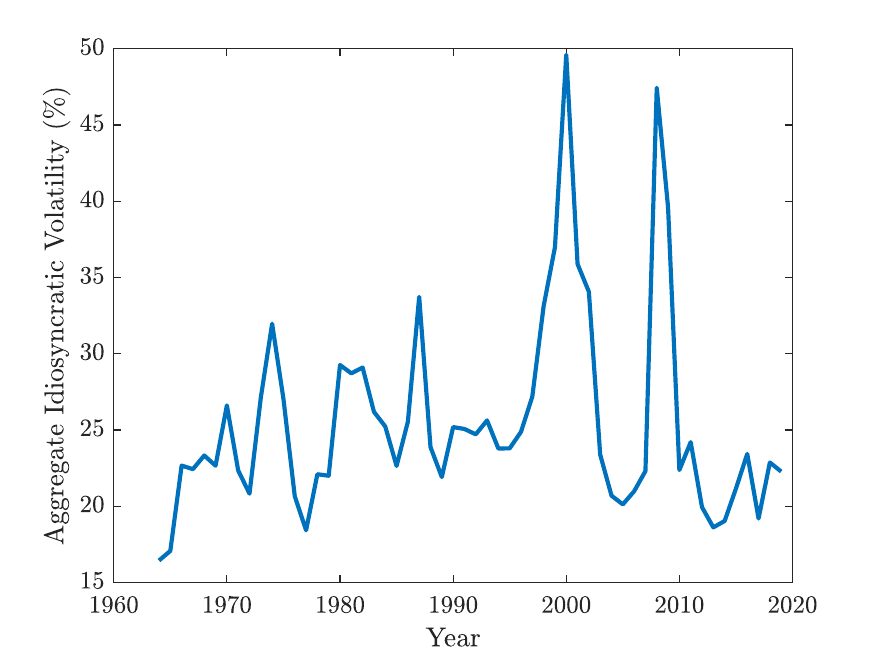}
\caption{Aggregate idiosyncratic volatility.}\label{fig:series_vol}
\end{subfigure}
\begin{subfigure}{0.48\linewidth}
\includegraphics[width=\linewidth]{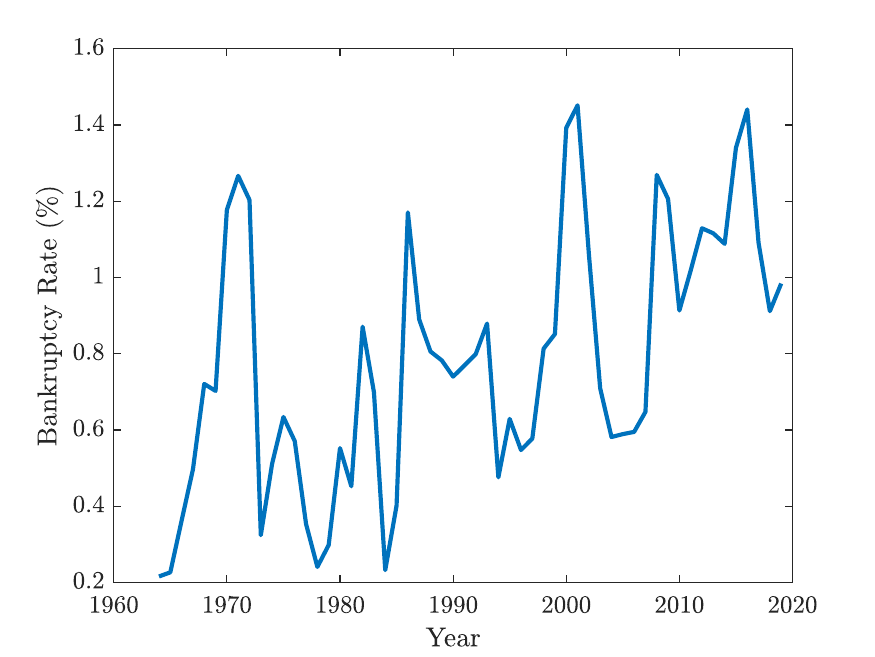}
\caption{Bankruptcy rate.}\label{fig:series_bankrupt}
\end{subfigure}
\caption{Time series of additional covariates.}
\label{fig:series_covar}
\end{figure}

Columns (V)--(VIII) in Table \ref{tab:estimation1} show estimates of $\theta_0$ under the multivariate tail regression model specification for various $\Delta t$.
Focusing on the coefficient of the top marginal income tax rate, we observe that the estimates are still close to those previously obtained without the additional covariates, and now range from $-1.84$ to $-1.54$.
These results further support the robustness of the aforementioned role that the top marginal income tax rate plays in determining the top wealth distribution.
Moveover, additional robustness check results with more control variables are presented in Appendix \ref{sec:additional} for readability.

When a policy maker revises the income tax schedule, a natural question is whether this policy instrument has any interactions (such as complementarity) with the macroeconomic state.
To address this question, we next run the multivariate tail regression \eqref{eq:tail_regression} with interactions between the top marginal income tax rate and the two controls, namely the idiosyncratic volatility and bankruptcy rate.
Table \ref{tab:estimation2} summarizes results across various combinations of interactions between the top tax rate and the two macroeconomic controls.
It turns out from these estimation results that the effects of the top marginal income tax rates on top wealth inequality are largely through the main effects, and not significantly through the interaction or complementary effects with the macroeconomic indicators.
These results suggest that a policy maker can target desired shapes of the top wealth distribution without accounting for the current macroeconomic states, as far as the two macroeconomic indicators under our consideration are concerned.

%%%%%%%%%%%%%%%%%%%%%%%%%%%%%%%%%%%%%%%%%%%%%%%%%%%%%%%%%%%%%%%%%%%%%%
\begin{table}[!htb]
	\centering
	\renewcommand{\arraystretch}{.75}
	%\resizebox{\linewidth}{!}{%
	\scalebox{0.82}{
		\begin{tabular}{lcccccccc}
		\toprule
		                     & (II) & (VI) & (IX) & (X)  & (XI) \\%& (XII) & (XIII) & (XIV)\\
		\midrule
			Top Tax Rate       &-1.84$^{*}$&-1.83$^{*}$&-2.46$^{*}$&-1.89$^{*}$&-2.45$^{*}$\\%&-21.69$^{***}$&-5.03$^{***}$&-14.14$^{*}$\\
			                         &(1.01)&(1.04)&(1.28)&(1.01)&(1.29)\\%&(7.68)&(1.57)&(7.32)\\
			\\
			Volatility         &      & 0.73 &      &      &        \\%& -106.84$^{***}$ & 1.41 & -47.15 \\
			                     &      &(1.04)&      &      &        \\%&(40.97)&(1.89)&(40.46)\\
	    Bankruptcy Rate          &      & 0.09 &      &      &        \\%& 12.10& -63.56$^{**}$ & -62.92$^{**}$ \\
			                     &      &(5.62)&      &      &        \\%&(7.81)&(28.73)&(30.46)\\
			Volatility         &      &        & 2.30 &        &2.30    \\%& 285.12$^{***}$ &      & 127.83 \\
\qquad $\times$ Tax Rate         &      &      &(2.62)&        &(2.57)  \\%&(108.30)&      &(107.76)\\
		   Bankruptcy Rate    &      &      &        &49.54  &0.09   \\%&      & 160.59$^{**}$ & 167.75$^{**}$ \\
\qquad $\times$        Tax Rate &      &      &         &(39.92)&(28.07)\\%&      &(66.67)&(69.10  )\\
			\\
			Constant       & 1.07 & 0.86 & 1.06 & 0.93 & 1.06 \\%& 8.10 & 2.27 & 5.62 \\
			                   &(0.40)&(0.75)&(0.40)&(0.39)&(0.42)\\%&(2.84)&(0.80)&(2.70)\\
			\\
			Years Averaged     &    5 &    5 &    5 &    5 &    5 \\%&    5 &    5 &    5 \\
		\bottomrule
		\end{tabular}
	}
	\caption{Estimation of $\theta_0$ based on the MLE with nonlinear specifications. ***p$<$0.01, **p$<$0.05, *p$<$0.10 (except for the constant). Standard errors are reported in parentheses. }${}$
	\label{tab:estimation2}
\end{table}
%%%%%%%%%%%%%%%%%%%%%%%%%%%%%%%%%%%%%%%%%%%%%%%%%%%%%%%%%%%%%%%%%%%%%%

Our discussions thus far are based on the estimation results for $\theta_0$ in the tail regression model \eqref{eq:tail_regression}.
A feature of this approach is that this parameter $\theta_0$ in the nonlinear single index model does not directly quantify the marginal effects of the top marginal income tax rate as a policy instrument.
The true marginal effects may depend not only on the parameter $\theta_0$, but also on the current state $x$ of the controls including the level of the top marginal income tax rate that varies over time.
To address this issue, we next use the estimator \eqref{eq:marginal} to directly measure the marginal effects for the actual policy and macroeconomic states $x$ of each of the years, 1985, 1990, 1995, 2000, 2005, 2010, and 2015.
Table \ref{tab:marginal_effects} shows the estimated marginal effects of the top marginal income tax rate for each of the eight tail regression specifications (I)--(VIII).
Despite the widely time-varying levels of $x$ and the nonlinearity in the tail regression model \eqref{eq:tail_regression}, the marginal effects remain fairly stable across time for each tail regression specification.
Specifically, increasing the top marginal income tax rate by 1 percentage point (0.01) leads to a reduction of the right tail index of the wealth distribution approximately by around 0.004 in any year (i.e., under any values $x$ of the policy and macroeconomic indicators $X_t$ in the displayed years).

%%%%%%%%%%%%%%%%%%%%%%%%%%%%%%%%%%%%%%%%%%%%%%%%%%%%%%%%%%%%%%%%%%%%%%
\begin{table}[!htb]
	\centering
	\renewcommand{\arraystretch}{.9}
	\resizebox{\linewidth}{!}{%
		\begin{tabular}{cccccccccc}
		\toprule
		& Top Tax Rate & (I) & (II) & (III) & (IV) & (V) & (VI) & (VII) & (VIII)\\
		\midrule
			1985 & 0.54 & -0.43 & -0.46$^{*}$ & -0.44$^{**}$ & -0.43$^{**}$& -0.41 & -0.46$^{*}$ & -0.39 & -0.40$^{*}$ \\
			     &           &(0.44)&(0.26)&(0.22)&(0.21)&(0.45)&(0.27)&(0.25)&(0.21) \\
			1990 & 0.35 & -0.40 & -0.44 & -0.43$^{*}$ & -0.42$^{*}$ & -0.38 & -0.44& -0.38& -0.39$^{*}$ \\
			     &           &(0.44)&(0.27)&(0.24)&(0.23)&(0.45)&(0.28)&(0.27)&(0.23) \\
			1995 & 0.36 & -0.42 & -0.44 & -0.42$^{*}$ & -0.41$^{*}$ & -0.40 & -0.44& -0.37& -0.38$^{*}$ \\
			     &           &(0.45)&(0.28)&(0.24)&(0.22)&(0.46)&(0.28)&(0.27)&(0.23) \\
			2000 & 0.40 & -0.41& -0.45& -0.43$^{*}$ & -0.42$^{*}$ & -0.39& -0.44& -0.37& -0.38$^{*}$ \\
			     &           &(0.45)&(0.28)&(0.24)&(0.22)&(0.47)&(0.29)&(0.27)&(0.23) \\
			2005 & 0.37 & -0.41& -0.44& -0.42$^{*}$ & -0.42$^{*}$ & -0.39& -0.44& -0.36 & -0.38$^{*}$ \\
			     &           &(0.45)&(0.28)&(0.24)&(0.23)&(0.46)&(0.28)&(0.27)&(0.23) \\
			2010 & 0.35 & -0.41 & -0.44& -0.42$^{*}$ & -0.41$^{*}$& -0.38& -0.43& -0.37 & -0.38$^{*}$ \\
			     &           &(0.45)&(0.28)&(0.24)&(0.22)&(0.46)&(0.28)&(0.27)&(0.23) \\
			2015 & 0.38 & -0.42 & -0.44& -0.42$^{*}$ & -0.42$^{*}$& -0.40& -0.45& -0.37 & -0.38$^{*}$ \\
			     &           &(0.45)&(0.28)&(0.24)&(0.22)&(0.46)&(0.28)&(0.27)&(0.23) \\
		\bottomrule
		\end{tabular}
	}
	\caption{Estimation of the marginal effects of the maximum marginal tax rate on the tail index. ***p$<$0.01, **p$<$0.05, *p$<$0.10. Standard errors are reported in parentheses. The tax rate indicates the 5-year average of the maximum marginal tax rates.}${}$
	\label{tab:marginal_effects}
\end{table}
%%%%%%%%%%%%%%%%%%%%%%%%%%%%%%%%%%%%%%%%%%%%%%%%%%%%%%%%%%%%%%%%%%%%%%

An important objective of economic research is to advise policy makers on concrete levels of policy instrument (top marginal tax income rate) to achieve specific objectives.
To this goal, we estimate the counterfactual values of the tail index $\xi(X_t^{\mathrm{cf}})$ under alternative values of $X_t^{\mathrm{cf}}$ under consideration by using the estimator \eqref{eq:counterfactual}.
To define the counterfactual $X_t^{\mathrm{cf}}$, we consider alternative tax rates from the list $\set{0.30, 0.40, 0.50, 0.60, 0.70}$ while we set the the other controls, namely the idiosyncratic volatility and bankruptcy rate, to the actual values in each of the years 1985, 1990, 1995, 2000, 2005, 2010, and 2015.
Table \ref{tab:counterfactual_TI} shows estimates of the actual and counterfactual tail index values for each of the alternative counterfactual policies.
Observe that the counterfactual tail index would range from 0.61 to 0.64 under the counterfactual marginal income tax rate of 0.30, while it would range from 0.43 to 0.46 under the counterfactual marginal income tax rate of 0.70.
Each of these ranges is stable across different years.
This table thus provides a guideline for a policy maker in choosing appropriate levels of the top marginal income tax rate for given goals of the density of wealthy population.

%%%%%%%%%%%%%%%%%%%%%%%%%%%%%%%%%%%%%%%%%%%%%%%%%%%%%%%%%%%%%%%%%%%%%%
\begin{table}[!htb]
	\centering
	\renewcommand{\arraystretch}{.9}
	\resizebox{\linewidth}{!}{%
		\begin{tabular}{cccccccccccccc}
		\toprule
		       & \multicolumn{2}{c}{Actual} && \multicolumn{10}{c}{Counterfactual}\\
		\cmidrule{2-3}\cmidrule{5-14}
		       & Tax  & Tail && Tax  & Tail & Tax  & Tail & Tax  & Tail & Tax  & Tail & Tax  & Tail\\
		       & Rate & Index&& Rate & Index& Rate & Index& Rate & Index& Rate & Index& Rate & Index\\
		\midrule
			1985 & 0.54 & 0.52 && 0.30 & 0.62 & 0.40 & 0.58 & 0.50 & 0.54 & 0.60 & 0.49 & 0.70 & 0.44 \\
			     &      &(0.04)&&      &(0.03)&      &(0.02)&      &(0.03)&      &(0.06)&      &(0.08)\\
			1990 & 0.35 & 0.60 && 0.30 & 0.62 & 0.40 & 0.58 & 0.50 & 0.54 & 0.60 & 0.49 & 0.70 & 0.44\\
			     &      &(0.02)&&      &(0.03)&      &(0.02)&      &(0.03)&      &(0.06)&      &(0.08)\\
			1995 & 0.36 & 0.60 && 0.30 & 0.62 & 0.40 & 0.58 & 0.50 & 0.53 & 0.60 & 0.49 & 0.70 & 0.44 \\
			     &      &(0.02)&&      &(0.03)&      &(0.02)&      &(0.03)&      &(0.06)&      &(0.08)\\
			2000 & 0.40 & 0.60 && 0.30 & 0.64 & 0.40 & 0.60 & 0.50 & 0.55 & 0.60 & 0.51 & 0.70 & 0.46 \\
			     &      &(0.03)&&      &(0.03)&      &(0.03)&      &(0.04)&      &(0.06)&      &(0.08)\\
			2005 & 0.37 & 0.60 && 0.30 & 0.63 & 0.40 & 0.58 & 0.50 & 0.54 & 0.60 & 0.49 & 0.70 & 0.45  \\
			     &      &(0.02)&&      &(0.03)&      &(0.02)&      &(0.03)&      &(0.06)&      &(0.08)\\
			2010 & 0.35 & 0.61 && 0.30 & 0.63 & 0.40 & 0.59 & 0.50 & 0.55 & 0.60 & 0.50 & 0.70 & 0.46 \\
			     &      &(0.02)&&      &(0.03)&      &(0.02)&      &(0.03)&      &(0.06)&      &(0.08)\\
			2015 & 0.38 & 0.58 && 0.30 & 0.61 & 0.40 & 0.57 & 0.50 & 0.53 & 0.60 & 0.48 & 0.70 & 0.43 \\
			     &      &(0.02)&&      &(0.03)&      &(0.02)&      &(0.04)&      &(0.06)&      &(0.08)\\
		\bottomrule
		\end{tabular}
	}
	\caption{Estimation of the actual and counterfactual tail indices under Model (VI). Standard errors are reported in parentheses. The actual and counterfactual tax rate indicates the 5-year average of the maximum marginal tax rates.}${}$
	\label{tab:counterfactual_TI}
\end{table}
%%%%%%%%%%%%%%%%%%%%%%%%%%%%%%%%%%%%%%%%%%%%%%%%%%%%%%%%%%%%%%%%%%%%%%

The tail regression \eqref{eq:tail_regression} is modeled in terms of the tail index $\xi(x)$ as the response variable because the tail index can be conveniently restricted to the interval $(0,1)$, consistently with the range of the link function $\Lambda$ in \eqref{eq:tail_regression}.
That said, its reciprocal, namely the Pareto exponent $\alpha=\alpha(x)=1/\xi(x)$, is the measure of tail thickness that has been used more often in the economics literature.
To translate the counterfactual tail index values reported in Table \ref{tab:counterfactual_TI} into this more familiar measure, we use the estimator \eqref{eq:counterfactual_pareto} and report the counterfactual Pareto exponents in Table \ref{tab:counterfactual_PE}.
The maximum level, 0.70, of the counterfactual marginal income tax rate under our consideration used to be the steady policy state for long prior to the 1980s in the United States, while our Forbes 400 data are available only since 1982.
Our extrapolated estimates of the counterfactual Pareto exponent, ranging from 2.17 to 2.31, thus recovers the density of the wealthy population in the United States under the high-income-tax regime before the 1980s that is not directly available from our data.
These estimates also suggest the density of wealthy population that our economy would achieve if the maximum marginal income tax rate were to be raised again back to old steady level of 0.70.

%%%%%%%%%%%%%%%%%%%%%%%%%%%%%%%%%%%%%%%%%%%%%%%%%%%%%%%%%%%%%%%%%%%%%%
\begin{table}[!htb]
	\centering
	\renewcommand{\arraystretch}{.9}
	\resizebox{\linewidth}{!}{
		\begin{tabular}{cccccccccccccc}
		\toprule
		       & \multicolumn{2}{c}{Actual} && \multicolumn{10}{c}{Counterfactual}\\
		\cmidrule{2-3}\cmidrule{5-14}
		       & Tax  & Pareto && Tax  & Pareto & Tax  & Pareto & Tax  & Pareto & Tax  & Pareto & Tax  & Pareto\\
		       & Rate & Exp.&& Rate & Exp.& Rate & Exp.& Rate & Exp.& Rate & Exp.& Rate & Exp.\\
		\midrule
			1985 & 0.54 & 1.93 && 0.30 & 1.60 & 0.40 & 1.72 & 0.50 & 1.87 & 0.60 & 2.04 & 0.70 & 2.25 \\
			     &      &(0.16)&&      &(0.07)&      &(0.05)&      &(0.12)&      &(0.24)&      &(0.41)\\
			1990 & 0.35 & 1.66 && 0.30 & 1.60 & 0.40 & 1.72 & 0.50 & 1.87 & 0.60 & 2.05 & 0.70 & 2.25\\
			     &      &(0.05)&&      &(0.07)&      &(0.05)&      &(0.12)&      &(0.24)&      &(0.41)\\
			1995 & 0.36 & 1.68 && 0.30 & 1.61 & 0.40 & 1.73 & 0.50 & 1.88 & 0.60 & 2.06 & 0.70 & 2.27 \\
			     &      &(0.05)&&      &(0.07)&      &(0.05)&      &(0.12)&      &(0.25)&      &(0.42)\\
			2000 & 0.40 & 1.68 && 0.30 & 1.56 & 0.40 & 1.67 & 0.13 & 1.81 & 0.60 & 1.98 & 0.70 & 2.17 \\
			     &      &(0.08)&&      &(0.08)&      &(0.08)&      &(0.13)&      &(0.23)&      &(0.39)\\
			2005 & 0.37 & 1.68 && 0.30 & 1.60 & 0.40 & 1.72 & 0.50 & 1.86 & 0.60 & 2.03 & 0.70 & 2.24  \\
			     &      &(0.05)&&      &(0.07)&      &(0.05)&      &(0.11)&      &(0.24)&      &(0.41)\\
			2010 & 0.35 & 1.63 && 0.30 & 1.57 & 0.40 & 1.69 & 0.50 & 1.83 & 0.60 & 2.00 & 0.70 & 2.20 \\
			     &      &(0.06)&&      &(0.07)&      &(0.06)&      &(0.12)&      &(0.23)&      &(0.40)\\
			2015 & 0.38 & 1.73 && 0.30 & 1.63 & 0.40 & 1.75 & 0.50 & 1.90 & 0.60 & 2.09 & 0.70 & 2.31 \\
			     &      &(0.07)&&      &(0.08)&      &(0.07)&      &(0.14)&      &(0.27)&      &(0.45)\\
		\bottomrule
		\end{tabular}
	}
	\caption{Estimation of the actual and counterfactual Pareto exponents under Model (VI). Standard errors are reported in parentheses. The actual and counterfactual tax rate indicates the 5-year average of the maximum marginal tax rates.}${}$
	\label{tab:counterfactual_PE}
\end{table}
%%%%%%%%%%%%%%%%%%%%%%%%%%%%%%%%%%%%%%%%%%%%%%%%%%%%%%%%%%%%%%%%%%%%%%

Finally, we conduct out-of-sample counterfactual predictions for the Pareto exponents in year 2020 using the controls that are available up to year 2019.
The top row in Table \ref{tab:counterfactual_PE_prediction} shows estimates of the Pareto exponents in 2020 under each of the alternative counterfactual policies of the maximum marginal income tax rates from the list $\set{0.20,0.30,\dots,0.70,0.80}$.
As expected, a higher marginal tax rate will increase the Pareto exponent, implying a thinner tail and hence less inequality. 
Observe that the counterfactual maximum marginal income tax rate of 0.60 or above would yield the Pareto exponent above 2 (in terms of the point estimates), allowing for finite second moments of the wealth distribution.

%%%%%%%%%%%%%%%%%%%%%%%%%%%%%%%%%%%%%%%%%%%%%%%%%%%%%%%%%%%%%%%%%%%%%%
\begin{table}[!htb]
	\centering
	\renewcommand{\arraystretch}{.75}
	\resizebox{\linewidth}{!}{%
		\begin{tabular}{lccccccc}
		\toprule
			Counterfactual Tax Rate & 0.20 & 0.30 & 0.40 & 0.50 & 0.60 & 0.70 & 0.80\\
		\midrule
			Pareto Exponent & 1.52 & 1.63 & 1.75 & 1.91 & 2.09 & 2.31 & 2.57\\
			                        &(0.11)&(0.08)&(0.07)&(0.14)&(0.27)&(0.45)&(0.70)\\
			\\
			Test of Finite Moments:\\
			First Moment & $<\infty$ & $<\infty$ & $<\infty$ & $<\infty$ & $<\infty$ & $<\infty$ & $<\infty$\\
			                    &(1.00)&(1.00)&(1.00)&(1.00)&(1.00)&(1.00)&(0.99)\\
			Second Moment & $=\infty$ & $=\infty$  & $=\infty$ & $<\infty$ &$<\infty$& $<\infty$& $<\infty$\\
			                    &(0.00)&(0.00)&(0.00)&(0.26)&(0.63)&(0.75)&(0.79)\\
			Third Moment & $=\infty$ & $=\infty$ & $=\infty$ & $=\infty$ & $=\infty$ & $=\infty$ & $<\infty$\\
			                    &(0.00)&(0.00)&(0.00)&(0.00)&(0.00)&(0.06)&(0.27)\\
		\bottomrule
		\end{tabular}
	}
	\caption{Extrapolated estimates of the counterfactual Pareto exponents in 2020 under Model (VI) and tests of finite first, second, and third moments of the wealth distribution. Standard errors are reported in parentheses below the estimates. P values of the tests of infinite moments are reported in parentheses below the test results. The counterfactual tax rate indicates the 5-year average of the maximum marginal tax rates.}${}$
	\label{tab:counterfactual_PE_prediction}
\end{table}
%%%%%%%%%%%%%%%%%%%%%%%%%%%%%%%%%%%%%%%%%%%%%%%%%%%%%%%%%%%%%%%%%%%%%%

To conduct a formal test of this statement, however, we should also pay attention to the standard errors as well as the point estimates.
Using these estimates and the the associated standard errors, we conduct the one-sided tests of the null hypothesis of the finite $r$-th moment of the wealth distribution for $r \in \set{1,2,3}$.
The bottom parts of Table \ref{tab:counterfactual_PE_prediction} show the test results.
We mean by ``$<\infty$'' that we fail to reject the null hypothesis of $\alpha(X_t^{\mathrm{cf}}) \ge r$ at 5\% significance level, while we mean by ``$=\infty$'' that we reject it.
The p-values are reported in parentheses.
The test results suggest that the first moment of the wealth distribution is finite at any counterfactual tax rate, but the second and higher moments are infinite at lower tax rates as is the case with the United States today.
In order to reduce the wealth inequality, the tax rate needs to be raised to 0.60 for finite second moment and 0.80 or above for finite third moment.
%Recall that a finite second moment is typically required and taken for granted in empirical anlaysis for the validity of the t-test. 
%The heavy tail raises such concern whenever wealth data is used.
%See, for example, \citet{SasakiWang2021}.

%%%%%%%%%%%%%%%%%%%%%%%%%%%%%%%%%%%%%%%%%%%%%%%%%%%%%%%%%%%%%%%%%%%%%%
\section{A model of tax rate and Pareto exponent}\label{sec:model}
%%%%%%%%%%%%%%%%%%%%%%%%%%%%%%%%%%%%%%%%%%%%%%%%%%%%%%%%%%%%%%%%%%%%%%

In the previous two sections, we have documented positive effects of the marginal tax rate on the wealth Pareto exponent. In this section, we present a minimal stylized theoretical model that sheds light on these effects.

\subsection{Model description}
Because a large fraction of Forbes 400 individuals are self-made entrepreneurs,\footnote{According to Forbes, about 70\% of individuals in the list are self-made: \url{https://www.forbes.com/sites/jonathanponciano/2020/09/08/self-made-score}.} we model the top of the wealth distribution as consisting of entrepreneurs. We consider an economy consisting of two agent types, entrepreneurs and workers, with mass $1$ and $L$, respectively. Time is continuous and is denoted by $t\in [0,\infty)$. Workers inelastically supply one unit of labor and earn wage $\omega_t$, which they consume entirely.\footnote{This hand-to-mouth assumption is for simplicity only and can be microfounded by supposing that workers are impatient and cannot borrow.}

Entrepreneurs operate their own firms using capital and labor as inputs and trading risk-free bonds in zero net supply, pay taxes on profits, and go bankrupt at a Poisson rate $\lambda>0$. The budget constraint of a typical entrepreneur is
\begin{equation}
\diff w_t=(1-\tau_t)((F(k_t,l_t)-\omega_t l_t)\diff t + \sigma k_t\diff B_t + r_tb_t\diff t) -c_t\diff t.\label{eq:bc}
\end{equation}
Here $k_t,l_t$ are capital and labor inputs and $F$ is the production function including capital depreciation; $\tau_t\in [0,1)$ is the marginal income tax rate; $B_t$ is a standard Brownian motion assumed to be independent across entrepreneurs and $\sigma>0$ is the idiosyncratic volatility; $r_t$ is the interest rate including the risk premium for bankruptcy and $b_t$ is bond holdings; $c_t$ is consumption rate; and $w_t=k_t+b_t$ is net worth. When entrepreneurs go bankrupt, they exit the economy and are replaced by new entrepreneurs. Upon exit, capital is recycled back to the economy, and the initial endowment of a new entrepreneur equals the average capital of exiting entrepreneurs. The tax revenue is used in a way that does not affect the behavior of entrepreneurs (e.g., wasted or redistributed to workers).

Entrepreneurs have the continuous-time analog of the Epstein-Zin preferences with discount rate $\beta$, relative risk aversion $\gamma$, and elasticity of intertemporal substitution equal to 1. More precisely, the continuation utility $U_t$ satisfies
\begin{equation}
U_t=\E_t\int_t^\infty h(c_s,U_s)\diff s,\label{eq:utility}
\end{equation}
where $c_t$ is the consumption rate at time $t$ and
\begin{equation}
h(c,v)=\beta v\left((1-\gamma)\log c - \log [(1-\gamma)v]\right) \label{eq:hcv1}
\end{equation}
is the intertemporal aggregator. (See \cite{duffie-epstein1992a} for technical details.) The discount rate $\beta$ should be understood to incorporate the agent's preference for current over future consumption as well as their awareness of the risk of bankruptcy. The objective of an entrepreneur is to maximize the recursive utility \eqref{eq:utility} subject to the budget constraint \eqref{eq:bc} and nonnegativity constraints $k_t,l_t,c_t\ge 0$, taking as given the paths of interest rate and wage $\set{(r_t,\omega_t)}_{t\in [0,\infty)}$.

Throughout the rest of this section, we maintain the following assumption.

\begin{asmp}\label{asmp:prod}
The production function $F(k,l)$ is homogeneous of degree 1. Letting $f(k)\coloneqq F(k,1)$ be the production function with unit labor, $f:(0,\infty)\to \R$ is twice continuously differentiable and satisfies $f''<0$, $f'(0)=\infty$, and $f'(\infty)\le 0$.
\end{asmp}

A typical example satisfying Assumption \ref{asmp:prod} is the Cobb-Douglas production function with constant depreciation
\begin{equation}
F(k,l)=Ak^\rho l^{1-\rho}-\delta k, \label{eq:CD}
\end{equation}
where $A>0$ is the productivity, $\rho\in (0,1)$ is the capital share, and $\delta\ge 0$ is the depreciation rate.

\subsection{Entrepreneur's problem}

In this section we solve the entrepreneur's problem. We suppress the time subscript to simplify the notation. Since labor $l$ enters only the budget constraint \eqref{eq:bc}, given the wage $\omega$ and capital holdings, it is obvious that the entrepreneur chooses $l$ to solve
\begin{equation}
\max_{l\ge 0} [F(k,l)-\omega l]=\max_{l\ge 0} [lf(k/l)-\omega l], \label{eq:lprob}
\end{equation}
where we have used the homogeneity of $F$ and the definition of $f$. Using $f''<0$, it is straightforward to show that the objective function in the right-hand side of \eqref{eq:lprob} is strictly concave. The first-order condition with respect to $l$ is
\begin{equation}
f(y)-yf'(y)=\omega,\label{eq:lfoc}
\end{equation}
where $y=k/l$ is the capital-labor ratio. Since $(f(y)-yf'(y))'=-yf''(y)>0$, the value of $y$ (hence $l$) satisfying \eqref{eq:lfoc} is unique. Since the wage $\omega$ and the capital-labor ratio $y$ have a one-to-one relationship, in the subsequent discussion we use $y$ as an endogenous variable and recover $\omega$ using the first-order condition \eqref{eq:lfoc}. Given capital $k$ and capital-labor ratio $y=k/l$, we have
\begin{equation*}
F(k,l)-\omega l=l(f(k/l)-\omega)=l(f(y)-\omega)=lyf'(y)=kf'(y),
\end{equation*}
where we have used the first-order condition \eqref{eq:lfoc}. Therefore the budget constraint \eqref{eq:bc} reduces to
\begin{equation}
\diff w=[(1-\tau)(f'(y)\theta+r(1-\theta))-m]w\diff t+(1-\tau)\sigma\theta w\diff B,\label{eq:bc2}
\end{equation}
where $\theta\coloneqq k/w\ge 0$ is the fraction of wealth invested in capital and $m\coloneqq c/w$ is the marginal propensity to consume. The following proposition characterizes the solution to the entrepreneur's problem.

\begin{proposition}\label{prop:agent1}
Consider an entrepreneur facing expected return $\mu_t=f'(y_t)$, tax rate $\tau_t$, and volatility $\sigma$. Then the solution to the entrepreneur's problem is
\begin{subequations}
\begin{align}
\theta_t&=\frac{\max\set{\mu_t-r_t,0}}{(1-\tau_t)\gamma\sigma^2},\label{eq:optTheta1}\\
m_t&=\beta.\label{eq:optC1}
\end{align}
\end{subequations}
\end{proposition}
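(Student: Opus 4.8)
The plan is to solve the entrepreneur's problem by dynamic programming and then invoke a verification theorem for stochastic differential utility. Since the only randomness an individual entrepreneur faces is the idiosyncratic Brownian motion $B_t$, which is spanned by the single risky asset (own capital), the equilibrium paths $\set{(\mu_t,r_t,\tau_t)}$ of the investment opportunity set carry no aggregate risk and may be treated as deterministic; this is what rules out any intertemporal hedging demand. Accordingly I would conjecture a homothetic value process
\begin{equation*}
V(t,w)=\frac{(b_t w)^{1-\gamma}}{1-\gamma}
\end{equation*}
for a strictly positive deterministic function $b_t$ to be determined. The HJB equation attached to the reduced budget constraint \eqref{eq:bc2} and the Duffie--Epstein aggregator \eqref{eq:hcv1} is
\begin{equation*}
0=\sup_{\theta\ge 0,\ m\ge 0}\set{h(mw,V)+V_t+V_w\big[(1-\tau_t)(\mu_t\theta+r_t(1-\theta))-m\big]w+\tfrac12 V_{ww}(1-\tau_t)^2\sigma^2\theta^2 w^2}.
\end{equation*}

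First I would carry out the pointwise maximization. The $\theta$-dependent terms form a strictly concave quadratic, because $V_{ww}=-\gamma b_t^{1-\gamma}w^{-\gamma-1}$ has the sign of $-\gamma<0$; the unconstrained optimum $-V_w(\mu_t-r_t)/(V_{ww}w(1-\tau_t)\sigma^2)$ collapses, after substituting the conjectured derivatives, to $(\mu_t-r_t)/(\gamma(1-\tau_t)\sigma^2)$, free of $b_t$ and $w$. Imposing the short-sale constraint $\theta\ge 0$ via Kuhn--Tucker gives exactly \eqref{eq:optTheta1}. For consumption, the first-order condition in $c=mw$ is $\partial h/\partial c=V_w$; using $\partial h(c,v)/\partial c=\beta(1-\gamma)v/c$ together with $(1-\gamma)V=b_t^{1-\gamma}w^{1-\gamma}$ and $V_w=b_t^{1-\gamma}w^{-\gamma}$, this reduces to $\beta w/c=1$, i.e.\ $m_t=\beta$, which is \eqref{eq:optC1}. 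The unit elasticity of intertemporal substitution encoded in \eqref{eq:hcv1} is precisely what forces the consumption–wealth ratio to equal the discount rate regardless of the (time-varying) returns.

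Substituting the maximizers back into the HJB reduces it to a scalar linear ordinary differential equation for $\log b_t$ with coefficients built from $\set{(\mu_t,r_t,\tau_t)}$; I would exhibit its solution, bounded on compact time intervals and satisfying the transversality behaviour needed so that $V(t,w_t)$ is the genuine continuation utility rather than merely an HJB solution. The closing step is verification: along the candidate policy the wealth dynamics \eqref{eq:bc2} are geometric Brownian motion, so $w_t>0$ almost surely and the integrability conditions hold; one then checks that $\int_0^t h(c_s,V(s,w_s))\diff s+V(t,w_t)$ is a martingale under the candidate controls and a supermartingale under any admissible control, which delivers optimality. This is the Duffie--Epstein \citeyear{duffie-epstein1992a} stochastic differential utility verification argument specialized to a homothetic problem.

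The main obstacle is not the first-order conditions, which are routine, but this verification step for recursive utility: unlike time-additive expected utility, the stochastic differential utility in \eqref{eq:utility} is defined implicitly, so one must confirm existence and uniqueness of the backward equation in the relevant class, establish the transversality and integrability needed for the HJB solution to coincide with the value function, and separately handle the corner regime $\mu_t\le r_t$, where $\theta_t=0$ binds and the portfolio first-order condition is replaced by complementary slackness. These are standard features of homothetic Epstein--Zin problems with unit EIS, so I would dispatch them by citing and lightly adapting the existing machinery rather than re-deriving it.
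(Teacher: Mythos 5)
Your proposal is correct and follows essentially the same route as the paper: the paper proves the general-EIS Proposition~\ref{prop:agent} by conjecturing the homothetic value function $J(t,w)=\frac{1}{1-\gamma}a_t^{1-\gamma}w^{1-\gamma}$, taking pointwise first-order conditions in $\theta$ and $m$ in the HJB equation, and then specializing to $\varepsilon=1$, which is exactly your computation. The only (harmless) difference is that the paper assumes existence of a solution in the statement of Proposition~\ref{prop:agent} and so skips the verification/transversality step you outline.
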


\begin{proof}
Special case of Proposition \ref{prop:agent} in Appendix \ref{sec:proof}.
\end{proof}

\subsection{Equilibrium}

We now characterize the equilibrium and the stationary wealth distribution. Loosely speaking, given the path of tax rates $\set{\tau_t}_{t\in [0,\infty)}$ and the initial condition (wealth distribution), a sequential equilibrium is defined by paths of interest rate and wage $\set{(r_t,\omega_t)}_{t\in [0,\infty)}$, consumption-portfolio rules of entrepreneurs, and wealth distributions such that entrepreneurs maximize utility, the bond and labor markets clear, and the wealth distribution is consistent with the behavior of entrepreneurs and the entry/exit mechanism.

Due to our simplifying assumption of unit elasticity, the equilibrium analysis is quite tractable. First, note that the optimal behavior of entrepreneurs is characterized as in Proposition \ref{prop:agent1}. Because entrepreneurs hold identical portfolios $\theta$ in \eqref{eq:optTheta1}, the bond is in zero net supply, and only entrepreneurs trade bonds, the bond market clearing condition implies
\begin{equation}
1-\theta=0\iff \theta=1.\label{eq:bclear}
\end{equation}
%Then from \eqref{eq:optTheta1}, the expected return must satisfy
%\begin{equation}
%\mu=r+(1-\tau)\gamma\sigma^2.\label{eq:mu}
%\end{equation}
By the first-order condition of the profit maximization \eqref{eq:lfoc}, all entrepreneurs choose the same capital-labor ratio $y=k/l$. Because aggregate labor supply is fixed at $L$, letting $K$ be the aggregate capital, the labor market clearing condition implies
\begin{equation}
y=k/l=K/L.\label{eq:Lclear}
\end{equation}
%Combining \eqref{eq:mu}, \eqref{eq:Lclear}, and $\mu=f'(y)$, the interest rate satisfies
%\begin{equation}
%r=f'(K/L)-(1-\tau)\gamma\sigma^2.\label{eq:rK}
%\end{equation}

Next, substituting \eqref{eq:optC1}, \eqref{eq:bclear}, and \eqref{eq:Lclear} into the budget constraint \eqref{eq:bc2} and using $b=0$, we obtain the individual wealth (capital) dynamics
\begin{equation}
\diff k =gk\diff t+vk\diff B \coloneqq [(1-\tau)f'(K/L)-\beta]k\diff t+(1-\tau)\sigma k\diff B.\label{eq:kdynamics}
\end{equation}
Because the capital of exiting entrepreneurs is recycled back to the economy as the initial capital of entering entrepreneurs and the investment risk $\diff B$ is independent across entrepreneurs, aggregating \eqref{eq:kdynamics} we obtain the law of motion for aggregate capital
\begin{equation}
\diff K=[(1-\tau)f'(K/L)-\beta]K\diff t.\label{eq:Kdynamics}
\end{equation}
In the steady state ($\tau$ and $K$ are constant), \eqref{eq:Kdynamics} implies that aggregate capital satisfies
\begin{equation}
(1-\tau)f'(K/L)-\beta=0\iff f'(K/L)=\frac{\beta}{1-\tau}.\label{eq:steadyK}
\end{equation}
%Combining \eqref{eq:rK} and \eqref{eq:steadyK}, the steady state interest rate is
%\begin{equation}
%r=\frac{\beta}{1-\tau}-(1-\tau)\gamma\sigma^2. \label{eq:steadyr}
%\end{equation}

Because the wealth dynamics \eqref{eq:kdynamics} of entrepreneurs is a geometric Brownian motion (when $K$ is constant), and entrepreneurs enter and exit at Poisson rate $\lambda$, it follows from the result of \cite{reed2001} that the stationary wealth distribution is double Pareto, which obeys the power law in both the upper and lower tails.\footnote{This ``double power law'' is empirically observed for income as documented by \cite{reed2001} and \cite{Toda2012JEBO}. More generally, \cite{BeareToda2022ECMA} (discrete-time) and \cite{BeareSeoToda2021ET} (continuous-time) show that a random Markov-modulated multiplicative process stopped at an exponentially-distributed time exhibits Pareto upper and lower tails. Although we focus on the \iid case for simplicity, it is straightforward to allow for Markov modulation in our model and apply their results to characterize the Pareto exponent.} According to \citet[Equation (A.4)]{reed2001}, the Pareto exponent $\alpha$ is the positive root of the quadratic equation
\begin{equation}
\frac{v^2}{2}z^2+\left(g-\frac{v^2}{2}\right)z-\lambda=0,\label{eq:quad}
\end{equation}
where $g$ is the expected growth rate and $v$ is the volatility of the geometric Brownian motion. We therefore obtain the following proposition.

\begin{proposition}\label{prop:PE}
The stationary wealth distribution of entrepreneurs is double Pareto. The Pareto exponent is given by
\begin{equation}
\alpha=\frac{1}{2}\left(1+\sqrt{1+\frac{8\lambda}{(1-\tau)^2\sigma^2}}\right),\label{eq:PE}
\end{equation}
which is increasing in the marginal tax rate $\tau$.
\end{proposition}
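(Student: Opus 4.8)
The plan is to read off \eqref{eq:PE} from the characteristic quadratic \eqref{eq:quad} after imposing the steady-state condition, and then verify the monotonicity claim by inspection; all the heavy lifting (reducing the budget constraint to the geometric Brownian motion \eqref{eq:kdynamics}, deriving the aggregate law of motion \eqref{eq:Kdynamics}, and invoking the \cite{reed2001} double-Pareto characterization) has already been carried out above.

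First I would argue that the steady state is well defined and pins down a zero individual drift. In a steady state $\tau$ and $K$ are constant, so the aggregate ODE \eqref{eq:Kdynamics} forces the rest point \eqref{eq:steadyK}, i.e.\ $f'(K/L)=\beta/(1-\tau)$. Assumption \ref{asmp:prod} guarantees this equation has a unique solution, since $f'$ is continuous and strictly decreasing (because $f''<0$) with $f'(0)=\infty$ and $f'(\infty)\le 0<\beta/(1-\tau)$. Consequently the individual drift in \eqref{eq:kdynamics} vanishes, $g=(1-\tau)f'(K/L)-\beta=(1-\tau)\cdot\tfrac{\beta}{1-\tau}-\beta=0$, while the individual volatility is $v=(1-\tau)\sigma$.

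Next I would apply Reed's theorem. With $K$ constant, each entrepreneur's capital follows the geometric Brownian motion \eqref{eq:kdynamics}; entrepreneurs are replaced at Poisson rate $\lambda$ by entrants endowed with the average exiting capital (a deterministic constant in steady state), and the Brownian increments are independent across entrepreneurs. Hence the steady-state cross-sectional distribution of $k$ is the law of this GBM evaluated at an independent exponentially distributed time with rate $\lambda$, which by \cite{reed2001} is double Pareto with upper-tail exponent equal to the positive root of \eqref{eq:quad}. Substituting $g=0$ into \eqref{eq:quad} and multiplying by $2/v^2$ gives $z^2-z-2\lambda/v^2=0$, whose positive root is $\tfrac12\bigl(1+\sqrt{1+8\lambda/v^2}\bigr)$; inserting $v^2=(1-\tau)^2\sigma^2$ yields exactly \eqref{eq:PE}. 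Monotonicity is then immediate: for $\tau\in[0,1)$ the map $\tau\mapsto(1-\tau)^2$ is positive and strictly decreasing, so $\tau\mapsto 8\lambda/((1-\tau)^2\sigma^2)$ is strictly increasing (since $\lambda,\sigma>0$), hence so are its square root and $\alpha$.

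The one genuinely delicate point — the step I would treat most carefully — is matching the entry/exit construction of Section \ref{sec:model} to the precise object to which Reed's double-Pareto result applies: namely that in steady state the endowment of each entrant is the deterministic average exiting capital, so that the killed log-capital process is a Brownian motion with constant drift and variance started from a fixed point and stopped at an independent $\mathrm{Exp}(\lambda)$ clock. Once that correspondence is established, the remainder is the one-line quadratic computation above.
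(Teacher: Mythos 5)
Your proposal is correct and follows essentially the same route as the paper's own proof: impose the steady-state condition \eqref{eq:steadyK} to get $g=0$ and $v=(1-\tau)\sigma$ in \eqref{eq:kdynamics}, solve the quadratic \eqref{eq:quad} for its positive root, and observe monotonicity in $\tau$. The additional care you take (uniqueness of the steady state under Assumption \ref{asmp:prod} and the precise matching of the entry/exit mechanism to the object in \cite{reed2001}) is a welcome elaboration but does not change the argument.
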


\begin{proof}
Since in the steady state we have $g=0$ and $v=(1-\tau)\sigma$ in \eqref{eq:kdynamics}, solving the quadratic equation \eqref{eq:quad} for the positive root, we obtain \eqref{eq:PE}. Since $\tau\in [0,1)$ and all parameters are positive, $\alpha$ is increasing in $\tau$.
\end{proof}

What is remarkable about Proposition \ref{prop:PE} is that the theoretical Pareto exponent $\alpha$ takes the simple form \eqref{eq:PE}, which depends only on the idiosyncratic volatility, bankruptcy rate, and the tax rate. In particular, it is independent from preference parameters (discount rate, risk aversion, etc.), the production function, and aggregate labor supply. As a numerical illustration, we set $\sigma=0.2633$ and $\lambda=0.0078$, which are the average values in our data set used in Section \ref{sec:results}. Figure \ref{fig:scatterModel} shows the theoretical Pareto exponent computed from \eqref{eq:PE} when we vary the marginal tax rate in the range between 30\% and 70\%, together with the empirical estimates obtained in Figure \ref{fig:scatter}. Although the model is quite stylized, the theoretical Pareto exponent is similar to the empirical estimates.

\begin{figure}[!htb]
\centering
\begin{subfigure}{0.48\linewidth}
\includegraphics[width=\linewidth]{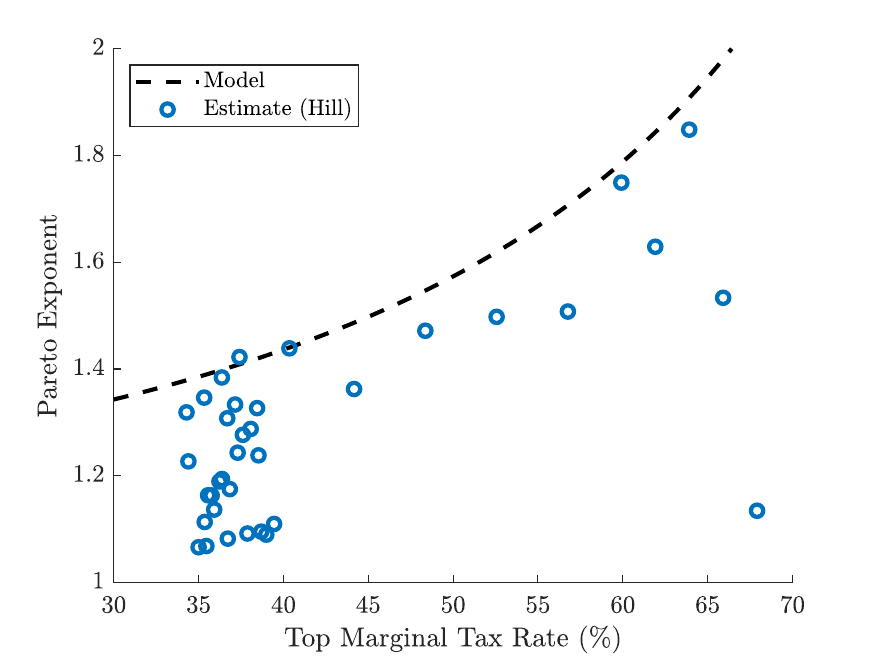}
\caption{\cite{Hill1975} estimator.}
\end{subfigure}
\begin{subfigure}{0.48\linewidth}
\includegraphics[width=\linewidth]{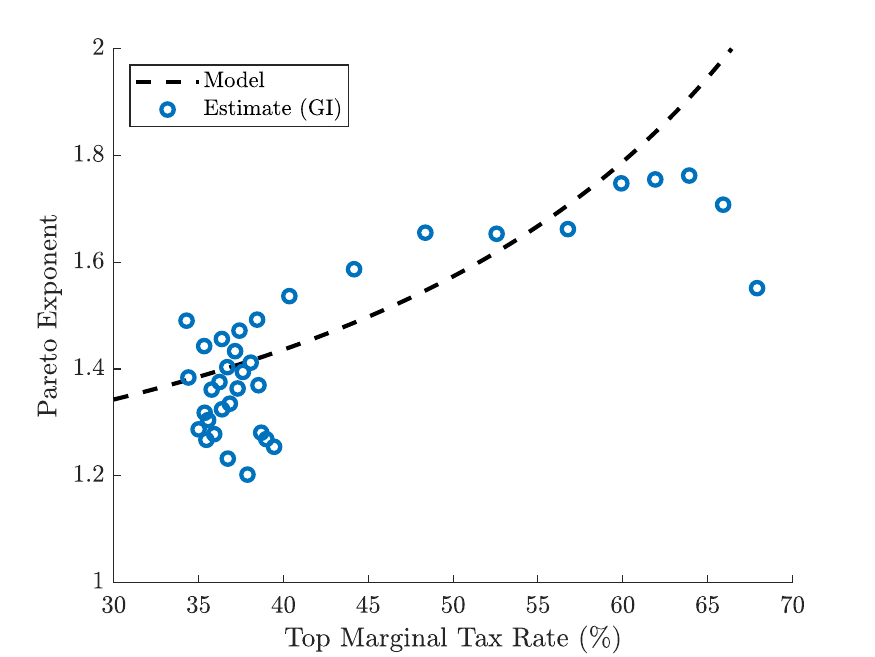}
\caption{\cite{Gabaix2011rank} estimator.}
\end{subfigure}
\caption{Marginal tax rate and Pareto exponent.}
\label{fig:scatterModel}
\end{figure}

\subsection{Taxation and welfare}

So far we know from Proposition \ref{prop:PE} that within the model, a higher tax rate unambiguously reduces wealth inequality. However, inequality itself may not be a relevant policy target. We now use the model to study the welfare implications of taxation.

Because the economy consists of two agent types, instead of considering a particular social welfare function, we study the welfare of entrepreneurs and workers separately. For entrepreneurs, according to Proposition \ref{prop:agent}, the value function of an entrepreneur with wealth $w$ takes the form $J(w)=\frac{1}{1-\gamma}(aw)^{1-\gamma}$ for some constant $a>0$ that depends on exogenous parameters. Because an entrepreneur has initial capital $K$, we convert the value function into units of consumption and define the welfare criterion by $\cW_E=aK$. For workers, because they are hand-to-mouth, the welfare criterion is consumption. If the tax revenue is wasted, then the worker welfare equals the wage and $\cW_W^0\coloneqq \omega$. If the tax revenue is fully redistributed to workers, then worker welfare is $\cW_W^1\coloneqq \omega+\cT/L$, where $\cT$ is the aggregate tax revenue. The following proposition characterizes these objects.

\begin{proposition}\label{prop:welfare}
Let
\begin{equation}
K=K(\tau)\coloneqq (f')^{-1}\left(\frac{\beta}{1-\tau}\right)L \label{eq:Ktau}
\end{equation}
be the steady state aggregate capital determined by \eqref{eq:steadyK}. Then the welfare of workers and entrepreneurs as well as the aggregate tax revenue are given by
\begin{subequations}\label{eq:welfare}
\begin{align}
\cW_W^0&=f(K/L)-(K/L)f'(K/L),\label{eq:WW0}\\
\cW_W^1&=f(K/L)-\beta K/L,\label{eq:WW1}\\
\cW_E&=\beta \e^{-\frac{1}{2\beta}(1-\tau)^2\gamma\sigma^2}K,\label{eq:WE}\\
\cT&=\tau Kf'(K/L)=(f'(K/L)-\beta)K.\label{eq:T}
\end{align}
\end{subequations}
The worker welfare with/without redistribution $\cW_W^1,\cW_W^0$ are both strictly decreasing in the tax rate $\tau$. If in addition the production function is Cobb-Douglas as in \eqref{eq:CD}, then the entrepreneur welfare $\cW_E$ and tax revenue $\cT$ are both single-peaked (initially increasing and then decreasing) in $\tau$.
\end{proposition}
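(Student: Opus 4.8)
The plan is to first establish the four closed-form expressions in \eqref{eq:welfare} by substituting equilibrium quantities, and then to reduce each shape claim to the sign of a single derivative in a well-chosen variable.

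\textbf{The formulas.} Since $\cW_W^0=\omega$ by definition and the profit-maximizing wage satisfies $\omega=f(y)-yf'(y)$ by \eqref{eq:lfoc} with $y=K/L$ in equilibrium (by \eqref{eq:Lclear}), identity \eqref{eq:WW0} is immediate. For $\cT$: the per-entrepreneur tax base is $F(k,l)-\omega l=kf'(y)$ (using $b=0$ and the computation preceding \eqref{eq:bc2}); integrating over the unit mass of entrepreneurs with common $y=K/L$ gives $\cT=\tau Kf'(K/L)$, and the steady-state condition \eqref{eq:steadyK}, $f'(K/L)=\beta/(1-\tau)$, turns $\tau f'(K/L)$ into $f'(K/L)-\beta$, which is \eqref{eq:T}. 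Then $\cW_W^1=\omega+\cT/L=\big(f(K/L)-(K/L)f'(K/L)\big)+\big(f'(K/L)-\beta\big)(K/L)=f(K/L)-\beta K/L$, the $f'(K/L)$ terms cancelling, giving \eqref{eq:WW1}. For $\cW_E=aK$, I will use the Hamilton--Jacobi--Bellman equation behind Proposition \ref{prop:agent}: plugging $J(w)=\frac{1}{1-\gamma}(aw)^{1-\gamma}$ and $c=\beta w$ into the HJB and dividing by $(1-\gamma)J$ yields $\log a=\log\beta+\tfrac1\beta\big(\pi-\tfrac\gamma2 v^2-\beta\big)$, where $\pi$ and $v$ are the after-tax drift and volatility coefficients of wealth; in equilibrium $\theta=1$, so $v=(1-\tau)\sigma$ and $\pi=(1-\tau)f'(K/L)=\beta$ by \eqref{eq:steadyK}, leaving $a=\beta\exp\big(-\tfrac{1}{2\beta}(1-\tau)^2\gamma\sigma^2\big)$; multiplying by the initial capital $K$ gives \eqref{eq:WE}. (The passage to consumption units uses that a constant stream $c$ has continuation value $\tfrac{c^{1-\gamma}}{1-\gamma}$ under \eqref{eq:hcv1}.)

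\textbf{Worker welfare is decreasing.} Write $y(\tau)=K(\tau)/L=(f')^{-1}\!\big(\beta/(1-\tau)\big)$. Since $f''<0$, $(f')^{-1}$ is strictly decreasing, and $\tau\mapsto\beta/(1-\tau)$ is strictly increasing on $[0,1)$, so $y$ is strictly decreasing with $y'(\tau)=\beta/\big((1-\tau)^2 f''(y)\big)<0$. Now $\tfrac{d}{dy}\big(f(y)-yf'(y)\big)=-yf''(y)>0$, so $\cW_W^0$ is strictly decreasing in $\tau$; and $\tfrac{d}{dy}\big(f(y)-\beta y\big)=f'(y)-\beta=\tfrac{\beta\tau}{1-\tau}\ge 0$ along the steady-state locus (strict for $\tau>0$), so $\cW_W^1$ is strictly decreasing in $\tau$ as well.

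\textbf{Single-peakedness under Cobb--Douglas.} Introduce the strictly increasing reparametrization $s:=1/(1-\tau)\in[1,\infty)$. With $f'(k)=A\rho k^{\rho-1}-\delta$, \eqref{eq:steadyK} gives $K/L=\big(A\rho/(\delta+\beta s)\big)^{1/(1-\rho)}$, so up to positive constants $\cT\propto(s-1)(\delta+\beta s)^{-1/(1-\rho)}$ and $\cW_E\propto\exp\!\big(-\tfrac{\gamma\sigma^2}{2\beta}s^{-2}\big)(\delta+\beta s)^{-1/(1-\rho)}$. Differentiating $\cT$, its derivative has the sign of the affine map $s\mapsto\delta+\tfrac{\beta}{1-\rho}-\tfrac{\beta\rho}{1-\rho}s$, which equals $\delta+\beta>0$ at $s=1$, is strictly decreasing, and vanishes at $s^\ast=\tfrac{\delta(1-\rho)}{\beta\rho}+\tfrac1\rho>1$; hence $\cT$ is single-peaked in $s$, thus genuinely hump-shaped in $\tau$. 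Differentiating $\log\cW_E$, its derivative has the sign of $-h(s)$ where $h(s)=\beta^2 s^3-\gamma\sigma^2(1-\rho)\beta s-\gamma\sigma^2(1-\rho)\delta$; since $h$ is a cubic with $h(0)\le 0$, $h(s)\to+\infty$, and $h'(s)=3\beta^2 s^2-\gamma\sigma^2(1-\rho)\beta$ vanishing only once on $(0,\infty)$, $h$ has a unique positive root $s_0$ with $h<0$ on $(0,s_0)$ and $h>0$ on $(s_0,\infty)$; thus $\log\cW_E$, hence $\cW_E$, increases on $[1,s_0]$ and decreases afterward, i.e., is single-peaked in $\tau$, with the initial increasing phase nonempty precisely when $s_0>1$, equivalently $\gamma\sigma^2(1-\rho)(\beta+\delta)>\beta^2$ (a condition satisfied at the calibrated parameters, and worth flagging in the write-up).

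\textbf{Main obstacle.} The algebraic identities and the worker-welfare monotonicity are routine. The substantive step is the shape analysis of $\cW_E$ under Cobb--Douglas: the exponential ``insurance'' factor $\exp\!\big(-\tfrac{1}{2\beta}(1-\tau)^2\gamma\sigma^2\big)$ pushes the optimum toward higher $\tau$ while the capital factor $\big(\delta+\beta/(1-\tau)\big)^{-1/(1-\rho)}$ pushes toward lower $\tau$, and one must show the crossover is unique — which I reduce to the cubic $h$ having a single positive root via the sign of $h'$. This is exactly where the Cobb--Douglas functional form is used and the one place requiring genuine care.
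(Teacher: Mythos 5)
Your proposal is correct, and its skeleton (derive the four formulas from the equilibrium conditions, show $y=K/L$ is decreasing in $\tau$, then do a sign analysis of a derivative in the variable $1/(1-\tau)$) matches the paper's proof. The differences are in the two single-peakedness arguments, and they are worth noting. For $\cW_E$ the paper also substitutes $x=1/(1-\tau)$ but writes $g'(x)=h(x)/x$ with $h(x)=\frac{\gamma\sigma^2}{\beta x^2}+\frac{1}{\rho-1}\frac{\beta x}{\beta x+\delta}$, which is \emph{monotone} decreasing from $+\infty$ to $\frac{1}{\rho-1}<0$, so the sign change is immediate; you instead clear denominators and analyze a cubic, which costs you the extra (correctly executed) step of locating its unique positive root via $h'$. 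For $\cT$ the paper argues abstractly that $\cT$ is a strictly concave function of $K$ composed with the monotone map $K(\tau)$, hence quasi-concave in $\tau$; your direct computation in $s$ is less slick but buys something the paper's argument leaves implicit: you verify that the derivative is strictly positive at $s=1$ (i.e.\ at $\tau=0$) with the peak at $s^\ast>1$, so the ``initially increasing'' phase is genuinely nonempty. Likewise, your observation that the interior peak of $\cW_E$ on $\tau\in[0,1)$ requires $\gamma\sigma^2(1-\rho)(\beta+\delta)>\beta^2$ is a real caveat that the paper's proof (which establishes the hump shape on all of $x\in(0,\infty)$ but only uses $x\ge 1$) does not address; flagging it is appropriate rather than a defect. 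Your shortcut for \eqref{eq:WE}, evaluating the HJB drift-minus-variance term at $\theta=1$ and $(1-\tau)f'(K/L)=\beta$ directly, is equivalent to the paper's route through the equilibrium interest rate \eqref{eq:steadyr} and \eqref{eq:steadya}, just with one fewer intermediate object.
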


The intuition for why the worker welfare is decreasing in the tax rate is straightforward: when the tax rate is high, the steady state capital is low due to the concavity of the production function (see \eqref{eq:steadyK}), and hence the wage is low. The intuition for why the entrepreneur welfare is non-monotonic is due to the trade-off between risk and wealth level. A higher tax rate provides insurance against capital income risk because losses can be deducted, which improves welfare as we can see from the first term in \eqref{eq:WE}. However, a higher tax rate also prevents capital accumulation and hence wealth, as we can see from \eqref{eq:steadyK} and the second term in \eqref{eq:WE}. Consequently, entrepreneurs prefer an intermediate tax rate.

As a numerical illustration, we set $\beta=0.05$, $\gamma=2$, $\rho=0.38$, and $\delta=0.08$, which are all standard values. We also set $A=1$ (normalization) and $L=9$ (so 90\% of agents are workers), although these two parameters are unimportant due to the homogeneity of the production function (different choices of $(A,L)$ only shift the quantities in \eqref{eq:welfare} by some multiplicative factors independent of $\tau$). Figure \ref{fig:welfare} shows the worker welfare (with/without redistribution), entrepreneur welfare, and the tax revenue. Consistent with Proposition \ref{prop:welfare}, the worker welfare in Figure \ref{fig:ModelWW} is decreasing in the tax rate $\tau$, although with redistribution the welfare is nearly flat for tax rates below 40\%. The entrepreneur welfare in Figure \ref{fig:ModelWE} is greatly affected by the tax rate and peaks at $\tau=0.444$, which is only slightly higher than the current top marginal income tax rate (37\%). The tax revenue in Figure \ref{fig:ModelT} is single-peaked, consistent with Proposition \ref{prop:welfare} and the Laffer curve.

\begin{figure}[!htb]
\centering
\begin{subfigure}{0.48\linewidth}
\includegraphics[width=\linewidth]{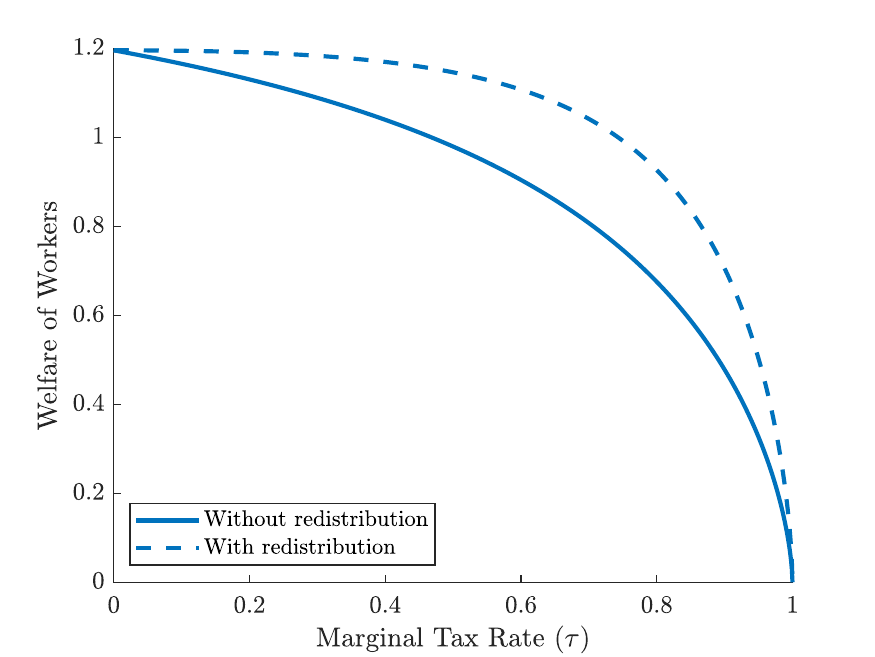}
\caption{Worker welfare.}\label{fig:ModelWW}
\end{subfigure}
\begin{subfigure}{0.48\linewidth}
\includegraphics[width=\linewidth]{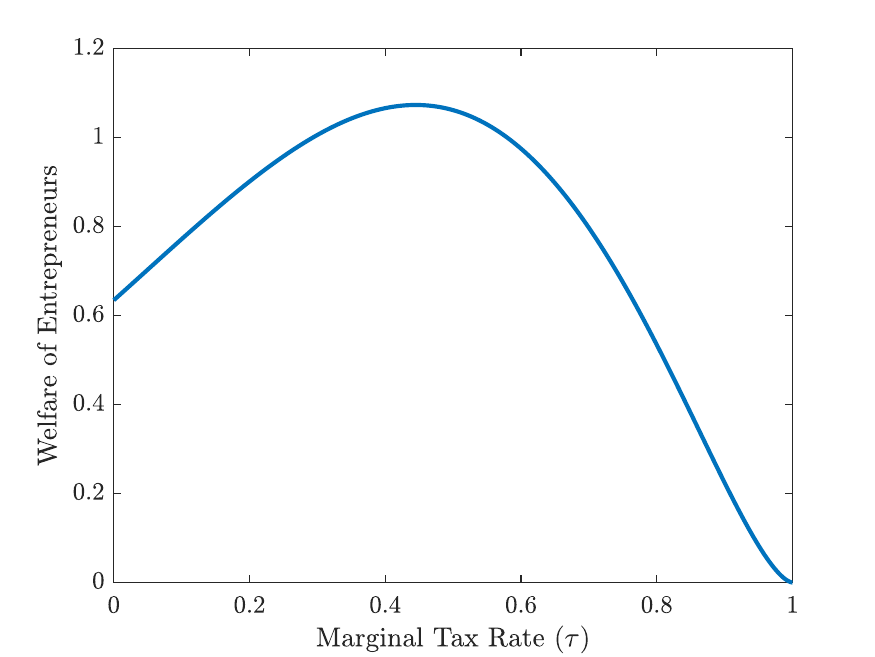}
\caption{Entrepreneur welfare.}\label{fig:ModelWE}
\end{subfigure}
\begin{subfigure}{0.48\linewidth}
\includegraphics[width=\linewidth]{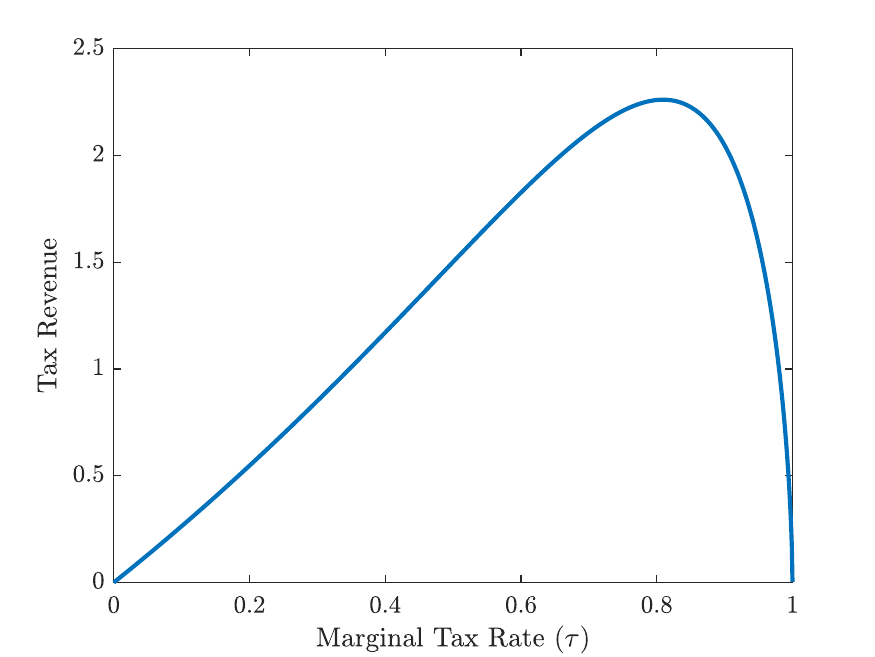}
\caption{Tax revenue.}\label{fig:ModelT}
\end{subfigure}
\caption{Effects of tax rate on welfare and tax revenue.}
\label{fig:welfare}
\end{figure}

%%%%%%%%%%%%%%%%%%%%%%%%%%%%%%%%%%%%%%%%%%%%%%%%%%%%%%%%%%%%%%%%%%%%%%
\section{Summary and discussions}\label{sec:summary} 
%%%%%%%%%%%%%%%%%%%%%%%%%%%%%%%%%%%%%%%%%%%%%%%%%%%%%%%%%%%%%%%%%%%%%%

We use Forbes 400 data along with historical data on tax rates and macroeconomic indicators to study the effects of the maximum marginal income tax rate on wealth inequality.
To analyze these effects, we propose a novel truncated tail regression model and an econometric method to analyze this model.
Using these data sets and the econometric model,
% in the ``natural experiment'' \citep{Feldstein1995} of the event that the tax rate rapidly declined in the 1980s, 
we find that a higher maximum tax rate induces a lower tail index, or equivalently, a higher Pareto exponent of the wealth distribution.
Setting the maximum tax rate to 30--40\%, as is the case with U.S. today, leads to the Pareto exponent of 1.5--1.8. %, entailing an infinite variance of the wealth distribution.
%To ensure a wealth distribution with finite second moment, the tax rate needs to be raised to 0.60 or above.
%In particular,
Counterfactually setting the maximum tax rate to 80\% as suggested by \citet{piketty2013capital} would achieve the Pareto exponent of 2.6.
Finally, we develop a simple economic model that explains these empirical findings about the effects of the marginal tax rate on wealth inequality. 
We find that the worker welfare is strictly decreasing in the tax rate even when the tax revenue is redistributed, and the entrepreneur-optimal tax rate is slightly above 40\%.

Although our proposed fixed-$k$ tail regression is motivated by the Forbes 400 data set, it is not limited to this particular application. 
From a theoretical perspective, our method allows for the truncation from both the top/right and the bottom/left as long as more than three extreme order statistics are observed \citep[cf.][]{WangXiao2021}. 
From an empirical perspective, our method can be extended to study other sampling processes when panel or repeated cross-sectional observations of $Y_{it}$ are available. 
Examples of such data sets can be found in the literature about firm sizes \citep[e.g.,][]{Axtell2001}, extreme infant birthweights \citep[e.g.,][]{Abrevaya2001}, medical insurance claims \citep[e.g.,][]{Manning2005}, and earnings risks \citep[e.g.,][]{Guvenen2021}, among others. 
All these data sets involve heavy tails in the distribution of $Y$, which are potentially affected by some covariate $X$ such as policy changes. 
Moreover, these data sets are also potentially affected by bottom truncation/censoring. 
For example, the income data is bottom truncated due to the minimum wage and the medical insurance claims are censored at zero. 

We close this paper with a list of limitations and directions for future research.
First, while \citet{Feldstein1995} describes the Tax Reform Act of 1986 leading to a decline in the tax rates as a ``natural experiment,'' there may be other latent factors on the wealth accumulation process that may have simultaneously changed during the 1980s.
If this is the case, our analysis that exploits the changes in the tax rates during this period may be potentially subject to omitted variables biases.
To at least partially account for this issue, we ran additional tail regressions including macroeconomic indicators to control for some latent factors and obtain similar estimates to our baseline estimates -- see Appendix \ref{sec:additional}.
That being said, one direction of future research to overcome this issue is to develop a new method of tail regression that can exploit exogenous variations (e.g., of an instrument) and to find such an instrument in data.

Second, given the nature of the event analysis with the tax decline perceived as a single-time event, difference-in-difference type analysis may be useful as well.
This will require a control group (e.g., countries with no tax decline in the 1980s) as well as a treatment group (e.g., United States).
But this approach introduces another difficulty, such as the requirement of a common trend assumption for identification, which is hardly plausible in cross-country analyses.
An additional difficulty with this approach is that extreme wealth data are not available for many other countries than the United States.
In this light, another direction of future research is to develop an econometric method for event study with more plausible assumptions (than the standard common trend assumptions) and to develop an extreme wealth value database across countries.

Third, while we focus on the common income tax rates, this aspect of the tax information only incompletely explains the wealth accumulation process. For instance, we do not use information about heterogeneous tax rates across states, as the Forbes 400 data do not contain information about the identity of all the listed individuals and thus their geographical attributes.
We also do not account for the possibility of retaining earnings through
the company and not having to pay taxes every period \citep*{acemoglu2020does}.
To at least partially account for the this possibility, we consider including corporate tax rates in the additional regressions run in Appendix \ref{sec:additional}.
That said, an important direction for future research is to acquire more detailed tax information about the entrepreneurs listed in the extreme order database.

\vspace{1cm}
\appendix
%%%%%%%%%%%%%%%%%%%%%%%%%%%%%%%%%%%%%%%%%%%%%%%%%%%%%%%%%%%%%%%%%%%%%%
\section*{Appendix}
%%%%%%%%%%%%%%%%%%%%%%%%%%%%%%%%%%%%%%%%%%%%%%%%%%%%%%%%%%%%%%%%%%%%%%
Appendix \ref{sec:proof} contains mathematical proofs. 
Appendix \ref{sec:robust} contains an additional theoretical result about robustness to measurement error. 
Appendix \ref{sec:additional} contains additional empirical estimation results. 
Appendix \ref{sec:simulation} contains simulation studies.

%%%%%%%%%%%%%%%%%%%%%%%%%%%%%%%%%%%%%%%%%%%%%%%%%%%%%%%%%%%%%%%%%%%%%%
\section{Mathematical proofs}\label{sec:proof}
%%%%%%%%%%%%%%%%%%%%%%%%%%%%%%%%%%%%%%%%%%%%%%%%%%%%%%%%%%%%%%%%%%%%%%

\subsection{Proof of Section \ref{subsec:theory} results}\label{subsec:proofmain}

We first establish a lemma about the Hellinger distance between generic
densities $f$ and $g$. 
Define the Hellinger distance between $f$ and $g$ by  
\begin{equation*}
H(f,g) =\left( \int \left( f(t)^{1/2}-g^{1/2}\left( t\right)
\right)^2dt\right) ^{1/2}.
\end{equation*}

\begin{lemma}\label{lemma:hellinger}
Conditional on $X_t=x$ for any $x$, the Hellinger
distance between $f_{\bY_t^*|X_t=x}$ and $f_{\bV_t^*|\xi_t=\Lambda \left( x^\intercal \theta_0\right) }$
satisfies 
\begin{equation}
H\left( f_{\bY_t^*|X_t=x},f_{\bV_t^*|\xi
_t=\Lambda \left( x^\intercal \theta_0\right) }\right) =O\left(
n^{-\beta(x) /\alpha (x)}\right) +O\left( 1/n\right)
\label{lemma1}
\end{equation}%
for any fixed $k$ as $n\to \infty$.
\end{lemma}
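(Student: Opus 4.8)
The plan is to compare the \emph{exact} finite-$n$ joint density of the $k$ largest order statistics with its extreme-value limit, using the Hall-type second-order expansion in Condition~\ref{cond2} to control the error, and then to convert that pointwise multiplicative error into a Hellinger bound.

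\emph{Step 1 (reduction by affine invariance).} The self-normalizing map $g:(y_1,\dots,y_k)\mapsto(1,\frac{y_2-y_k}{y_1-y_k},\dots,\frac{y_{k-1}-y_k}{y_1-y_k},0)$ is invariant under $y_j\mapsto\gamma y_j+\delta$ for every $\gamma>0$ and $\delta\in\R$, i.e.\ it is constant along affine orbits. Hence, writing $\mathbf{Z}_n$ for the vector with $j$-th coordinate $a_n^{-1}(Y_{(j)t}-b_n)$ for extreme-value norming constants $a_n,b_n$ of $F_{Y|X=x}$, we have $\bY_t^*|X_t=x=g(\mathbf{Z}_n)$, while $\bV_t^*$ has the same law as $g$ applied to \emph{any} affine transform of $\bV_t\sim f_{\bV|\xi_t}$. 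Since $\alpha(x)>1$ forces $\xi(x)=1/\alpha(x)\in(0,1)$, only the Fr\'echet domain arises and we may take $b_n=0$, $a_n=(c(x)n)^{1/\alpha(x)}$, for which $\mathbf{Z}_n$ converges in law to the process $\mathbf{W}$ with joint density $w\mapsto\exp(-w_k^{-\alpha(x)})\prod_{j=1}^k\alpha(x)w_j^{-\alpha(x)-1}$ on $\{w_1\ge\dots\ge w_k>0\}$, itself an affine transform of $\bV_t$; so $\bV_t^*$ has the law of $g(\mathbf{W})$. As the Hellinger distance is an $f$-divergence, hence non-increasing under pushforwards, $H(f_{\bY_t^*|X_t=x},f_{\bV_t^*|\xi_t})\le H(f_{\mathbf{Z}_n},f_{\mathbf{W}})$, and it remains to bound the latter by $O(n^{-\beta(x)/\alpha(x)})+O(1/n)$, uniformly in $x$.

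\emph{Step 2 (expanding the exact density).} For $n$ \iid draws from $F=F_{Y|X=x}$, which has a density $f$ in its upper tail by Condition~\ref{cond2}, the joint density of $\mathbf{Z}_n$ on $\{v_1\ge\dots\ge v_k\}$ is $\frac{n!}{(n-k)!}\,a_n^{k}\,F(a_nv_k)^{n-k}\prod_{j=1}^kf(a_nv_j)$. I would insert $1-F(y)=c(x)y^{-\alpha(x)}(1+d(x)y^{-\beta(x)}+r(x,y))$ and the matching expansion of $f=-(1-F)'$, use $c(x)na_n^{-\alpha(x)}=1$, and expand the three factors: $\frac{n!}{(n-k)!}=n^{k}(1+O(1/n))$; $n\,a_nf(a_nv_j)=\alpha(x)v_j^{-\alpha(x)-1}(1+O(n^{-\beta(x)/\alpha(x)}))$; and, via $\log F=-(1-F)+O((1-F)^2)$ with $1-F(a_nv_k)\asymp n^{-1}v_k^{-\alpha(x)}$, $F(a_nv_k)^{n-k}=\exp(-v_k^{-\alpha(x)})(1+O(n^{-\beta(x)/\alpha(x)})+O(1/n))$. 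Multiplying these out gives $f_{\mathbf{Z}_n}(\mathbf{v})=f_{\mathbf{W}}(\mathbf{v})\,(1+R_n(\mathbf{v}))$ with $\abs{R_n(\mathbf{v})}\le(C_1n^{-\beta(x)/\alpha(x)}+C_2/n)\,\rho_x(\mathbf{v})$, where $\rho_x$ does not depend on $n$ and grows at most polynomially in the coordinates on the support, and $C_1,C_2,\rho_x$ can be taken uniform in $x$ by the uniform bounds on $c$, $d$ and their derivatives and by $\sup_x\abs{r(x,y)y^{\beta(x)}}\to 0$ as $y\to\infty$ in Condition~\ref{cond2}.

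\emph{Step 3 (Hellinger bound; the main obstacle).} Using $H^2(f_{\mathbf{Z}_n},f_{\mathbf{W}})=\int f_{\mathbf{W}}(\mathbf{v})\,(\sqrt{1+R_n(\mathbf{v})}-1)^2\diff\mathbf{v}$ and $(\sqrt{1+r}-1)^2\le r^2$ for $r\ge-\tfrac12$ (on the set $\{R_n<-\tfrac12\}$, which sits in a neighbourhood of the lower edge $w_k\downarrow0$ where $f_{\mathbf{W}}$ is super-exponentially small, the integrand is at most $1$ and contributes a negligible amount), one gets $H^2\le(C_1n^{-\beta(x)/\alpha(x)}+C_2/n)^2\int f_{\mathbf{W}}(\mathbf{v})\rho_x(\mathbf{v})^2\diff\mathbf{v}+o\bigl((n^{-\beta(x)/\alpha(x)}+1/n)^2\bigr)$, and $\int f_{\mathbf{W}}\rho_x^2<\infty$ because $f_{\mathbf{W}}$ decays super-exponentially as $w_k\downarrow0$ --- dominating any negative powers of $w_k$ in $\rho_x$ --- and polynomially as $w_j\to\infty$, where $\rho_x$ itself decays. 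Taking square roots gives \eqref{lemma1}. The main obstacle is precisely this integrability control: $R_n$ is \emph{not} uniformly bounded, because near the lower edge $w_k\downarrow0$ both the Hall remainder $r(x,\cdot)$ and the $w^{-\beta(x)}$ correction blow up, so the real work is to verify that the blow-up of $\rho_x$ is only polynomial and is dominated by the super-exponential decay of $f_{\mathbf{W}}$, uniformly in $x$; a lesser point is to check that $a_n=(c(x)n)^{1/\alpha(x)}$ already carries the $O(n^{-\beta(x)/\alpha(x)})$ accuracy of the domain-of-attraction convergence, introducing no extra error.
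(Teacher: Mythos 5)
Your argument is sound and reaches the same rate, but it follows a genuinely different route from the paper's. The paper does not expand the exact finite-$n$ joint density against the Fr\'echet limit directly: it invokes Theorem 3.6 of \citet{Smith1987approx} to approximate $f_{a_n^{-1}(\bY_t-b_n)\mid X_t=x}$ in Hellinger distance by a \emph{penultimate} density $g_n$ built from the sample-size-dependent shape $\xi_n$ and the second-order correction $H_\beta(v,\xi_n)$, then bounds $H(g_n,f_{\bV\mid\xi})=O(r_n)$ by direct ratio estimates on $\psi_n'/\psi_0'$ and the exponential factors, combines the two by the triangle inequality, and finally passes to the self-normalized statistics by an explicit change of variables $\by\to(\by^*,z,y_k)$ that integrates out location and scale and bounds the total variation of the marginals by that of the joints. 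Your two substitutions are both defensible and in one respect cleaner: (i) invoking the data-processing inequality for the Hellinger distance (an $f$-divergence) under the affine-invariant self-normalizing map replaces the paper's marginalization computation, and avoids the paper's somewhat loose assertion that ``the bound in total variation implies the bound in Hellinger distance'' (in general that costs a square root; it is only the multiplicative form of the error, exactly as in your $(\sqrt{1+R}-1)^2\le R^2$ step, that delivers the linear rate); (ii) expanding the exact density $\frac{n!}{(n-k)!}a_n^kF(a_nv_k)^{n-k}\prod_jf(a_nv_j)$ yourself replaces the appeal to Smith's theorem. The price of (ii) is that you must supply the uniform control that Smith's theorem packages, and that is where your sketch is thinnest: for $v_k\lesssim n^{-1/\alpha(x)}$ the argument $a_nv_k$ is not large, Condition \ref{cond2} says nothing about $F$ or $f$ there, and the representation $f_{\mathbf{Z}_n}=f_{\mathbf{W}}(1+R_n)$ with $\abs{R_n}\le Cr_n\rho_x(\mathbf{v})$ and $\rho_x$ polynomial simply fails --- $R_n$ can be arbitrarily large and positive, not merely below $-\tfrac12$. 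The fix is standard but is not the one you state: split off the event $\set{Y_{(k)}\le a_n\delta_n}$ with $a_n\delta_n\to\infty$ slowly, bound its contribution to $H^2$ by $P(Y_{(k)}\le a_n\delta_n)+P(W_k\le\delta_n)$ using $(\sqrt{a}-\sqrt{b})^2\le a+b$, and control the first probability by the binomial/Poisson count of exceedances of $a_n\delta_n$ (super-exponentially small in $\delta_n^{-\alpha(x)}$), not by the density expansion. With that one insertion, and the routine check that Condition \ref{cond2}'s derivative bounds let you differentiate the tail expansion to get $f(y)=c(x)\alpha(x)y^{-\alpha(x)-1}(1+O(y^{-\beta(x)}))$, your proof goes through and is self-contained where the paper's relies on an external approximation theorem.
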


\begin{proof}[Proof of Lemma \protect\ref{lemma:hellinger}]
Since this proof is conditional on $X_t=x$, we therefore omit the
subscript $t$ for notational simplicity. Also denote $\xi_0=\Lambda
\left( x^\intercal \theta_0\right) $.

Let $f_{a_n^{-1}\left( \bY-b_n\right) |X=x}$ denote the density of 
$a_n^{-1}\left( \bY-b_n\right) $ conditional on $X=x$, and
similarly let $f_{\bV|\xi =\Lambda \left( x^\intercal \theta_0\right) }$
denote the density of $\bV$ conditional on $X=x$. Following \citet{Smith1987approx},
we define the following objects, all of which are conditional on $X=x$:
\begingroup
\allowdisplaybreaks 
\begin{align*}
\phi (v) &=\frac{1-F_{Y|X=x}(v) }{f_{Y|X=x}(v) }, \hspace{2cm}
h_{\beta }(v) =-\frac{v^{-\beta(x) }-1}{\beta
(x) }, \\
H_{\beta }(v,\eta) &=\frac{h_{\beta }(1+v\eta)
-\beta(x) h_{1}(1+v\eta) -(1-\beta(x)) \log (1+v\eta) }{-\beta(x)(1-\beta(x)) \eta ^{3}}, \\
b_n &=Q_{Y|X=x}(1-1/n), \\
\xi_n &=\partial \phi (v) /\partial v|_{v=b_n}, \\
r_n &=n^{-\beta(x) /\alpha (x)}, \\
\psi_0(v) &=(1+\xi v) ^{-1/\xi }, \\
\psi_n(v;c) &=(1+\xi_n v) ^{-1/\xi_n}\left(
1+cH_{\beta }(v,\xi_n) \right), \\
g_n(\by) &=\prod_{i=1}^{k}( -\psi_n'( y_i;cr_n)) \exp( -\psi_n(y_k;cr_n))
\end{align*}
\endgroup
for $y_{1}\ge \dots \ge y_k$ and $1+y_i\xi_n>0$ for each $i=1,\dots,k$. Let $\psi_n'$ denote the derivative of $\psi_n$ with respect to its first argument. Recall
the notation $\by=(y_{1},\dots,y_k)$.

Using Theorem 3.6 of \cite{Smith1987approx}, we have 
\begin{equation}
H\left( f_{a_n^{-1}(\bY_t-b_n)
|X_t=x},g_n\right) =o\left( n^{-\beta(x) /\alpha
(x)}\right) +O(1/n).  \label{penultimate}
\end{equation}%
Thus, the proof is complete by the following two steps. First, we show that 
\begin{equation}
H\left( g_n,f_{\bV|\xi =\Lambda (x^\intercal \theta
_0) }\right) =O\left( n^{-\beta(x) /\alpha (x)}\right) ,
\label{intermediate}
\end{equation}
which in turn yields  
\begin{equation}
H\left( f_{a_n^{-1}(\bY-b_n) |X_t=x},f_{\bV|
\xi =\Lambda( x^\intercal \theta_0) }\right) =O\left(
n^{-\beta(x) /\alpha (x)}\right) +O( 1/n)
\label{ultimate}
\end{equation}
by the triangle inequality. Second, we show that the densities of the
self-normalized statistics $\bY^*$ and $\bV^*$
also share the same convergence rate as desired in \eqref{lemma1}.

To establish \eqref{intermediate}, note that, conditional on $X=x$, 
\begingroup
\allowdisplaybreaks
\begin{align}
& \sup_{v\in [s_1,s_2] }\abs{\frac{\psi
_n'(v;cr_n) }{\psi_0'(v) }}\notag \\
& \le \sup_{v\in [s_1,s_2] }\abs{\frac{\left( 1+\xi
_nv\right) ^{-1/\xi_n-1}\left( 1+cr_nH_{\beta }\left( v,\xi
_n\right) \right) }{(1+\xi v) ^{-1/\xi -1}}}
+\sup_{v\in [s_1,s_2] }r_n\abs{\frac{c\left( 1+\xi
_nv\right) ^{-1/\xi_n}\frac{\partial H_{\beta }(v,\xi_n) 
}{\partial v}}{(1+\xi v) ^{-1/\xi -1}}} \notag \\
& \le 1+O\left( \xi_n-\xi \right) +O\left( r_n\right) \label{ineq1}\\
& = 1+O\left( r_n\right) \label{ineq2}
\end{align}
\endgroup
for any
compact set $[s_1,s_2] $ that is strictly within $(\max
\{-1/\xi_n,-1/\xi_0\},\infty )$,
where \eqref{ineq1} follows the mean value expansion and \eqref{ineq2} follows from the fact
that $O\left( \xi_n-\xi \right) =O\left( b_n^{-\beta(x)
}\right) =O\left( n^{-\beta(x) /\alpha (x) }\right)
=O\left( r_n\right) $; see, for example, \citet[p.~7]{Smith1987approx}.

Similarly, we have
\begingroup
\allowdisplaybreaks
\begin{align*}
&\sup_{v\in [s_1,s_2] }\exp \left( -\psi_n\left(
v;cr_n\right) +\psi_0(v) \right) \\
&=\sup_{v\in [s_1,s_2] }\exp \left( -\left( 1+\xi
_nv\right) ^{-1/\xi_n}\left( 1+cr_nH_{\beta }(v,\xi_n)
\right) +(1+\xi v) ^{-1/\xi }\right) \\
&=\exp \left( O\left( r_n\right) \right)
=1+O\left( r_n\right) .
\end{align*}
\endgroup
Therefore,
\begingroup
\allowdisplaybreaks
\begin{align}
&\int_{s_1}^{s_2}\abs{ g_n\left( \by\right) -f_{\bV|\xi =\Lambda( x^\intercal \theta_0) }(\by)
} \diff \by  \notag \\
&=\int_{s_1}^{s_2}\abs{ \prod_{i=1}^{k}\frac{\psi_n'( y_i;cr_n) }{\psi_0'(y_i) }\exp
\left( -\psi_n( y_i;cr_n) +\psi_0( y_i)
\right) -1} f_{\bV|\xi =\Lambda( x^\intercal \theta
_0) }(\by) \diff \by  \label{bound} \\
&=O(r_n). \notag
\end{align}
\endgroup
Once we extend the bound \eqref{bound} to allow $1+\xi s_1\to 0$ and $s_2\to \infty$, it is then sufficient for \eqref{intermediate}
since the bound in total variation implies the bound in Hellinger distance.
The extension with $s_2\to \infty$ is straightforward since $f_{\bV|\xi =\Lambda( x^\intercal \theta_0) }(\by) $ decays exponentially as $1+\xi y_k\to \infty$. Now
consider the case with $1+\xi s_1\to 0$. We have 
\begingroup
\allowdisplaybreaks
\begin{align*}
\frac{\psi_n'(v;cr_n) }{\psi_0'(v) }-1 &\asymp r_n\log (1+\xi v) \quad \text{and} \\
\exp \left( -\psi_n(v;cr_n) +\psi_0(v)
\right) -1 &\asymp r_n(1+\xi v) ^{-1/\xi -1}
\end{align*}%
\endgroup
as $1+\xi v\to 0$,
where we use $\asymp $ to denote equivalence in the order of magnitude. Then,
the facts (e.g., \citet[p.~17]{Smith1987approx}) that
\begin{align*}
&\int_{-1/\xi }^{\infty }\log (1+\xi v) \psi_0'(v) \exp (-\psi_0(v) )\diff v < \infty \quad \text{and} \\
&\int_{-1/\xi }^{\infty }(1+\xi v) ^{-1/\xi -1}\psi_0'(v) \exp (-\psi_0(v) )\diff v < \infty
\end{align*}%
yield the bound in \eqref{bound} for $1+\xi s_1\to 0$.

To establish \eqref{lemma1}, consider the change of variables $\by\to (\by^*,z,y_k)$ defined by $z = y_1-y_k$ and $y_i^* =\frac{y_i-y_k}{z}$ for $i=2,\dots,k-1$. Then
\begin{align*}
f_{\bY^*|\xi =\Lambda(x^\intercal \theta
_0) }( \by^*) &=\iint_0^{\infty
}z^{k-2}f_{a_n^{-1}(\bY-b_n) |X=x}\left(
y_k+z,y_{2}^*z+y_k,\dots,y_{k-1}^*z+y_k,y_k\right)
\diff z \diff y_k,\\
f_{\bV^*|\xi =\Lambda( x^\intercal \theta
_0) }( \by^*) &=\iint_0^{\infty
}z^{k-2}f_{\bV|\xi =\Lambda( x^\intercal \theta_0)
}\left( y_k+z,y_{2}^*z+y_k,\dots,y_{k-1}^*z+y_k,y_k\right)
\diff z \diff y_k.
\end{align*}
We can now complete the proof using
\begin{align*}
&\int \abs{f_{\bY^*|\xi =\Lambda(
x^\intercal \theta_0) }(\by^*) -f_{\bV^*|\xi =\Lambda( x^\intercal \theta_0) }( 
\by^*) } \diff \by^* \\
&\le \iiint_0^{\infty }z^{k-2}\Bigl\lvert 
f_{a_n^{-1}(\bY-b_n) |X_t=x}\left(
y_k+z,y_{2}^*z+y_k,\dots,y_{k-1}^*z+y_k,y_k\right) \\ 
&\hphantom{\le \iiint_0^{\infty }z^{k-2}\Bigl\lvert}-f_{\bV|\xi =\Lambda( x^\intercal \theta_0) }\left(
y_k+z,y_{2}^*z+y_k,\dots,y_{k-1}^*z+y_k,y_k\right)
\Bigr\rvert \diff z\diff y_k\diff \by^* \\
&=\int \abs{f_{a_n^{-1}(\bY-b_n)
|X_t=x}(\by) -f_{\bV|\xi =\Lambda(
x^\intercal \theta_0) }(\by)} \diff \by =O(r_n) +O(n^{-1}),
\end{align*}
where the last equality follows from \eqref{ultimate}.
\end{proof}

Using Lemma \ref{lemma:hellinger}, we are now ready to prove Theorem \ref{thm:asym}.

\begin{proof}[Proof of Theorem \ref{thm:asym}]

Recall that
\begin{equation*}
\hat{\theta}=\arg \max_{\theta \in \Theta }S_{T}\left( \theta \right) =-\frac{1}{T}\sum_{t=1}^{T}\log f_{\bV^*|\Lambda \left(X_{t}^{\intercal }\theta \right) }\left( \bY_{t}^*\right), 
\end{equation*}
where the density $f_{\bV^*|\Lambda(X_t^{\intercal}\theta)}$ is given in \eqref{fvstar}. 
Let $\underline{\xi}=\inf_{(x,\theta)\in \mathcal{X}\times \Theta} > 0 $ from Condition \ref{cond3}. 
By some straightforward but tedious calculation, we can verify that this density and its derivative $ \dot{f}_{\bV^{*}|\xi}(\by^{*}) =\frac{\partial }{\partial \xi}f_{\bV^{*}|\xi}(\by^{*}) $ satisfy the following properties:
\begin{align}
\sup_{\xi \in [\underline{\xi},1]} \int_{\mathcal{Y}} \left( \log f_{\bV^{*}|\xi }\left( \mathbf{y}^{*}\right) \right) ^{4} \diff \mathbf{y}^* < \infty, \label{fV_fact1}\\
\sup_{\xi \in [\underline{\xi},1]} \int_{\mathcal{Y}} \abs{\dot f_{\bV^{*}|\xi }\left( \mathbf{y}^{*}\right) } ^{8}  \diff \mathbf{y}^* < \infty, \label{fV_fact2}\\
\sup_{\xi \in [\underline{\xi},1]} \int_{\mathcal{Y}} \abs{\frac{\partial^2 \log f_{\bV^{*}|\xi}(\by^{*})    }{\partial \xi^2   } }^4 \diff \by^{*} < \infty, \label{fV_fact3}
%\sup_{\xi \in [\underline{\xi},1]} \int_{\mathcal{Y}} \abs{\frac{\partial^3 \log f_{\bV^{*}|\xi}(\by^{*})    }{\partial \xi^3   } }^2 \diff \by^{*} < \infty, \label{fV_fact4}
\end{align} 
where $ \mathcal{Y} = \{1\}\times \{(v^{*}_2,\dots,v^{*}_{k-1}) \in (0,1)^{k-2}:v^{*}_2 \geq \dots \geq v^{*}_{k-1} \} \times \{0\}$ denotes the support of $\bY_t^{*}$. 
Let $C$ denote a generic constant, whose value can change line by line. 

We start with showing the following consistency result.
\begin{equation}\label{consistency}
 \hat{\theta}-\theta _{0}\overset{p}{\rightarrow }0\text{ as }T\rightarrow \infty.
\end{equation}  
First, $S_{T}\left( \theta \right) $ is continuous in $\theta \in \Theta $ for all $X_{t}$ and $\bY_{t}^*$ since $f_{\bV^*|\xi }$ is continuous in $\xi $ and $\Lambda \left( \cdot\right) $ is continuous. 
By the standard argument \citep[e.g.,][Theorem 4.1.1]{Amemiya1985}, it suffices to show that
\begin{enumerate} 
\item\label{item:estab1} $\sup_{\theta \in \Theta}\left\vert S_{T}\left( \theta \right) -S\left( \theta \right) \right\vert \overset{p}{\rightarrow }0$ where $S\left( \theta \right) =\E\left[ \log f_{\bV^{*}|\Lambda \left( X^{\intercal }\theta \right) }\left( \bV^{*}\right) \right] $, and 
\item\label{item:estab2} $\theta _{0}$ uniquely maximizes $S\left( \theta \right) $. 
\end{enumerate} 

To show \ref{item:estab1}, we show the pointwise convergence and the stochastic equicontinuity. 
For the former, Condition 1 implies that for each $t$ conditional on $X_{t}=x$, the largest $k$ order statistics $\set{Y_{(1)t},\dots,Y_{(k)t}}$ are a random draw from the distribution $F_{Y|X=x}\left( y_{(k)t}\right) ^{n-k}\Pi _{i=1}^{k}\left( k-i+1\right) f_{Y|X=x}\left( y_{(i)t}\right) $ \citep[e.g.,][p.~219]{Arnold2008}. 
Therefore, $\{Y_{(1)t},\dots,Y_{(k)t},X_{t}\}$ is an $\alpha $-mixing triangular array process with the same mixing coefficients as $X_{t}$. 
By the pointwise weak law of large numbers for $\alpha $-mixing triangular array \citep[e.g.,][Theorem 2]{Andrews1988}, $S_{T}\left( \theta \right) -\E \left[ S_{T}\left(\theta \right) \right] \overset{p}{\rightarrow }0$ for each $\theta \in \Theta $. 
Then, the pointwise convergence follows from
\begin{align}
& \left\vert \E[ S_{T}\left( \theta \right) ] - S\left(\theta \right) \right\vert \notag \\
&= \left\vert -\E_{X_{t}}\left[ \E_{\bY_{t}^*}\left[ \left. \log f_{\bV^*|\Lambda \left( X_{t}^{\intercal}\theta \right) }\left( \bY_{t}^*\right) \right\vert X_{t}\right] \right] -S\left( \theta \right) \right\vert \notag \\
&=\left\vert \E_{X_{t}}\left[ \int \left. \log f_{\bV^{*}|\Lambda \left( X_{t}^{\intercal }\theta \right) }\left( \mathbf{y}^{*}\right) \left( f_{\bY_{t}^*|X_{t}}\left( \mathbf{y}^{*}\right) -f_{\bV^*|\Lambda \left( X_{t}^{\intercal }\theta
\right) }\right) \diff \by^{*} \right\vert X_{t}\right] \right\vert  \notag \\
&\leq \E_{X_{t}}\left[ \left( \int \left( \log f_{\bV^{*}|\Lambda \left( X_{t}^{\intercal }\theta \right) }\left( \mathbf{y}^{*}\right) \right) ^{2}d\mathbf{y}^*\right) ^{1/2}\left( \int \left( f_{%
\bY_{t}^*|X_{t}}\left( \mathbf{y}^*\right) -f_{\bV^*|\Lambda \left( X_{t}^{\intercal }\theta \right) }\right) ^{2} \diff \by^{*} \right) ^{1/2}\right]  \label{ESs1}\\
&\leq C\E_{X_{t}}\left[ n^{-\beta \left( X_{t}\right) /\alpha \left( X_{t}\right) }+n^{-1}\right]  \label{ESs2} \\
&\leq C ( \sup_{x}n^{-\beta \left( x\right) /\alpha \left( x\right) } + n^{-1}) = o(1), \label{ESs3}
\end{align}
where \eqref{ESs1} follows from Cauchy-Schwarz inequality, 
\eqref{ESs2} follows from \eqref{fV_fact1}, 
and \eqref{ESs3} is from Condition \ref{cond4}. 

For the stochastic equicontinuity, we resort to \citet[][Theorem 1]{PotscherPrucha1989ULLN}. 
In particular, their Assumption 3 (the dominance condition) is implied by \eqref{fV_fact1} and Lemma 1. 
Their remaining assumptions are satisfied by our Conditions \ref{cond1}-\ref{cond4}. 

For \ref{item:estab2}, consider $\tilde{S}\left( \theta \right) =\E \left[ \log f_{\bV^*|\Lambda \left( X_{t}^{\intercal }\theta \right) }\left( 
\bV^*\right) +\log f_{X}\left( X_{t}\right) \right] ,$ where $f_{X}\left( \cdot \right) $ denotes the density of $X_{t}$, which is
uniquely defined by the strict stationarity. Since $f_{X}$ does not involve $\theta $, it is equivalent to show that $\theta _{0}$ uniquely maximizes $%
\tilde{S}\left( \theta \right) $. Note that the joint density of $\left( \bV^*,X\right) $ is $f_{\bV^*|\Lambda \left( X^{\intercal }\theta \right) }\left( \cdot \right) f_{X}\left( \cdot \right) $. 
Thus, it suffices to show that the Fisher information matrix is invertible, which follows from Condition \ref{cond4}. 
The consistency of $\hat{\theta}$ then follows. 

Now, we are going to show the asymptotic normality
\begin{equation}
T^{1/2}\left( \hat{\theta}-\theta _{0}\right) \Rightarrow \mathcal{N}\left(0, \cI(\theta_0)^{-1} \mathcal{W} (\theta_0) \cI(\theta_0)^{-1} \right) \text{ as }T\rightarrow \infty ,
\label{eq:asym_normality}
\end{equation}%
where $\mathcal{W} (\theta_0) $ denotes the long-run variance
\begin{equation*}
\mathcal{W} (\theta_0) =\lim_{T\rightarrow \infty }\Var\left[ \frac{%
1}{T^{1/2}}\sum_{t=1}^{T}\frac{\partial \log f_{\bV^*|\Lambda
\left( X^{\intercal }\theta \right) }\left( \bY_{t}^*\right) }{%
\partial \theta }\right] .
\end{equation*}
Using the mean-value expansion and the fact that $f_{\bV^{*}|\Lambda
(\cdot)}$ is continuously differentiable, we have
\begin{align*}
\hat{\theta}-\theta_0&=-\left( T^{-1}\sum_{t=1}^{T}\frac{\partial^2\log f_{\bV^{*}|\Lambda (X_t^\intercal \dot{\theta})}(\bY_t^{*})}{\partial
\theta \partial \theta^\intercal }\right) ^{-1}\times \left(
T^{-1}\sum_{t=1}^{T}\frac{\partial \log f_{\bV^{*}|\Lambda
(X_t^\intercal \theta_0)}(\bY_t^{*})}{\partial \theta }
\right) \\
&\eqqcolon -Q_{T1}^{-1}\times Q_{T2}
\end{align*}%
for some value $\dot{\theta}$ that lies on the line segment connecting $\theta_0$ and $\hat{\theta}$.
It remains to establish
\begin{enumerate*}\addtocounter{enumi}{2}
\item\label{item:estab3} $Q_{T1}\pto \cI(\theta_0)$ and
\item\label{item:estab4} $T^{1/2}Q_{T2}\Rightarrow 
\cN(0,\mathcal{W} (\theta_0))$.
\end{enumerate*}

For part \ref{item:estab3}, given the consistency \eqref{consistency}, it suffices to show the uniform convergence that for some shrinking open ball $\mathcal{B}_{T} (\theta_0) \subset \mathbb{R}^{p}$ centered around $\theta_0$, 
\begin{equation*}
\sup_{\theta \in \mathcal{B}_{T} (\theta_0) }\abs{ T^{-1}\sum_{t=1}^{T}\frac{\partial^2\log f_{\bV^{*}|\Lambda (X_t^\intercal \theta )}(\bY_t^{*})}{\partial \theta^2}-\E\left[ \frac{\partial^2\log f_{\bV^{*}|\Lambda (X_t^\intercal \theta )}(\bY_t^{*})}{\partial \theta^2}\right]} =o_{p}(1). 
\end{equation*} 
This follows from \citet[][Theorem 1]{PotscherPrucha1989ULLN} with the dominance condition implied by \eqref{fV_fact3}.  

Second, we have
\begin{align*}
& \abs{-\E\left[ \frac{\partial^2\log f_{\bV^{*}|\Lambda (X_t^\intercal \theta)}(\bY_t^{*})}{\partial \theta^2}\right] -\cI(\theta)} \\
& =\abs{-\E_{X_t}\left[ \E_{\bY_t^{*}}\left[ \frac{\partial^2\log f_{\bV^{*}|\Lambda
(X_t^\intercal \theta)}(\bY_t^{*})}{\partial \theta^2} \middle| X_t\right] \right] -\cI(\theta)} \\
& \le \abs{ \E_{X_t}\left[ \int \frac{\partial^2\log f_{%
\bV^{*}|\Lambda (X_t^\intercal \theta)}(\by^{*})}{\partial \theta^2}\left( f_{\bY_t^{*}|X_t}(\by^{*}) -f_{\bV^{*}|\Lambda (X_t^{\intercal} \theta) }(\by^{*})\right) \diff \by^{*}\right]} \\
& \hphantom{=}+\underbrace{\abs{-\E_{X_t}\left[ \int \frac{\partial^2\log f_{\bV^{*}|\Lambda (X_t^\intercal \theta)}(\by^{*})}{\partial \theta^2}f_{\bV^{*}| \Lambda( X_t^{\intercal}\theta) }(\by^{*}) \diff \by^{*}\right] - \cI(\theta)}}_{=0} \\
& \le C\E_{X_t}\left[ \int \abs{f_{\bY_t^{*}|X_t}(\by^{*}) -f_{\bV^{*}|\Lambda( X_t^{\intercal} \theta) }(\by^{*})} \diff \by^{*}\right]= O(r_n) =o(1),
\end{align*}
where the last equality follows from \eqref{fV_fact3} and the proof of Lemma 1. 
Now, \ref{item:estab3} follows by combining the above two steps and the fact that $\mathcal{I}(\theta)$ is continuous in $\theta$. 

For part \ref{item:estab4}, we have
\begin{align}
& \left\vert \E\left[ Q_{T2}\right] \right\vert \notag \\
&=\left\vert \E_{X_{t}}\left[ \E_{\bY_{t}^{* }}\left[ \left. \frac{\dot{f}_{\bV^{* }|\Lambda \left(X_{t}^{\intercal }\theta _{0}\right) }\left( \bY_{t}^{* }\right) }{f_{\bV^{* }|\Lambda \left( X_{t}^{\intercal }\theta _{0}\right)}\left( \bY_{t}^{* }\right) }\Lambda ^{\prime }\left(X_{t}^{\intercal }\theta _{0}\right) X_{t}\right\vert X_{t}\right] \right]\right\vert \notag \\
&\leq \left\vert \E_{X_{t}}\left[ \Lambda ^{\prime }\left(X_{t}^{\intercal }\theta _{0}\right) X_{t}\left( \int \frac{\dot{f}_{\bV^{* }|\Lambda \left( X_{t}^{\intercal }\theta _{0}\right) }\left( \by^{* }\right) }{f_{\bV^{* }|\Lambda \left(
X_{t}^{\intercal }\theta _{0}\right) }\left( \by^{* }\right) }\left( f_{\bY_{t}^{* }|X_{t}}\left( \by^{* }\right) -f_{\bV^{* }|\Lambda \left( X_{t}^{\intercal }\theta _{0}\right)}\left( \by^{* }\right) \right) \diff \by^{* }\right) \right]\right\vert \notag \\
& +\underset{=0}{\underbrace{\left\vert \E_{X_{t}}\left[ \Lambda^{\prime }\left( X_{t}^{\intercal }\theta _{0}\right) X_{t}\int \dot{f}_{\bV^{* }|\Lambda \left( X_{t}^{\intercal }\theta _{0}\right) }\left( \by^{* }\right) \diff \by^{* }\right] \right\vert }} \label{QTs1} \\
&\leq \sup_{\xi \in \left[ 0,1\right] }\left( \int \left( \frac{\dot{f}_{\bV^{* }|\xi}\left( \by^{* }\right) }{f_{\bV^{* }|\xi }\left( \by^{* }\right) }\right) ^{2} \diff \by^{* }\right) ^{1/2}\left(\E_{X_{t}}\left[ \Lambda ^{\prime }\left( X_{t}^{\intercal }\theta _{0}\right) ^{2}\left\vert \left\vert X_{t}X_{t}^{\intercal }\right\vert \right\vert \right] \right)^{1/2} \label{QTs2} \\
&\times \left( \E_{X_{t}}\left[ \int \left( f_{\bY_{t}^{*}|X_{t}}\left( \by^{* }\right) -f_{\bV^{* }|\Lambda \left( X_{t}^{\intercal }\theta _{0}\right) }\left( \by^{* }\right) \right) ^{2} \diff \by^{* }\right] \right) ^{1/2} \label{QTs3} \\
&\leq C\left( \sup_{x}n^{-\beta \left( x\right) /\alpha \left( x\right) }\right) ^{1/2}, \label{QTs4}
\end{align}
where \eqref{QTs1} follows from the fact that $\int \dot{f}_{\bV^{*}|\Lambda (X_t^\intercal \theta_0)}(\by^{*}) \diff \by^{*}=0$, as implied by Leibniz rule; 
\eqref{QTs2} and \eqref{QTs3} follow from repeated applications of Cauchy-Schwarz inequality; 
and \eqref{QTs4} is due to \eqref{fV_fact3} with the fact that $f_{\bV^*|\xi}(\by^*)$ is uniformly bounded below from zero over $\xi \in [0,1]$ and $\by^* \in \mathcal{Y}$, Condition \ref{cond3}, and Lemma 1. 

The above derivation and Condition \ref{cond4} imply
\begin{equation*}
T^{1/2}\E\left[ Q_{T2}\right] \le
T^{1/2}C \left( \sup_{x}n^{-\beta(x)/\alpha(x)} \right)^{1/2}=o(1).
\end{equation*}
Furthermore, $\Var\left[ T^{1/2}Q_{T2}\right] \to \mathcal{W} (\theta_0)$, which is well-defined given Condition \ref{cond1}. 
Thus, the proof follows from the CLT with stationary and $\alpha$-mixing triangular array \citep[e.g.,][Theorem 1]{Ekstrom2014} given Condition \ref{cond1} and \eqref{fV_fact2} and Slutsky's theorem.
\end{proof}

\subsection{Proof of Section \ref{sec:model} results}\label{subsec:proofmodel}

We solve the optimal decision problem of an entrepreneur when the elasticity of intertemporal substitution takes an arbitrary value $\varepsilon>0$. In this case the aggregator is
\begin{equation}
h(c,v)=\beta [(1-\gamma)v]^\frac{1/\varepsilon-\gamma}{1-\gamma}\frac{c^{1-1/\varepsilon}-[(1-\gamma)v]^\frac{1-1/\varepsilon}{1-\gamma}}{1-1/\varepsilon}.\label{eq:hcv}
\end{equation}
The expression \eqref{eq:hcv1} is a special case by taking the limit of \eqref{eq:hcv} as $\varepsilon \to 1$.

The following proposition generalizes Proposition \ref{prop:agent1}.

\begin{proposition}\label{prop:agent}
Consider an entrepreneur facing expected return $\mu_t=f'(y_t)$, tax rate $\tau_t$, and volatility $\sigma$. Suppose that a solution to the individual optimization problem exists and let $J(t,w)$ be the value function (the maximized utility in \eqref{eq:utility}). Then the solution is given by
\begin{subequations}
\begin{align}
J(t,w)&=\frac{1}{1-\gamma}a_t^{1-\gamma}w^{1-\gamma},\label{eq:optJ}\\
\theta_t&=\frac{(\mu_t-r_t)_+}{(1-\tau_t)\gamma\sigma^2},\label{eq:optTheta}\\
m_t&=\beta^\varepsilon a_t^{1-\varepsilon},\label{eq:optC}
\end{align}
\end{subequations}
where $x_+=\max\set{x,0}$ and the coefficient of the value function $a_t>0$ satisfies the ordinary differential equation (ODE)
\begin{equation}
\frac{\dot{a}_t}{a_t}=-\left((1-\tau_t)r_t+\frac{(\mu_t-r_t)_+^2}{2\gamma\sigma^2}\right)+\beta
\begin{cases}
\frac{1}{\varepsilon-1}(\varepsilon-\beta^{\varepsilon-1} a_t^{1-\varepsilon}), & (\varepsilon\neq 1)\\
(1-\log\beta+\log a_t).& (\varepsilon=1)
\end{cases}\label{eq:aODEinf}
\end{equation}
In steady state ($r_t=r$, $\mu_t=\mu$, and $\tau_t=\tau$ are constant), the optimal portfolio, consumption, and the coefficient of the value function are
\begin{subequations}\label{eq:steady}
\begin{align}
\theta&=\frac{(\mu-r)_+}{(1-\tau)\gamma\sigma^2},\label{eq:steadytheta}\\
m&=\varepsilon\beta+(1-\varepsilon)\left((1-\tau)r+\frac{(\mu-r)_+^2}{2\gamma\sigma^2}\right),\label{eq:steadyc}\\
a&=\begin{cases}
\beta^\frac{1}{1-1/\varepsilon}\left[\varepsilon\beta+(1-\varepsilon)\left((1-\tau)r+\frac{(\mu-r)_+^2}{2\gamma\sigma^2}\right)\right]^\frac{1}{1-\varepsilon}, &(\varepsilon\neq 1)\\
\beta\exp\left(\frac{1}{\beta}\left((1-\tau)r+\frac{(\mu-r)_+^2}{2\gamma\sigma^2}\right)-1\right). & (\varepsilon=1)
\end{cases}\label{eq:steadya}
\end{align}
\end{subequations}
\end{proposition}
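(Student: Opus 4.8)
The plan is to prove this by a guess-and-verify argument through the Hamilton--Jacobi--Bellman (HJB) equation for Duffie--Epstein recursive utility. Since the entrepreneur takes the deterministic paths $\set{(r_t,\omega_t)}$ as given, the capital--labor ratio $y_t$ and the expected return $\mu_t=f'(y_t)$ are pinned down by \eqref{eq:lfoc}, so the only remaining controls are the portfolio share $\theta_t\ge 0$ and the marginal propensity to consume $m_t>0$ driving the reduced wealth dynamics \eqref{eq:bc2}, and the HJB for the value function $J(t,w)$ reads
\begin{equation*}
0=\sup_{\theta\ge 0,\,m>0}\left[h(mw,J)+J_t+J_w\bigl((1-\tau_t)(\mu_t\theta+r_t(1-\theta))-m\bigr)w+\tfrac12 J_{ww}(1-\tau_t)^2\sigma^2\theta^2w^2\right].
\end{equation*}
First I would observe that the aggregator \eqref{eq:hcv} is homogeneous of degree $1-\gamma$ under the joint scaling $(c,v)\mapsto(\lambda c,\lambda^{1-\gamma}v)$ and the budget set \eqref{eq:bc2} is scale invariant in $w$, so $J(t,\lambda w)=\lambda^{1-\gamma}J(t,w)$; hence $J$ must take the form \eqref{eq:optJ}, $J(t,w)=\frac{1}{1-\gamma}(a_tw)^{1-\gamma}$, with $a_t>0$ deterministic (and constant in steady state).

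Next I would substitute this conjecture, using $J_t=(1-\gamma)(\dot{a}_t/a_t)J$, $J_w=(1-\gamma)J/w$, $J_{ww}=-\gamma(1-\gamma)J/w^2$, together with the identity $h(mw,J)=(1-\gamma)J\cdot\beta\frac{(m/a_t)^{1-1/\varepsilon}-1}{1-1/\varepsilon}$ obtained by inserting $(1-\gamma)J=(a_tw)^{1-\gamma}$ into \eqref{eq:hcv} and cancelling the powers of $w$. Dividing through by $(1-\gamma)J=(a_tw)^{1-\gamma}>0$ (valid for either sign of $1-\gamma$), the HJB collapses to the scalar, time-only equation
\begin{equation*}
0=\frac{\dot{a}_t}{a_t}+\sup_{\theta\ge 0}\left[(1-\tau_t)(\mu_t\theta+r_t(1-\theta))-\tfrac12\gamma(1-\tau_t)^2\sigma^2\theta^2\right]+\sup_{m>0}\left[\beta\frac{(m/a_t)^{1-1/\varepsilon}-1}{1-1/\varepsilon}-m\right].
\end{equation*}
Each inner problem is a one-dimensional strictly concave maximization: the $\theta$-objective has constant second derivative $-\gamma(1-\tau_t)^2\sigma^2<0$, and for every $\varepsilon>0$ (including $\varepsilon=1$, where the bracket reads $\beta\log(m/a_t)-m$) the $m$-objective is strictly concave and increasing on $(0,\infty)$, so the first-order conditions are necessary and sufficient.

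Then I would read off the optimizers. The $\theta$ first-order condition yields the unconstrained solution $(\mu_t-r_t)/((1-\tau_t)\gamma\sigma^2)$; imposing $\theta\ge 0$ and using concavity gives the kinked rule $\theta_t=(\mu_t-r_t)_+/((1-\tau_t)\gamma\sigma^2)$, which is \eqref{eq:optTheta}. The $m$ first-order condition $\beta a_t^{-1}(m/a_t)^{-1/\varepsilon}=1$ gives $m_t=\beta^\varepsilon a_t^{1-\varepsilon}$, which is \eqref{eq:optC}. Substituting both back, the portfolio bracket evaluates to $(1-\tau_t)r_t+(\mu_t-r_t)_+^2/(2\gamma\sigma^2)$ (using $(\mu_t-r_t)(\mu_t-r_t)_+=(\mu_t-r_t)_+^2$) and the consumption bracket to $(\beta^\varepsilon a_t^{1-\varepsilon}-\varepsilon\beta)/(\varepsilon-1)$, and rearranging gives exactly the ODE \eqref{eq:aODEinf}; its $\varepsilon=1$ branch is both obtained directly from the $\varepsilon=1$ aggregator \eqref{eq:hcv1} and recovered as the $\varepsilon\to 1$ limit of the $\varepsilon\neq 1$ branch (an elementary l'H\^{o}pital computation in $\varepsilon$ returns $\beta(1-\log\beta+\log a_t)$). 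Finally, setting $\dot{a}_t\equiv 0$ with $(r,\mu,\tau)$ constant reduces \eqref{eq:aODEinf} to the algebraic identity $\beta^{\varepsilon-1}a^{1-\varepsilon}=\beta^{-1}(\varepsilon\beta+(1-\varepsilon)R)$ with $R\coloneqq(1-\tau)r+(\mu-r)_+^2/(2\gamma\sigma^2)$; solving for $a$ yields \eqref{eq:steadya}, whence $m=\beta^\varepsilon a^{1-\varepsilon}=\varepsilon\beta+(1-\varepsilon)R$ is \eqref{eq:steadyc} and $\theta$ is \eqref{eq:steadytheta}, again with the $\varepsilon=1$ formulas read off as limits.

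The main obstacle is the verification step rather than the substitutions: one must confirm that the constructed triple $(J,\theta_t,m_t)$ is genuinely optimal and not merely a critical point of the HJB. This needs (i) that a solution $a_t$ of \eqref{eq:aODEinf} stay strictly positive, so $J$ is well defined and the candidate consumption $m_t w$ is admissible; (ii) an integrability/transversality argument showing the recursive-utility process generated by $(\theta_t,m_t)$ is a martingale while competing admissible controls generate supermartingales, which is the content of the verification theorem for stochastic differential utility \citep{duffie-epstein1992a}; and (iii) care with the singularity of $h$ as $v\to 0$ and the non-Lipschitz kink $\theta\mapsto\theta_+$. Because the proposition already hypothesizes that an optimal solution exists, the global admissibility subtleties in (ii) are largely granted, and the remaining work is the routine algebra and the two one-dimensional concave optimizations sketched above.
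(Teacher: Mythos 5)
Your proposal is correct and follows essentially the same route as the paper's proof: a guess-and-verify argument on the HJB equation with the homogeneous ansatz $J(t,w)=\frac{1}{1-\gamma}(a_tw)^{1-\gamma}$, first-order conditions in $\theta$ and $m$ (with the $\theta\ge 0$ constraint giving the $(\cdot)_+$), back-substitution to obtain the ODE for $a_t$, and l'H\^opital's rule for the $\varepsilon=1$ case. The only differences are cosmetic (you normalize by dividing by $(1-\gamma)J$ where the paper sets $w=1$ and multiplies by $(1-\varepsilon)a^\gamma$) plus your added remarks on verification, which the paper sidesteps by hypothesizing existence of a solution.
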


\begin{proof}%[Proof of Proposition \ref{prop:agent}]
We can solve the individual optimization problem using stochastic control \citep{FlemingSoner2006}. For notational simplicity, suppress the time subscript $t$. Using the budget constraint \eqref{eq:bc2}, the Hamilton-Jacobi-Bellman (HJB) equation becomes
\begin{equation}
0=\max_{m,\theta\ge 0}\left[h(mw,J)+J_t+J_w(\tilde{r}+\tilde{\alpha}\theta-m)w+\frac{1}{2}J_{ww}(\tilde{\sigma}\theta w)^2\right],\label{eq:HJB}
\end{equation}
where $h$ is given by \eqref{eq:hcv}, $\tilde{r}=(1-\tau)r$, $\tilde{\alpha}=(1-\tau)(\mu-r)$, and $\tilde{\sigma}=(1-\tau)\sigma$.

Assume $\varepsilon\neq 1$. Due to homogeneity, the value function takes the form $J(t,w)=\frac{1}{1-\gamma}a_t^{1-\gamma}w^{1-\gamma}$, where $a_t>0$. Fixing the state variables $t,w$, let
\begin{equation*}
H(m,\theta)=h(mw,J)+J_t+J_w(\tilde{r}+\tilde{\alpha}\theta-m)w+\frac{1}{2}J_{ww}(\tilde{\sigma}\theta w)^2
\end{equation*}
be the expression inside the bracket of \eqref{eq:HJB}. Since $J_{ww}=-\gamma a_t^{1-\gamma}w^{-\gamma-1}<0$, $F$ is a concave quadratic function of $\theta$. Therefore it takes the global maximum over $\theta\in \R$ when
\begin{equation*}
0=\frac{\partial H}{\partial \theta}=J_w\tilde{\alpha}w+J_{ww}\tilde{\sigma}^2w^2\theta\iff \theta=\frac{\tilde{\alpha}}{2\gamma\tilde{\sigma}^2}=\frac{\mu-r_t}{(1-\tau)\gamma\sigma^2}.
\end{equation*}
Since $\theta\ge 0$, \eqref{eq:optTheta} is the solution.

Since by \eqref{eq:hcv1} $h(c,v)$ is concave in $c$, $H$ is concave in $m$. Therefore it takes the maximum over $m\ge 0$ when
\begin{equation*}
0=\frac{\partial H}{\partial m}=\beta [(1-\gamma)J]^\frac{1/\varepsilon-\gamma}{1-\gamma}(mw)^{-1/\varepsilon}w-J_ww\iff m=\beta^\varepsilon a_t^{1-\varepsilon} w,
\end{equation*}
which is \eqref{eq:optC}.

To derive an ODE for $a$, we use the HJB equation \eqref{eq:HJB}. Since $J$ is proportional to $w^{1-\gamma}$, $H$ is homogeneous of degree $(1-\gamma)$ in $w$. Therefore without loss of generality we may set $w=1$. In this case, using the first-order condition with respect to $m$, $[(1-\gamma)J]^\frac{1}{1-\gamma}=a$, $J_w=a^{1-\gamma}$, and $J_{ww}=-\gamma a^{1-\gamma}$, we can simplify \eqref{eq:HJB} as
\begin{equation}
0=\frac{1}{1-1/\varepsilon}J_wm-\frac{\beta}{1-1/\varepsilon}a^{1-\gamma}+a^{-\gamma}\dot{a}+\left(\tilde{r}+\tilde{\alpha}\theta-\frac{1}{2}\gamma\tilde{\sigma}^2\theta^2\right)a^{1-\gamma}-J_wm.\label{eq:temp.1}
\end{equation}
Since $J_wm=a^{1-\gamma}\beta^{\varepsilon}a^{1-\varepsilon}=\beta^\varepsilon a^{2-\varepsilon-\gamma}$, multiplying both sides of \eqref{eq:temp.1} by $(1-\varepsilon)a^\gamma$, we obtain
\begin{equation*}
0=-\beta^\varepsilon a^{2-\varepsilon}+\left[\varepsilon\beta+(1-\varepsilon)\left(\tilde{r}+\tilde{\alpha}\theta-\frac{1}{2}\gamma\tilde{\sigma}^2\theta^2\right)\right]a+(1-\varepsilon)\dot{a},
\end{equation*}
which is equivalent to \eqref{eq:aODEinf} if $\varepsilon\neq 1$. If $\varepsilon=1$, letting $\varepsilon\to 1$ in the formula \eqref{eq:aODEinf} with $\varepsilon\neq 1$ and using l'H\^opital's rule, we obtain the expression for $\varepsilon=1$.
\end{proof}

\begin{proof}[Proof of Proposition \ref{prop:welfare}]
Let $y=K/L$ be the capital-labor ratio. Then \eqref{eq:WW0} immediately follows from $\cW_W^0=\omega$ and \eqref{eq:lfoc}. \eqref{eq:T} follows from aggregating the budget constraint \eqref{eq:bc2} and using \eqref{eq:steadyK}. Using \eqref{eq:WW0} and \eqref{eq:T}, we obtain
\begin{equation*}
\cW_W^1=f(y)-yf'(y)+(f'(y)-\beta)y=f(y)-\beta y,
\end{equation*}
which is \eqref{eq:WW1}.

Since by Assumption \ref{asmp:prod} we have $f''<0$, by \eqref{eq:steadyK} the capital-labor ratio $y=K/L$ is strictly decreasing in $\tau$. Then
\begin{equation*}
\frac{\partial \cW_W^0}{\partial y}=(f(y)-yf'(y))'=-yf''(y)>0,
\end{equation*}
so $\cW_W^0$ is strictly decreasing in $\tau$. Furthermore, using \eqref{eq:steadyK}, we obtain
\begin{equation*}
\frac{\partial \cW_W^1}{\partial y}=f'(y)-\beta=\frac{\beta}{1-\tau}-\beta=\frac{\beta \tau}{1-\tau}>0,
\end{equation*}
so $\cW_W^1$ is also strictly decreasing in $\tau$.

To show \eqref{eq:WE}, note that by \eqref{eq:steadyK} the expected return satisfies
\begin{equation}
\mu=f'(y)=f'(K/L)=\frac{\beta}{1-\tau}.\label{eq:steadymu}
\end{equation}
Combining \eqref{eq:optTheta1}, \eqref{eq:bclear}, and \eqref{eq:steadymu}, the equilibrium interest rate must satisfy
\begin{equation}
r=\frac{\beta}{1-\tau}-(1-\tau)\gamma\sigma^2. \label{eq:steadyr}
\end{equation}
Combining $\cW_E=aK$, \eqref{eq:steadya}, \eqref{eq:optTheta1}, and \eqref{eq:steadyr}, it follows that
\begin{equation*}
\cW_E=\beta \exp\left(\frac{1}{\beta}\left(\beta-(1-\tau)^2\gamma\sigma^2+\frac{1}{2}(1-\tau)^2\gamma\sigma^2\right)-1\right)K=\beta \e^{-\frac{1}{2\beta}(1-\tau)^2\gamma\sigma^2}K,
\end{equation*}
which is \eqref{eq:WE}.

Now suppose the production function is Cobb-Douglas as in \eqref{eq:CD}. Then \eqref{eq:steadyK} implies
\begin{equation*}
\frac{\beta}{1-\tau}=f'(K/L)=A\rho (K/L)^{\rho-1}-\delta\iff K(\tau)=L\left[\frac{1}{A\rho}\left(\frac{\beta}{1-\tau}+\delta\right)\right]^\frac{1}{\rho-1}.
\end{equation*}
Therefore the logarithm of entrepreneur welfare becomes
\begin{equation*}
\log \cW_E=-\frac{1}{2\beta}(1-\tau)^2\gamma\sigma^2+\frac{1}{\rho-1}\log\left(\frac{\beta}{1-\tau}+\delta\right)+\text{constant}.
\end{equation*}
Letting $x=\frac{1}{1-\tau}$ and ignoring the constant term, the above expression becomes
\begin{equation*}
g(x)\coloneqq -\frac{\gamma\sigma^2}{2\beta x^2}+\frac{1}{\rho-1}\log (\beta x+\delta).
\end{equation*}
Taking the derivative with respect to $x$, we obtain
\begin{equation*}
g'(x)=\frac{\gamma\sigma^2}{\beta x^3}+\frac{1}{\rho-1}\frac{\beta}{\beta x+\delta}=\frac{1}{x}\left(\frac{\gamma\sigma^2}{\beta x^2} + \frac{1}{\rho-1}\frac{\beta x}{\beta x+\delta} \right)\eqqcolon \frac{h(x)}{x}.
\end{equation*}
Since $\rho\in (0,1)$, clearly $h$ is strictly decreasing in $x$. Furthermore, $h(0)=\infty$ and $h(\infty)=\frac{1}{\rho-1}<0$. Therefore $g$ is initially increasing and then decreasing in $x$, and because $x=\frac{1}{1-\tau}$ is monotonic in $\tau$, the entrepreneur welfare $\cW_E$ is single-peaked in $\tau$. Using \eqref{eq:CD} and \eqref{eq:T}, the tax revenue becomes $\cT=A\rho K^\rho L^{1-\rho}-\delta K$, which is strictly concave (single-peaked) in $K$. Because $K(\tau)$ in \eqref{eq:Ktau} is strictly decreasing in $\tau$, it follows that $\cT$ is single-peaked in $\tau$.
\end{proof}

%\singlespacing
%\bibliographystyle{econometrica}
%\bibliographystyle{plainnat}
%\bibliography{bib}

%\newpage
%\setcounter{page}{1}
%\onehalfspacing
%\begin{center}
%{\Large Supplementary Material for ``Tax Progressivity and Wealth Inequality:
%Evidence from Forbes 400''}
%\\
%\vspace{0.5in}
%Ji Hyung Lee\qquad
%Yuya Sasaki\qquad
%Alexis Akira Toda\qquad
%Yulong Wang
%\\
%\vspace{0.5in}
%\end{center}

%\begin{abstract}%\setlength{\baselineskip}{5.5mm}
%This supplementary material contains an additional theoretical result, additional empirical estimation results, and simulation studies.
%\end{abstract}

%%%%%%%%%%%%%%%%%%%%%%%%%%%%%%%%%%%%%%%%%%%%%%%%%%%%%%%%%%
\section{Robustness to measurement error}\label{sec:robust}
%%%%%%%%%%%%%%%%%%%%%%%%%%%%%%%%%%%%%%%%%%%%%%%%%%%%%%%%%%%%%%%%%%%%%%

In this section, we show that our asymptotic theory is robust to measurement error, which may arise in the Forbes 400 data set.
In particular, we aim to show that the self-normalized statistic $\bY_t^*$ has the same limit as before even when some random error is added.
We suppress the subscript $t$ and $X_t$ without loss of generality since such robustness is established to hold conditional on $X_t$ for any $t$.

To fix the idea, let $\varepsilon_i$ be some random measurement error, which is assumed to be independent from the true wealth $Y_i$ and also independent across $i$, though not necessarily identically distributed across $i$.
The econometrician observes the composite $\tilde{Y}_i=Y_i+\varepsilon_i$ and select the largest $k$ order statistics
\begin{align*}
\tilde{\bY} = ( \tilde{Y}_{(1)},\tilde{Y}_{(2)},\dots ,\tilde{Y}_{(k)}).
\end{align*}
The following proposition shows that $\tilde{\bY}$ has the same limit as $\bY$ in (\ref{eq:limit_dist}).

\begin{proposition}\label{prop:robust}
Suppose
\begin{enumerate*}
\item\label{item:robust1} $Y_i$ is \iid with CDF satisfying the Pareto-type tail condition as in Condition \ref{cond2} with tail index $\xi>0$, and
\item\label{item:robust2} $\varepsilon_i$ is independent from $Y_i$ and satisfies $\sup_i \abs{\varepsilon_i}=o_p(n^\xi)$.
\end{enumerate*}
Then there exist sequences of constants $a_n$ and $b_n$ such that for any fixed $k$,
\begin{equation*}
\frac{\tilde{\bY}-b_n}{a_n}\Rightarrow \bV
\end{equation*}
as $n\to \infty$, where $\bV$ is defined as in (\ref{eq:limit_dist}).
\end{proposition}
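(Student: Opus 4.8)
The plan is to show that adding the measurement errors $\varepsilon_i$ perturbs the top $k$ order statistics by an amount that is asymptotically negligible relative to the scaling constant $a_n$, so that the limit \eqref{eq:limit_dist} for the uncontaminated sample is inherited unchanged. Concretely, I would use the \emph{same} centering and scaling sequences $(a_n,b_n)$ that work for $\bY$ in \eqref{eq:limit_dist}, decompose
\[
a_n^{-1}(\tilde{\bY}-b_n)=a_n^{-1}(\tilde{\bY}-\bY)+a_n^{-1}(\bY-b_n),
\]
and argue that the first term vanishes in probability while the second converges in distribution to $\bV$ by \eqref{eq:limit_dist}; Slutsky's theorem then closes the argument.

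First I would pin down the order of $a_n$. Since $\xi>0$ under assumption \ref{item:robust1}, $F_Y$ lies in the Fr\'echet max-domain of attraction, so one may take $b_n=0$ and $a_n=U(n)\coloneqq\inf\set{y:F_Y(y)\ge 1-1/n}$. Substituting the first-order expansion of Condition \ref{cond2}, $1-F_Y(y)=c\,y^{-\alpha}(1+o(1))$ with $\alpha=1/\xi$, into $1-F_Y(a_n)=1/n$ gives $a_n\sim(cn)^{\xi}$; in particular $a_n\asymp n^{\xi}$. (Any admissible $(a_n,b_n)$ differs only through a slowly varying factor and an asymptotically constant shift, so this order statement is not affected by the choice.)

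Next I would establish the deterministic stability bound
\[
\max_{1\le j\le k}\abs{\tilde{Y}_{(j)}-Y_{(j)}}\le D_n\coloneqq\max_{1\le i\le n}\abs{\varepsilon_i},
\]
i.e.\ the sorting map is $1$-Lipschitz in the sup-norm under coordinatewise perturbations, so that any re-ordering induced by the $\varepsilon_i$ does no harm. The proof is a one-line counting argument: from $\tilde{Y}_i\ge Y_i-D_n$, at least $j$ of the $\tilde{Y}_i$ are $\ge Y_{(j)}-D_n$, hence $\tilde{Y}_{(j)}\ge Y_{(j)}-D_n$; the reverse inequality follows symmetrically from $\tilde{Y}_i\le Y_i+D_n$. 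Since assumption \ref{item:robust2} gives $D_n=o_p(n^{\xi})$ and the previous step gives $a_n\asymp n^{\xi}$, we obtain $D_n/a_n\pto 0$, and the stability bound then yields $a_n^{-1}(\tilde{\bY}-\bY)\pto 0$ in $\R^k$, completing the proof. Note that this route uses only the magnitude bound on $\sup_i\abs{\varepsilon_i}$ and not the independence structure of the $\varepsilon_i$.

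I do not expect a genuine obstacle here. The two points that need a little care are: (i) that $a_n$ is of exact polynomial order $n^{\xi}$ (merely knowing that $a_n$ is regularly varying with index $\xi$ would in fact also suffice, once combined with assumption \ref{item:robust2}); and (ii) that the error is controlled uniformly over all $n$ coordinates through $\sup_i\abs{\varepsilon_i}$ rather than through a single $\varepsilon_i$. Beyond these, the only mild subtlety is the order-statistic stability lemma under reordering, which is elementary but worth stating explicitly.
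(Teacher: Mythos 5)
Your proposal is correct and follows essentially the same route as the paper's proof: take $b_n=0$ and $a_n=Q_Y(1-1/n)\asymp n^{\xi}$, show the top-$k$ order statistics are perturbed by at most $\sup_i\abs{\varepsilon_i}=o_p(a_n)$, and conclude by Slutsky together with \eqref{eq:limit_dist}. The one place you go beyond the paper is the order-statistic stability step: the paper writes out only the $k=1$ (maximum) case via upper and lower bounds on $\max_i\{Y_i+\varepsilon_i\}$ and asserts the general $k$ is ``similar,'' whereas your counting argument ($\tilde{Y}_i\ge Y_i-D_n$ implies at least $j$ of the $\tilde{Y}_i$ exceed $Y_{(j)}-D_n$, hence $\tilde{Y}_{(j)}\ge Y_{(j)}-D_n$, with the symmetric reverse bound) delivers the uniform bound $\max_{1\le j\le k}\abs{\tilde{Y}_{(j)}-Y_{(j)}}\le\sup_i\abs{\varepsilon_i}$ for all $j$ at once, which is the cleaner and more complete way to handle the step the paper leaves implicit.
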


\begin{proof}%[Proof of Proposition \ref{prop:robust}]
The proof is analogous to that of Lemma 1 in \citet{sasaki2020testing}. 
By Corollary 1.2.4 and Remark 1.2.7 in \citet{deHaan2006book}, the constants $a_n$ and $b_n$ can be chosen as $a_n=Q_{Y}\left(1-1/n\right)=O(n^{\xi}) $ and $b_n=0$, where $Q_Y(\cdot)$ denotes the quantile function of $Y$.

Now, let $I=(I_{1},\dots ,I_{k})\in \{1,\dots ,n\}^{k}$ be the $k$ random indices such that $Y_{(j)}=Y_{I_j} $, $j=1,\dots ,k$, and let $\tilde{I}$ be
the corresponding indices such that $\tilde{Y}_{( j) } = \tilde{Y}_{\tilde{I}_j} $. 
Then, the convergence of $\tilde{\bY}$ follows from (\ref{eq:limit_dist}) once we establish $\abs{\tilde{Y}_{\tilde{I}_j}-Y_{I_j}}=o_{p}(a_n)$ for $j=1,\dots,k$. 
We present the case of $k=1$, but the argument for a general $k$ is similar. 
Recall that $\varepsilon_i = \tilde{Y}_i - Y_i$, and Condition \ref{item:robust2} in the proposition implies that $\sup_i \abs{\varepsilon}=o_p(a_n)$.

Given this result, we have that, on one hand, 
\begin{equation*}
\tilde{Y}_{\tilde{I}}=\max_i\{ {Y}_i +\varepsilon _i\}\le
Y_{I} +\sup_i \abs{\varepsilon_i} =Y_{I} +o_{p}(a_n).
\end{equation*}
On the other hand,
\begin{align*}
\tilde{Y}_{\tilde{I}}&=\max_i\{Y_i +\varepsilon _i\}\ge \max_i\{Y_i
+\min_i\{\varepsilon _i\}\}\\
&\ge Y_{I}+\min_i\{\varepsilon _i\}\ge 
Y_{I}-\sup_i \abs{\varepsilon_i} = Y_{I}-o_{p}(a_n).
\end{align*}
Therefore $\abs{\tilde{Y}_{\tilde{I}_j}-Y_{I_j}}=o_{p}(a_n)$ holds.
\end{proof}

Proposition \ref{prop:robust} establishes the robustness to measurement error, which is asymptotically dominated.
Note that this result allows for non-identically distributed errors $\varepsilon_i$ as long as its maximum magnitude is smaller than that of the true wealth.

%%%%%%%%%%%%%%%%%%%%%%%%%%%%%%%%%%%%%%%%%%%%%%%%%%%%%%%%%%
\section{Additional empirical results}\label{sec:additional}

This section examines the robustness of our previous results with more control variables, following the literature on wealth inequality. 
See \citet{AlvaredoAtkinsonPikettySaez2013}, \citet{benhabib2018skewed}, and \citet{BenhabibBisinLuo2019} for example. 
In particular, we include housing returns, stock returns, and the top capital gain tax rate as additional controls. 
We collect data on housing values from United States Census Bureau,\footnote{\url{https://www.census.gov/construction/nrs/historical data/index.html}}
the data on stock returns from Amit Goyal's webpage,\footnote{See \url{http://www.hec.unil.ch/agoyal/} for the original data and their description.} 
and the data on top capital gain tax rate from tax foundation's website.\footnote{ \url{https://taxfoundation.org/federal-capital-gains-tax-collections-historical-data}}
For $\Delta t$-year horizon returns, we construct the continuously compounded $\Delta t$-year returns $r_t(\Delta t) = \log P_{t+\Delta t} - \log P_t$, where $P_t$ is the corresponding housing price or S\&P500 index adjusted for the real term. 
As in Section \ref{sec:results}, we also use the five-year average of housing returns, stock returns, and volatility of these controls. 

Table \ref{tab:estimation4} presents the estimation results, which can be summarized as follows. 
First, the top income tax rate is still significant with a very stable magnitude.
Second, all the additional control variables are not significant except for the stock return in models XIII and XV. 
Third, although not reported, we find that the top corporate tax rate\footnote{See Table 24 at \url{https://www.irs.gov/statistics/soi-tax-stats-historical-data-tables}.} is strongly correlated with the top income tax rate (the correlation coefficient is approximately 0.9).
Including both taxes leads to potential multi-collinearity.
Therefore, our results should be interpreted as composite effects of the overall tax rate, not necessarily a particular tax rate.

%%%%%%%%%%%%%%%%%%%%%%%%%%%%%%%%%%%%%%%%%%%%%%%%%%%%%%%%%%%%%%%%%%%%%%
\begin{table}[!htb]
	\centering
	\renewcommand{\arraystretch}{.75}
	%\resizebox{\linewidth}{!}{%
	\scalebox{0.82}{
		\begin{tabular}{lccccccccc}
		\toprule
		                     & (VI) & (XII) & (XIII) & (XIV) & (XV) \\
		\midrule
			Top Tax Rate       &-1.83$^{*}$&-1.85$^{*}$&-1.69&-1.80$^{*}$&-1.70$^{**}$ \\
			                          &(1.04)&(0.98)&(1.03)&(1.02)&(0.86) \\
			\\
			Volatility              & 0.73 &      &   1.62  &    0.61  & 1.57\\
			                          & (1.04) &      &  (0.99) &   (1.11)   &(1.01)\\
			Bankruptcy Rate    & 0.09 &      &   11.47   &   0.08  & 0.08\\
			                           & (5.62) &      & (24.08) & (6.38) &(10.34)\\
			Stock Return        &      &  0.15  & 0.28$^{*}$ &      & 0.38$^{**}$ \\
			                          &      &   (0.12) & (0.15) &      &(0.18) \\
		        Housing Return    &      &   0.09   & 0.06 &      & 0.00 \\
			                          &      &  (0.39)  & (0.49) &      &(0.60)\\               
	     Top Capital Gain Tax Rate&      &      &      &  -0.61  & -1.75 \\
			                           &      &      &      &  (1.36)  &(1.63)\\                 
			\\
			Constant           & 0.86 & 1.03 & 0.39 & 1.02 & 0.88 \\
			                   &(0.46)&(0.37)&(0.57)&(0.58)&(0.43) \\
			\\
			Years Averaged   &    5 &    5 &    5&    5 &    5\\
		\bottomrule
		\end{tabular}
	}
	\caption{Estimation of $\theta_0$ based on the MLE with additional variables. ***p$<$0.01, **p$<$0.05, *p$<$0.10 (except for the constant). Standard errors are reported in parentheses. }${}$
	\label{tab:estimation4}
\end{table}
%%%%%%%%%%%%%%%%%%%%%%%%%%%%%%%%%%%%%%%%%%%%%%%%%%%%%%%%%%%%%%%%%%%%%%

%%%%%%%%%%%%%%%%%%%%%%%%%%%%%%%%%%%%%%%%%%%%%%%%%%%%%%%%%%%%%%%%%%%%%%
\section{Simulation studies}\label{sec:simulation}
%%%%%%%%%%%%%%%%%%%%%%%%%%%%%%%%%%%%%%%%%%%%%%%%%%%%%%%%%%%%%%%%%%%%%%

In this section, we evaluate the finite sample properties of the newly proposed MLE through simulations. 
The data generating process (DGP) is designed as follows: $X_t = \rho X_{t-1} + \sqrt{1-\rho^2} u_t$ where $u_t$ is \iid standard normal. 
We set $\rho = 0.5$.
Conditional on $X_t=x$, $Y_{it}$ for $i=1,\dots,n$ are randomly generated from the following three distributions with tail index $\xi = \Lambda (0.5+x\beta_0)$ and $\beta_0=1$:
\begin{enumerate*}
\item Pareto distribution with minimum size 1,
\item absolute value of the Student-$t$ distribution, and
\item double Pareto-lognormal distribution (dPlN).
\end{enumerate*}
The double Pareto-lognormal distribution is the product of independent double Pareto and lognormal variables such that
%dPlN has been documented to fit well to size distributions of economic variables including income \cite{reed2003}, city size \cite{giesen-zimmermann-suedekum2010}, and consumption \cite{Toda2017MD}. \cite{reed-jorgensen2004} show that a dPlN variable $Y$ can be generated as
\begin{equation*}
Y=\exp(\mu+\sigma Z_1+Z_2\xi-Z_3),
\end{equation*}
where $Z_1,Z_2,Z_3$ are independent and $Z_1\sim \cN(0,1)$ and $Z_2,Z_3\sim \mathrm{Exp}(1)$. 
For parameter values, we set $\mu=0$ and $\sigma=0.5$, which are typical for income data.% \cite{Toda2012JEBO}.

We set $T=36$ as in the Forbes data set and experiment with $n=(10^4,10^5,10^6)$.
The number of top order statistics is $k=(100,200,400)$, all of which are small relative to $n$.
The numbers are based on $M=1000$ simulations.

Table \ref{tab:sim} shows the following summary statistics of the estimates of $\beta_0$:
\begin{enumerate*}
\item Bias: $\frac{1}{M}\sum_{m=1}^M(\widehat{\beta}_m-\beta_0)$, where $m$ indexes simulations and $M=1000$. 
\item RMSE: root mean squared error defined by $\sqrt{\frac{1}{M}\sum_{m=1}^M(\widehat{\beta}_m-\beta_0)^2}$.
\item Coverage: the fraction of simulations for which the true value $\beta_0=1$ falls into the 95\% confidence interval.
\item Length: the average length of confidence intervals across simulations.
\end{enumerate*}

\begin{table}[!htb]
\centering
\renewcommand{\arraystretch}{.75}
\begin{tabular}{crrrrrrrrr}
\toprule
DGP & \multicolumn{3}{c}{Pareto} & \multicolumn{3}{c}{$\abs{t}$} & \multicolumn{3}{c}{dPlN}\\
\midrule
$k$ & 100 & 200 & 400 & 100 & 200 & 400 & 100 & 200 & 400\\
\midrule
$n$&\multicolumn{9}{l}{Bias}\\
$10^4$&0.01&0.01&0.01&0.07&0.06&0.07&0.06&0.07&0.08\\
$10^5$&0.01&0.01&0.00&0.03&0.03&0.02&0.03&0.02&0.03\\
$10^6$&0.02&0.01&0.00&0.03&0.02&0.01&0.01&0.01&0.01\\
\midrule
$n$&\multicolumn{9}{l}{RMSE}\\
$10^4$&0.19&0.13&0.09&0.21&0.15&0.12&0.20&0.15&0.13\\
$10^5$&0.19&0.12&0.09&0.19&0.13&0.10&0.19&0.14&0.10\\
$10^6$&0.18&0.13&0.09&0.19&0.13&0.09&0.19&0.13&0.09\\
\midrule
$n$&\multicolumn{9}{l}{Coverage}\\
$10^4$&0.95&0.95&0.95&0.96&0.94&0.88&0.96&0.92&0.86\\
$10^5$&0.95&0.96&0.96&0.95&0.96&0.94&0.95&0.95&0.95\\
$10^6$&0.96&0.96&0.95&0.95&0.96&0.96&0.96&0.95&0.96\\
\midrule
$n$&\multicolumn{9}{l}{Length}\\
$10^4$&0.72&0.49&0.35&0.75&0.51&0.36&0.75&0.52&0.37\\
$10^5$&0.72&0.50&0.35&0.73&0.51&0.35&0.73&0.51&0.36\\
$10^6$&0.72&0.50&0.35&0.72&0.50&0.35&0.72&0.50&0.35\\

\bottomrule
\end{tabular}
\caption{Finite sample properties of the proposed MLE. Based on 1000 simulation draws. See the main text for more details.}\label{tab:sim}
\end{table}

We can make a few observations from Table \ref{tab:sim}. 
First, when the underlying distribution is exactly Pareto, the bias of the MLE is very small as shown in the first three columns, and the coverage probability is also close to the nominal level 95\%.
Second, however, when the underlying distribution is not Pareto, any parametric estimator that relies on the Pareto tail assumption would suffer from the misspecification bias.
In our case, this bias is reflected in the Student-t and dPlN case with $n=10^4$.
Such bias becomes negligible eventually when $n$ is sufficiently large, as seen in the rows with $n=10^5$ and $10^6$, which are still smaller in magnitudes than the total population in the U.S.
Therefore, we believe that our estimator as well as the extreme value approximation \eqref{fv} perform well.
Finally, the randomness of our estimator does not decrease with $n$ significantly, which is not surprising. 
A larger $n$ only improves the extreme value approximation, while a larger $T$ will substantially reduce the RMSE.
Again because of the potentially large $n$, we are essentially facing the parametric problem where $T$ draws are selected from the extreme value distribution \eqref{fv}, and therefore a $T$ as small as 36 still produces informative and significant estimates.

\singlespacing
\bibliographystyle{plainnat}
\bibliography{bib}
%%%%%%%%%%%%%%%%%%%%%%%%%%%%%%%%%%%%%%%%%%%%%%%%%%%%%%%%%%%%%%%%%%%%%%
%\section{Robustness\label{sec:robustness}}
%%%%%%%%%%%%%%%%%%%%%%%%%%%%%%%%%%%%%%%%%%%%%%%%%%%%%%%%%%%%%%%%%%%%%%

\end{document}